\documentclass[12pt]{article}
\usepackage{formatting}
\usepackage{shortcuts}

\begin{document}

\begin{titlepage}

\begin{flushright}BRX-TH-6730\end{flushright}
\begin{center}
	\phantom{ }
	\vspace{-.2cm}

	{\bf \Large{Twirled Perfect Tensor Networks:}}\\[.15cm]{\bf \Large{Computationally covariant holographic tensor networks}}

	\vskip 1.2cm
	\large{Gurbir Arora${}^{1}$, Matthew Headrick${}^{2,3}$, Albion Lawrence${}^{2}$,\\[.15cm] Martin Sasieta${}^{4}$, Brian Swingle${}^{2}$, and Connor Wolfe${}^{2}$}
	\vskip 1cm
	
	\small{${}^{1}$ \textit{College of Engineering, Cornell University}}
	\vskip -.05cm
	{\textit{Ithaca, New York 14850, USA}}
	\vskip 0.3cm
	\small{${}^{2}$ \textit{Martin Fisher School of Physics, Brandeis University}}
	\vskip -.05cm
	{\textit{Waltham, Massachusetts 02453, USA}}
	\vskip 0.3cm
	\small{${}^{3}$ \textit{Institut des Hautes Etudes Scientifiques}}
	\vskip -.05cm
	{\textit{91893 Bures-sur-Yvette France}}
	\vskip 0.3cm
	\small{${}^{4}$ \textit{Leinweber Institute for Theoretical Physics and Department of Physics,}
		\vskip -.05cm
		{\textit{ University of California, Berkeley, California 94720, USA}}}\\[0.5cm]
	
	\small{\texttt{E-mail}:
		\href{mailto:gurbir@brandeis.edu}{g.arora@cornell.edu},
		\href{mailto:mph@brandeis.edu}{mph@brandeis.edu},
		\href{mailto:albion@brandeis.edu}{albion@brandeis.edu}\\
		\href{mailto:msasieta@berkeley.edu}{msasieta@berkeley.edu},
		\href{mailto:cwolfe@brandeis.edu}{cwolfe@brandeis.edu},
		\href{mailto:bswingle@brandeis.edu}{bswingle@brandeis.edu}}\\[.8cm]

\end{center}

\begin{abstract}
	We define a novel class of tensor networks motivated by the Python’s Lunch Conjecture (PLC) in local tensor network models of the black hole interior. We start from the observation that, for extended black brane states with short-range correlations, the PLC predicts a complexity that is smaller than the upper bound for generic short-range correlated states.  We argue that the PLC makes implicit assumptions about the fine structure of the relevant tensor networks modeling gravity that render them non-generic.  We demonstrate this explicitly in random tensor network models of the python’s lunch, where the exponential complexity is not generally controlled by the PLC exponent. We trace the difference with the PLC to a lack of ``computational covariance'' in random tensor networks: while the PLC is motivated by an ability to arbitrarily decompose space into low-complexity units provided certain basic rules are followed, we show that random tensor networks do not generically have this property. We propose another class of tensor networks built from what we call ``twirled perfect tensors'' that do satisfy the computational covariance property and have a complexity bounded by the PLC value. We still find a discrete limitation from local postselection that appears to be absent in gravity. Moreover, we show that this class of tensor networks combines desirable holographic features of perfect tensor networks and random tensor networks, for example, it obeys a lattice Ryu-Takayanagi formula for arbitrary boundary subregions. Though motivated by holography, these tensor networks provide a flexible framework with potential applications beyond quantum gravity.
\end{abstract}
\vspace{-.1cm}
{\small \hspace{.9cm} A video abstract is available at \url{https://youtu.be/AX-cSk-iI6o}.}

\end{titlepage}

\setcounter{tocdepth}{2}
\tableofcontents
\noindent\makebox[\linewidth]{\rule{\textwidth}{0.4 pt}}

\section{Introduction}

AdS/CFT provides our most controlled framework for non-perturbative quantum gravity: a concrete holographic correspondence in which gravitational physics is described in terms of a nongravitational quantum theory. In principle, this allows questions about the gravitational system to be precisely formulated and answered within the dual CFT. In practice, however, the precise map between bulk quantities and boundary quantities remains only partially understood.

A central element of the holographic dictionary concerns how the bulk classical geometry and quantum information are encoded in the CFT. The cornerstone of this connection is the Ryu–Takayanagi (RT) formula \cite{Ryu:2006bv}, together with its covariant and quantum generalizations \cite{Hubeny:2007xt,Faulkner:2013ana,Engelhardt:2014gca},
\be\label{eq:RT}
S(\rho_{\A}) =  \frac{\text{Area}(X_{\A})}{4G_{\rm N}} + S_{\rm out}(X_{\A})\,.
\ee
Here, $\A$ denotes a spatial subregion of the CFT, $\rho_{\A}$ is the corresponding reduced density matrix, and $S(\rho_{\A}) = -\mathrm{Tr}(\rho_{\A}\log \rho_{\A})$ is its von Neumann entropy. The right-hand side defines the bulk {\it generalized entropy} $S_{\rm gen}(X_{\A})$ associated with a codimension-two spatial surface $X_{\A}$. The first term is the classical area contribution, while $S_{\rm out}(X_{\A})$ is the entanglement entropy of the bulk quantum fields across $X_{\A}$. In the classical, time-symmetric settings relevant for this paper, $X_{\A}$ is defined as the minimal area surface homologous to $\A$, commonly referred to as the RT surface. More generally, $X_{\A}$ is the minimal quantum extremal surface (QES) homologous to $\A$.

A direct consequence of Eq. \eqref{eq:RT} is that it determines the bulk ``dual'' to the boundary subregion $\A$, namely, the bulk subregion whose quantum information is encoded in $\A$. This region is called the {\it entanglement wedge} $EW_\A$, defined as the region lying between $X_\A$ and $\A$. In this context, the implication of \eqref{eq:RT} is the existence of a recovery channel acting solely on $\A$, which is able to ``reconstruct'' the bulk degrees of freedom in the entanglement wedge $EW_\A$. This process of accessing the bulk quantum information is commonly called {\it entanglement wedge reconstruction} \cite{Czech:2012bh,Jafferis:2015del,Almheiri:2014lwa, Dong:2016eik, Cotler:2017erl,Harlow:2018fse,Chen:2019gbt,Akers:2020pmf}.

In the early developments of the RT proposal, it was observed \cite{Swingle:2009bg} that tensor networks (TNs) used to describe quantum critical many-body states \cite{Vidal:2008zz} exhibit a lattice version of the RT formula. In TNs, the entanglement entropy of a region is upper bounded by the minimal number of edges that must be cut to separate it from the rest of the system, with a weight given by the logarithm of the bond dimension of each edge. This upper bound is expected to be approximately saturated in generic TNs, as explicitly realized in random tensor network (RTN) models \cite{Hayden:2016cfa}.  The graph structure of the TN can then be interpreted as a discrete bulk geometry, which, in the case of critical ground states, is hyperbolic. This provides a concrete and simple illustration of how the spatial structure of AdS emerges from the entanglement and correlation structure of the CFT wavefunctions.

Other TN constructions, such as the HaPPY code \cite{Pastawski:2015qua}, provide explicit models of “holographic codes” designed to capture subregion complementary recovery and mimic entanglement wedge reconstruction in AdS. In fact, strong connections between the RT formula and quantum error correction have been found \cite{Harlow:2016vwg,Akers:2021fut}. In recent years, TNs have been developed in several directions to incorporate additional features of AdS/CFT. These efforts include combining critical behavior with quantum error correction \cite{Cao:2021wrb}; extending TN constructions to non-rigid or background-independent settings \cite{Donnelly:2016qqt,Qi:2022lbd,Dong:2023kyr,Cao:2020ksw,Colafranceschi:2022dig,Akers:2024ixq,Cao:2024nrx,Cao:2026uoq}; and deriving realistic TNs either from the Euclidean path integral preparation of CFT states \cite{Caputa:2017urj,Caputa:2017yrh,Chen:2022wvy,Chandra:2023dgq,Chen:2024unp,Hung:2024gma,Geng:2025efs,Bao:2025plr} or from entanglement distillation protocols \cite{Bao:2018pvs,Bao:2019fpq}. Additional work has aimed at incorporating aspects of bulk dynamics \cite{Kohler:2018kqk,Apel:2021tnn}, and more broadly at capturing further gravitational features \cite{Faist:2019ahr,Dolev:2021ofc,Akers:2024wab,Balasubramanian:2025rcr}. TNs have also proved useful for modeling properties of the black hole interior, including its dynamical growth \cite{Hartman:2013qma,Susskind:2014rva}, the non-isometric character of interior holographic maps \cite{Akers:2022qdl}, and phenomena such as effective field theories in time-dependent backgrounds \cite{Cotler:2022weg,Cao:2023gkw}.

Taken together, these developments illustrate why TNs have become a useful testing ground for ideas about holography and the emergence of geometry from entanglement. Although they are simplified models that do not include full bulk dynamics, they often make the underlying geometric, quantum information-theoretic, and quantum computational structures particularly transparent.

In this spirit, we focus in this paper on a particular proposal motivated by TN models. We begin by reviewing the proposal and its key elements, and then turn to our main results.

\subsection{The PLC}

In many situations, the entanglement wedge $EW_\A$ contains, in addition to the minimal QES $X_\A$, a second, non-minimal QES $X^c_\A$ in the same homology class, which lies closer to the boundary and is the outermost surface relative to $\A$. In such cases, as illustrated in Fig.\ \ref{fig:geompython}, the entanglement wedge develops a {\it python’s lunch}: the region between the minimal surface $X_\A$ and the outermost surface $X^{c}_\A$, called the {\it constriction}.

The python’s lunch conjecture (PLC) \cite{Brown:2019rox} posits that whenever a python's lunch is present, reconstructing the quantum information beyond the constriction is quantum computationally hard. To quantify this, the proposal focuses on an intermediate QES, the {\it bulge} $X^{b}_\A$, which is defined through a maximinimax prescription between $X_\A$ and $X^{c}_\A$ \cite{Brown:2019rox}. In time-symmetric classical settings, the bulge admits a rigorous characterization using Almgren–Pitts min–max theory \cite{Arora:2024edk}. In these cases, the area functional has a second order variation differential operator with a real spectrum, and $X^{b}_\A$ possesses a single negative spatial deformation mode, or equivalently, it has a Morse index $1$.

\begin{figure}[h]
    \centering
    \includegraphics[width=0.55\linewidth]{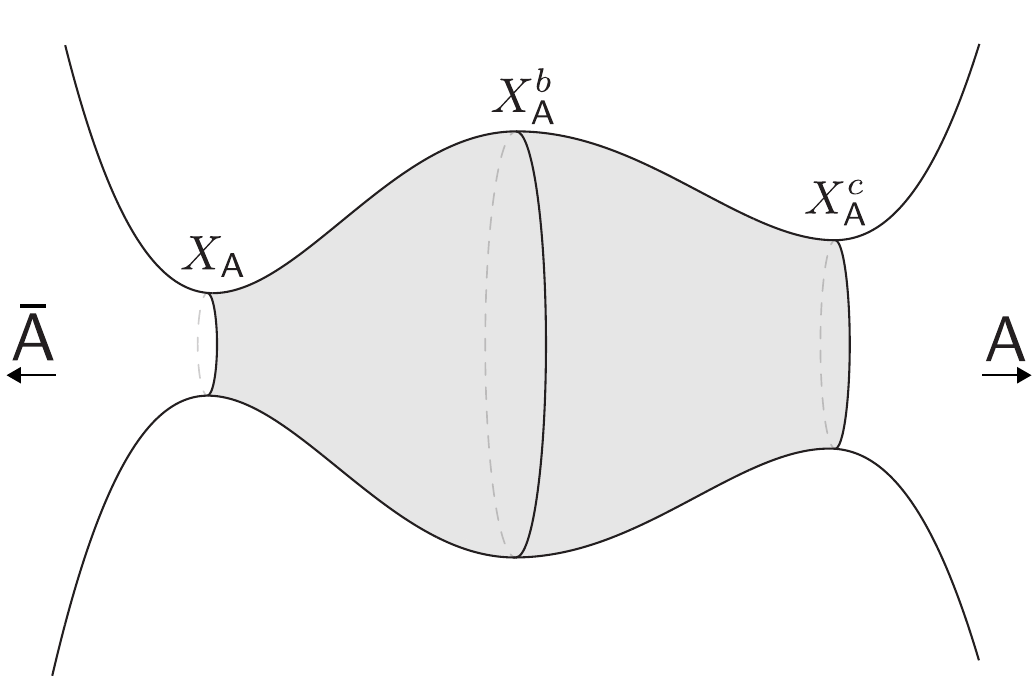}
    \caption{A semiclassical python's lunch in $EW_{\A}$ (gray).}
    \label{fig:geompython}
\end{figure}

Concretely, the proposal is that $X^c_{\A}$ divides $EW_{\A}$ into two regions: the ``simple wedge'' on its exterior and the python's lunch in its interior. On the one hand, decoding the region outside $X^{c}_\A$ is considered computationally ``simple'', and explicit reconstruction protocols were given in \cite{Engelhardt:2021mue} that work in a similar spirit to HKLL \cite{Hamilton:2006az}. The simple wedge includes the {\it causal wedge} $CW_{\A}$, defined as the region of causal accessibility from the boundary $\A$ in some given fixed geometry.

On the other hand, decoding the python's lunch is conjectured to be vastly more complex. This region is always behind the (causal) horizon, and in situations with black holes, it lies in the black hole interior. The computational obstruction is quantified by the exponentially large number of ``simple'' unitary operations on $\A$, or unitary complexity $\Co_{\text{PLC}}$, required to reconstruct this region, which, in the $G_{\rm N}\to 0$ limit, scales as:
\be\label{eq:PLC}
\Co_{\text{PLC}} \;\sim\;  \exp\left(\dfrac{S_{\text{gen}}(X_\A^b)-S_{\text{gen}}(X^c_\A)}{2}\right)\,.
\ee
We will neglect an extensive multiplicative volume term in the complexity, which is subleading in most situations. Moreover, in this paper, we will be mostly interested in classical situations in which the relevant piece of the generalized entropy is the area term.\footnote{Nevertheless, an important aspect of this proposal is that python’s lunches arise naturally from bulk entanglement if the corresponding code subspace is sufficiently large. This occurs even if the “reference” state dual to some semiclassical geometry does not contain a geometric python’s lunch. To reconstruct such code subspaces, one must consider the worst-case scenario of \eqref{eq:PLC} over all states in the chosen subspace. Generic states on sufficiently large code subspaces can develop quantum python’s lunches \cite{Engelhardt:2021qjs}.} 

{\bf Motivation from tensor networks.} The PLC exponent \eqref{eq:PLC} was originally motivated from TN considerations in \cite{Brown:2019rox}.\footnote{For a different line of evidence, based on cryptographic bounds, see \cite{May:2024epy}.} We now briefly review the original argument.

Consider the simplest TN model of the python’s lunch, consisting of two tensors and three homologous cuts, as illustrated in Fig.~\ref{fig:TNpython}. These three cuts correspond to the three QESs in the geometry: $X_\A, X_\A^b, X_\A^c$. To each cut $X_i$, we associate a Hilbert space $\mathcal{H}(X_i)$ consisting of $S_i$ qubits, so that $|\mathcal{H}(X_i)| = 2^{S_i}$. The number of qubits is identified with the generalized entropy of the corresponding surface via $S_i \log 2 = S_{\text{gen}}(X_i)$. To review the PLC argument, we suppress factors of $\log 2$ and use generalized entropies; these factors can be reinstated trivially. We also omit the explicit “bulk dangling legs” that are often included in holographic TNs. Instead, we regard them implicitly as part of the Hilbert space $\mathcal{H}(X_{\A})$.

The TN defines the linear map $V:\mathcal{H}(X_\A)\to\mathcal{H}(X^c_\A)$ with additional structure, namely 
\be 
V = V_1 V_2^\dagger\,,
\ee
for isometries $V_1:\mathcal{H}(X_\A)\to \mathcal{H}(X^b_\A) $ and $V_2:\mathcal{H}(X^c_\A)\to \mathcal{H}(X^b_\A)$. Up to an overall rescaling, $V$ is assumed to be approximately isometric. Entanglement wedge reconstruction in this model amounts to inverting this isometry on $\mathcal{H}(X^c_\A)$. The PLC then quantifies the quantum computational cost of doing this. Recall that we focus solely on the python’s lunch region because the simple wedge is already expected to admit low-complexity TN descriptions, for instance, of MERA type. In AdS/CFT, reconstruction methods such as HKLL \cite{Hamilton:2006az} or generalizations \cite{Engelhardt:2021mue} suggest that operators in the simple wedge should be simple to reconstruct.

\begin{figure}[h]
    \centering
    \includegraphics[width=\linewidth]{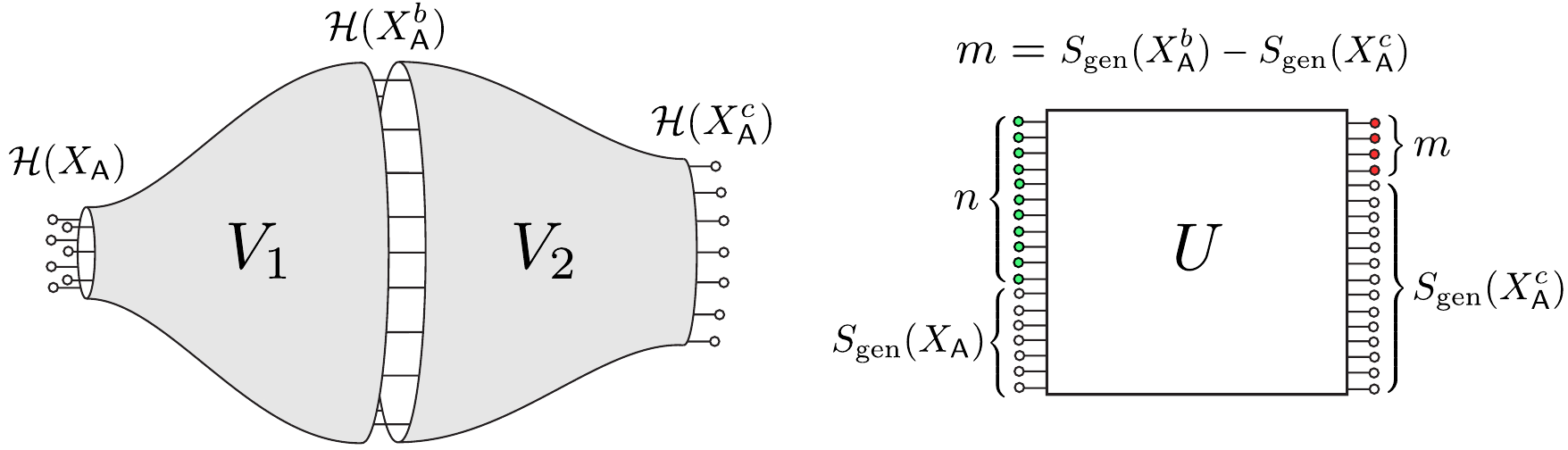}
    \caption{Left: TN model of a python's lunch, with three cuts associated with the minimal QES $X_\A$, bulge $X^b_\A$ and constriction $X^c_\A$, whose Hilbert spaces are mapped by the two isometries $V_1,V_2$. Right: The map $V = V_1V_2^\dagger$ viewed as a unitary $U$ together with postselection onto the state $\ket{0}^{\otimes m}$ of the $m$ qubits (in red). The amount of postselected and ancilla qubits is $m = S_{\rm gen}(X^b_\A) - S_{\rm gen}(X^c_\A)$ and $n = S_{\rm gen}(X^b_\A) - S_{\rm gen}(X_\A)$, respectively. Reconstructing the lunch amounts to finding an approximate unitary $U_0$ that reproduces the measurement, which is famously hard.}
    \label{fig:TNpython}
\end{figure}

To estimate the complexity, it is convenient to rewrite the map $V$ as a postselected unitary. To do this, one first extends $V_1$ to a unitary by attaching $n = S_{\text{gen}}(X_\A^b)-S_{\text{gen}}(X_\A)$ input ancillas initialized in the fixed state $\ket{0}^{\otimes n}$. Then, one considers $m = S_{\text{gen}}(X_\A^b)-S_{\text{gen}}(X^c_\A)$ output ancillas and postselects (``measures'') them to the $\ket{0}^{\otimes m}$ state as a consequence of the orthogonal projection $V_2^\dagger$. In this way, the action of the isometry $V$ can be written (up to normalization) as $\ket{0}^{\otimes m}\bra{0}^{\otimes m}U\ket{\psi}\ket{0}^{\otimes n}$, for $\ket{\psi}\in\mathcal{H}(X_\A)$ and some unitary $U$. The task is then to construct a unitary $U_0$ that reproduces the measurement,
\be\label{eq:actionTN}
U_0 \ket{\psi} \ket{0}^{\otimes n}   \,\propto\,  \ket{0}^{\otimes m} \bra{0}^{\otimes m}U\ket{\psi}\ket{0}^{\otimes n} \,.
\ee 
This can be achieved using conventional amplitude amplification, but both the algorithm and the resulting unitary depend on the input state $\ket{\psi}$. The proposal of \cite{Brown:2019rox} is to implement a state-independent version of Grover's search, the so-called {\it robust oblivious amplitude amplification} algorithm \cite{Berry:2014ivo}, an algorithm introduced in \cite{Berry:2013tiy} for Hamiltonian simulation. In our context, such an algorithm finds an approximate unitary $U_0$ that approximately satisfies \eqref{eq:actionTN} for {\it any} $\ket{\psi} \in \mathcal{H}(X_\A)$, with exponentially small error. In this case, the algorithm works provided $n \gg m \gg 1$ and that $U$ is sufficiently scrambling.\footnote{These conditions are expected to be met in semiclassical gravity whenever $S_{\text{gen}}(X_\A^c)-S_{\text{gen}}(X_\A) = O(G_{\rm N}^{-1})$ and the wormhole is parametrically longer than $O(\log G_{\rm N}^{-1})$ in AdS units.} The number of ``simple'' unitary operations in this algorithm is
\be\label{eq:unitaryops}
\Co(V) \sim \exp\left(\dfrac{m}{2}\right)\,,
\ee 
which, upon identification of $m$ with the difference in entropies between the bulge and constriction cuts, leads directly to the proposal \eqref{eq:PLC}. The PLC algorithm is expected to be optimal under genericity conditions. To reconstruct the python's lunch from the exterior, one initializes the measurement apparatus in $\ket{0}^{\otimes m}$ and applies $U_0^\dagger$ to the constriction and apparatus.

\subsection{Locality of the tensor network}

The TN representing $V$ above is structureless beyond the decomposition $V = V_1 V_2^\dagger$. The bulge cut $X^b_{\A}$ of the network is defined as the auxiliary Hilbert space in this decomposition. However, to obtain \eqref{eq:unitaryops} as the leading unitary complexity, there is a central but implicit assumption: $V_1$ and $V_2$ must themselves be ``low-complexity'' isometries. Thus, despite the general structure of the network, the tensors cannot be generic. Otherwise, $V$ would be a maximally complex isometry, with unitary complexity\footnote{The complexity of any isometry is upper bounded by $\Co_{\max}\sim D_{\text{in}}D_{\text{out}}$, where $D_{\text{in}}$ is the dimension of the input and $D_{\text{out}}$ is the dimension of the output Hilbert spaces. A generic isometry saturates this upper bound.}  
\be\label{eq:cmax}
\Co_{\max} \sim \exp\left(S_{\text{gen}}(X_\A)+S_{\text{gen}}(X^c_\A)\right)\,,
\ee 
in which case the presence of the bulge would be entirely irrelevant. In this paper, we will assume that $V_1$ and $V_2$ have additional structure and explore whether the PLC prediction of the complexity is too low. However, it is also possible for the PLC prediction \eqref{eq:PLC} to be larger than the maximal complexity bound \eqref{eq:cmax}, when $S_{\text{gen}}(X_\A^b) > 3S_{\text{gen}} (X^c_\A)+ 2 S_{\text{gen}}(X_\A)$. This failure mode of the PLC is not the subject of this paper but is briefly discussed in Appendix \ref{app:overestimate}.  

The non-genericity of the isometries $V_1$ and $V_2$ may be justified by requiring the TN to include some notion of {\it locality}. As mentioned above, in TN models of gravity, we interpret the network as a discretization of the bulk spatial geometry, with some degree of locality built into the network.\footnote{This need not be strict spatial locality (few-body nearest-neighbor connectivity), but could be a more general notion such as $k$-locality (few-body all-to-all connectivity).} For example, one might associate a tensor with each AdS-scale volume, as in MERA \cite{Swingle:2009bg}, in the HaPPY code \cite{Pastawski:2015qua}, or in holographic RTNs \cite{Hayden:2016cfa}. For the latter, we could in principle also consider TN models with a finer, sub-AdS discretization built into the network. Formally, even continuum versions of these models can be defined in the $G_{\rm N}\rightarrow 0$ limit, in terms of the local algebras \cite{Faulkner:2022ada}.

\begin{figure}[h]
    \centering
    \includegraphics[width=.55\linewidth]{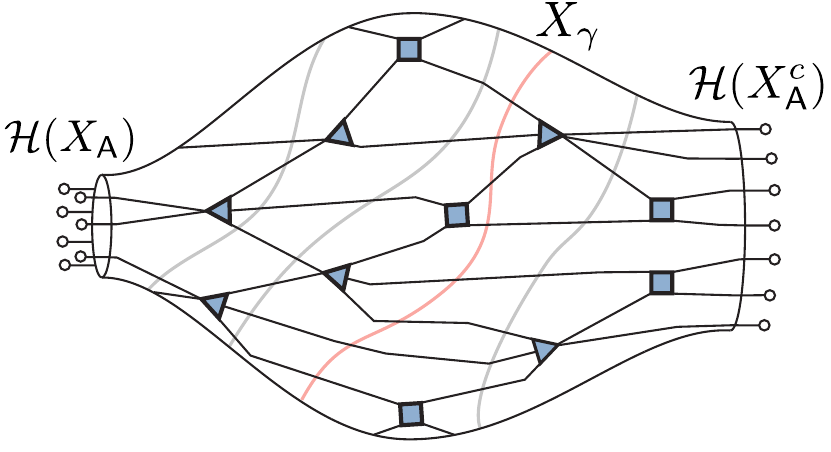}
    \caption{A local TN model of the python's lunch. In gray, a foliation $\gamma$ of the TN with the associated max cut $X_{\gamma}$ in red. }
    \label{fig:localTNpython}
\end{figure}

An example of a local TN model $V_{\rm TN}:\mathcal{H}(X_\A)\to\mathcal{H}(X^c_\A)$ is shown in Fig.\ \ref{fig:localTNpython}. In this local setting, there are multiple max cuts of the TN and associated decompositions of the map $V_{\rm TN}$. The proposal of \cite{Brown:2019rox} is that the bulge is determined by a minimax prescription. Consider a foliation $\gamma$ consisting of a discrete sequence of homologous cuts $\gamma=\{X_t\}_{t=1}^T$ connecting $X_1=X_{\A}$ to $X_T=X^c_{\A}$. The minimax prescription consists of first taking, within the foliation, the maximal cut and then minimizing this maximal cut over all foliations. This yields
\be\label{eq:minimaxTN}
|X^b_{\A}| = \min_{\gamma} \max_{1\le t \le T} |X_t| \,.
\ee
The heuristic motivation for this prescription is to minimize the amount of postselection by choosing a foliation of the TN whose maximal cut is as small as possible. Interpreting cut sizes as generalized entropies leads naturally to a minimax prescription in gravity, which selects the bulge as a well-defined, non-minimal QES. The PLC then assigns a complexity
\be\label{eq:PLCprediction} 
\log_D \,\Co_{\text{PLC}} \;{\sim}\;\dfrac{|X^b_{\A}|-|X^c_{\A}|}{2}\,.
\ee 
We will refer to this formula as the PLC prediction for TNs.

\subsection{This paper}

Our goal in this paper is to revisit the PLC prediction in local TN models of the python's lunch $V_{\rm TN}$ and to quantitatively test whether
\be\label{eq:PLCTN}
\hspace{2.5cm}
\Co(V_{\text{TN}}) \;\stackrel{?}{\sim}\; \Co_{\text{PLC}}\,.
\ee
Here $\Co(V_{\rm TN})$ denotes the circuit complexity of implementing the TN map $V_{\rm TN}$ to fixed accuracy and in a chosen gate model. This is the TN proxy for the PLC reconstruction complexity. Throughout the paper, we focus on the leading large-$D$ exponent and ignore polynomial prefactors and other subleading costs.

In doing so, we identify a central but implicit assumption of the PLC, which we term ``computational covariance''. One imagines that the TN in Fig. \ref{fig:localTNpython} can be contracted along any foliation, with the local tensors being applicable with low complexity operations. In particular, the important source of exponential complexity is supposed to be postselection, which occurs when the inputs are larger than the outputs in moving from one cut of the foliation $X_t$ to the next $X_{t+1}$. This covariance property will play a key role in our discussion.

We propose a new class of TNs that satisfy the computational covariance property while being generic. We call them {\it twirled perfect tensor networks} (TPTNs). Although introduced in the context of the PLC, these TNs offer structural insight into it and appear to have applications beyond holography. We show that TPTNs combine the desirable properties of RTNs and holographic perfect (stabilizer) TNs. In particular, TPTNs saturate lattice versions of the RT (QES) formula for arbitrary subregions in the large bond dimension limit.

Our main motivation to test \eqref{eq:PLCTN} is that, as we explain in Sec.\ \ref{sec:markov}, the PLC predicts a surprisingly small unitary complexity for spatially extended black brane interiors. These states come with an IR length scale $L_{\rm IR}$ such that the total entropy is proportional to $L_{\rm IR}$, yet the PLC predicts that $\log \Co_{\rm PLC}$ does not scale with $L_{\rm IR}$. This follows non-trivially from the fact that the bulge spontaneously breaks the spatial symmetries of the holographic system, as originally observed in \cite{Arora:2024edk}. As we discuss, a certain naive TN approach does predict $\log \Co \sim L_{\rm IR}$; however, we show that it is always possible to remove the scaling with $L_{\rm IR}$ in any system with suitably short-ranged correlations. In this regard, the PLC prediction is vindicated. However, the PLC goes further, predicting a quantitative $O(L_{\rm IR}^0)$ exponent that is also smaller than in our general TN construction, thus indicating something special about the complexity of holographic states if the PLC is to be true.
  
We diagnose the needed special ingredients in a TN language in Sec.\ \ref{sec:requirements},  pointing out that the PLC makes implicit assumptions about the bulk or the tensor network model that includes the aforementioned computational covariance property. In Sec.\ \ref{sec:RTN} we will show that this property requires fine-tuning, insofar as it is not a property of random tensor networks (RTNs). Then in Sec.\ \ref{sec:TPTN} we introduce and analyze the twirled perfect tensor networks, which provide an explicit realization of a computationally covariant TN. In Sec. \ref{sec:localshrink} we discuss the complexity of postselection in gravity, drawing on intuition from TPTNs. Finally, in Sec. \ref{sec:discussion} we give an outlook and discuss open questions. The appendices collect further technical details of the analysis.

\section{Case study: extended python's lunches}
\label{sec:markov}

In \cite{Arora:2024edk} it was observed that classical bulge surfaces $X^b_\A$ develop rather surprising geometric properties in spatially extended python's lunches. These extremal surfaces break the spatial symmetries of the system, and this causes the PLC exponent \eqref{eq:PLC} to be sub-extensive in the size of the system. We will now briefly review this phenomenon.

For concreteness, we consider $\A$ to be a holographic two-dimensional CFT in flat space $\mathbb{R}^{1,1}$. We consider a homogeneous time reflection symmetric state $\ket{\Psi} \in \mathcal{H}_{\bar{\A}} \otimes \mathcal{H}_{\A} $ in two copies of the CFT. The semiclassical dual state is defined by the initial data at the bulk moment of time symmetry $\Sigma$,
\be 
\text{d}s_\Sigma^2 = \text{d}\rho^2 + r^2(\rho) \,\text{d}x^2\,,
\label{eq:planar_metric}
\ee 
for $x \in \mathbb{R}$. We take the slice $\Sigma$ to have the topology of $[0,1]\times \mathbb{R}$. Asymptotically, we demand that $r(\rho) \sim e^{|\rho|}$ for $\rho \rightarrow \pm\infty$, so that the spacetime is asymptotically AdS (with $\ell_{\rm AdS} = 1$). Additionally, we will assume that the positive function $r(\rho)$ has a global minimum at $\rho=0$ (the RT surface $X_\A$), a local maximum at $\rho_b>0$, and a local minimum at $\rho_c>\rho_b$ (the constriction $X^c_\A$); this defines a python's lunch. To consider finite quantities, we will regularize the IR of the transverse geometry by adding a box with boundaries at $x = \pm L_{\text{IR}}/2$. We require that all extremal surfaces of interest meet the IR cutoff surface orthogonally.

The naive bulge candidate $X^{0}_\A$ sits at $\rho = \rho_b$. This surface is extremal since $r'(\rho_b) =0$. In the large volume limit, however, this surface will have more than one negative mode, and its Morse index will be greater than $1$, precluding it from being the bulge. More precisely, this occurs when \cite{Arora:2024edk} 
 \be\label{eq:LIR} 
 L_{\text{IR}} >  \dfrac{2\pi}{\sqrt{- r_0 r_0''}}\,,
 \ee 
 where we have defined $r_0 = r(\rho_b)$ and $ r''_0 =r''(\rho_b)< 0$.

  \begin{figure}[h]
 		\centering
 		\includegraphics[width = .95\textwidth]{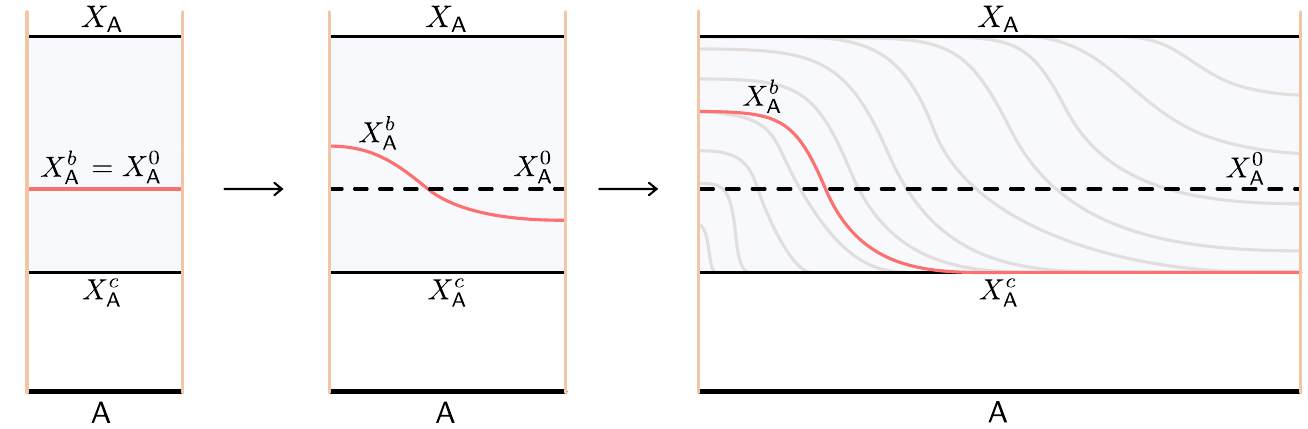}
 		\caption{A planar python's lunch (in blue) delimited in the transverse direction by the spatial IR regulator. As the IR cutoff $L_{\text{IR}}$ is increased, the bulge undergoes a transition from the naive symmetric bulge to an extremal surface which spontaneously breaks the $\mathbb{Z}_2$ spatial reflection symmetry of the IR regulator. In the thermodynamic limit $L_{\text{IR}}\rightarrow \infty $, the bulge hugs the constriction and they only differ at a finite distance from the cutoff. This produces a PLC exponent which is non-extensive in the system's spatial volume. (Figure adapted from  \cite{Arora:2024edk}.)}
 		\label{fig:planarlunch}
 	\end{figure}

When \eqref{eq:LIR} holds, the true bulge $X^{b}_\A$ will spontaneously break the reflection symmetries of the IR regulator, and thus two degenerate bulges will exist, related by a reflection.\footnote{If we were to compactify the system on a spatial circle, by imposing periodic boundary conditions instead of a hard cutoff, we would obtain a similar phenomenon in the large volume limit, where the bulge would spontaneously break $U(1)$, and a continuum set of bulges would exist, related by a zero mode deformation.} Moreover, the bulge $X^{b}_\A$ will mostly hug the constriction, except in a finite region in the thermodynamic limit $L_{\text{IR}}\rightarrow \infty$, as represented in Fig. \ref{fig:planarlunch}. Accordingly, the resulting log complexity predicted by the PLC will satisfy non-extensive behavior
\be\label{eq:areadifference} 
\log \Co_{\text{PLC}} = \dfrac{\text{Area}(X^b_\A) - \text{Area}(X^c_\A)}{8 G_{\rm N}}  \sim   O(L_{\text{IR}}^{0})\,,
\ee
as $L_{\text{IR}} \rightarrow \infty$. 

The first observation is that the thermodynamic entropy of $\A$ scales with the volume of space $L_{\text{IR}}$ and thus the complexity to reconstruct extended interiors is comparatively simpler than for systems at small spatial volume with the same total entropy. As we will now argue, this non-extensive scaling is, in fact, a positive prediction of the PLC. However, we will want to test the proposal at a quantitative level, looking at the precise coefficient in the exponent, which is
\be\label{eq:logcomplexityblackbrane} 
\log \Co_{\text{PLC}} \sim  \dfrac{1}{8G_{\rm N}}\int_{\rho^-_m}^{\rho^+_m} \text{d}\rho \,\sqrt{1-\dfrac{r_c^2}{r^2(\rho)}}\,.
\ee 
where $r_c = r(\rho_c)$ is the value of the $r$ coordinate at the constriction and $\rho^{\pm}_m$ are the two solutions to $r(\rho) = r_c$ closest to $\rho_b$. We provide details on how to compute this exponent in Appendix \ref{app:detailsplanar}.

\subsection{An example}

For concreteness, we shall consider a specific choice of state, where the profile of the bulk geometry is given by
\be 
r(\rho) = \begin{cases}
    r_h \cosh(\rho)\quad \quad \rho< \rho_b\,,\\[.2cm]
    r_c \cosh(\rho-\rho_c)\quad \quad \rho\geq \rho_b\,,
\end{cases}\label{eq:geomex}
\ee 
where $\rho$ is measured in units of $\ell_{\text{AdS}}$. The three independent parameters of the solution can be taken to be $r_h,r_c$ and $r_b = r_h \cosh (\rho_b) = r_c \cosh (\rho_b-\rho_c)$. In this case, the initial data on $\Sigma$ is vacuum, except at $\rho=\rho_b$, where a thin planar stress-energy is included. Gluing constructions of this kind can be found in e.g. \cite{Balasubramanian:2022gmo,Balasubramanian:2022lnw} using thin matter interfaces with a dust equation of state.

Using \eqref{eq:logcomplexityblackbrane} we find that in this case the PLC predicts 
\be\label{eq:logcomplexityPLCmera} 
\log \Co_{\text{PLC}} \sim  \dfrac{\ell}{8G_{\rm N}}\,,
\ee 
for the length scale
\be 
\dfrac{\ell}{\ell_{\text{AdS}}} =\log\left(\dfrac{\sqrt{r_b^2-r_h^2} + \sqrt{r_b^2-r_c^2}}{\sqrt{r_c^2-r_h^2}}\right)
-\frac{r_c}{r_h}\,\text{arctanh}\!\left(\frac{r_h}{r_c}\sqrt{\frac{r_b^2-r_c^2}{r_b^2-r_h^2}}\right)
+\log \frac{r_b}{r_c} \,.
\ee
Note that the length $\ell$ is independent of the IR regulator $L_{\rm IR}$.\footnote{When $r_c \approx r_h$ and $r_b,r_h \gg 1$ the length simplifies to $\ell \approx 2\ell_{\rm AdS}\log(r_b/r_c) \approx \ell_0$, where $\ell_0=  \ell_{\rm AdS}(\text{arccosh} \frac{r_b}{r_h} + \text{arccosh} \frac{r_b}{r_c})$ is the length of the wormhole.}

\subsection{Comparison with MERA model}
\label{sec:mera_interior}

\begin{figure}[h]
    \centering
    \includegraphics[width = .9\textwidth]{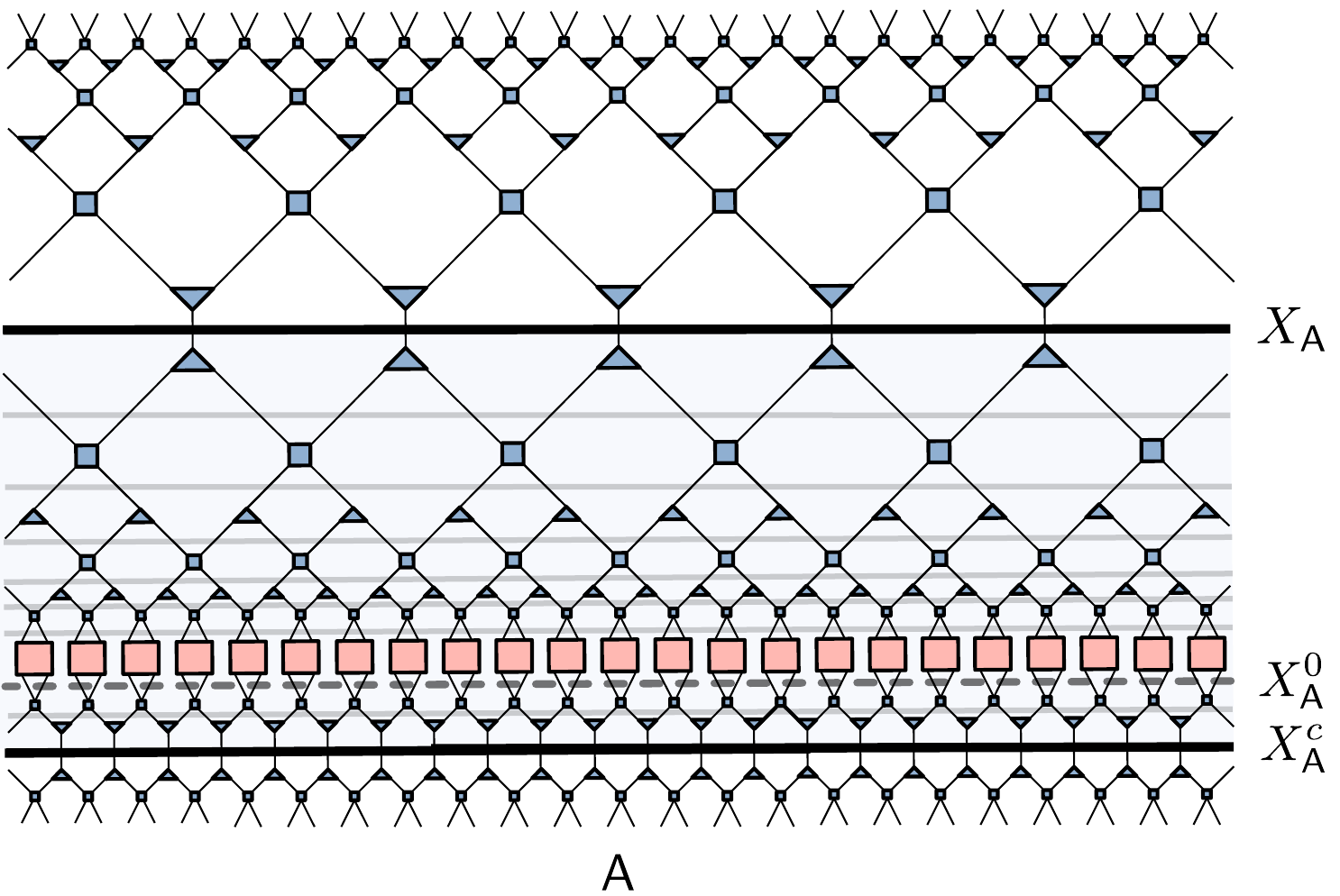}
    \caption{TN model of a planar python's lunch. Two MERA TFDs are contracted together at some radial cutoff by a chain of local unitaries (red boxes) associated with the thin matter domain wall. The map $V$ defined by this TN admits a simple factorization $V = V_1 V_2^\dagger$ along the dashed cut. The grey lines denote a standard planar-symmetric foliation and associated contraction scheme. Because the number of postselections required is proportional to $L_{\text{IR}}$, this way of contracting the network yields a complexity $ \log \Co_{\rm TN} \propto L_{\rm IR}$. This is parametrically larger than the PLC prediction.}
    \label{fig:MERApython}
\end{figure}

The non-extensive nature of $\log \Co_{\text{PLC}}$ in this spatially extended setting provides an interesting testing ground to explore the PLC. The PLC prediction suggests that there ought to exist a TN with an analogous sub-extensive $\log \Co$ as well. To set the stage, we first construct a TN that mimics the chosen bulk geometry and discuss a naive estimate of its complexity. Since the geometry is constructed from a cut-and-gluing procedure of vacuum solutions, we will assume that the corresponding local TN model can be constructed likewise. In the most naive model, the vacuum parts are given by two multiscale entanglement renormalization ansatze (MERA) TNs representing thermofield double (TFD) states  \cite{Vidal:2008zz,Swingle:2009bg}. We glue them together at the position of the planar shell, consisting of a layer of local unitaries, as represented in Fig.\ \ref{fig:MERApython}.

By construction, such a TN admits a planar-symmetric foliation and an associated decomposition $V_{\rm MERA} = V_1 V_2^\dagger$ into simple isometries $V_1,V_2$, obtained by cutting the network along the $r=r_b$ planar slice, i.e., the naive bulge $X^0_{\A}$. If these isometries, or the operators forming the matter shell, are generic enough, then in the large bond dimension limit the map $V$ becomes an approximate isometry.\footnote{This occurs if $S(X_\A^c)>2S(X_\A)$ and the individual isometries $V_1,V_2$ form approximate unitary $2$-designs.} For this pattern of contraction, the TN satisfies a PLC-like formula with an extensive exponent of the form
\be\label{eq:CMERA}
\log \Co(V_{\rm MERA}) \lesssim \dfrac{(r_b - r_c) L_{\rm IR}}{8 G_{\rm N}}\,.
\ee
As $L_{\text{IR}} \to \infty$, the PLC prediction \eqref{eq:logcomplexityPLCmera} is parametrically smaller than \eqref{eq:CMERA}. This prediction corresponds to choosing the ``naive'' symmetric bulge in Fig.  \ref{fig:planarlunch}. We have already shown that such a foliation is not the one found by the minimax procedure, so it is not surprising that the complexity of this contraction of the MERA network is larger than that predicted by the PLC.

{\bf Non-extensivity.} The non-extensivity of the PLC exponent \eqref{eq:areadifference} is a positive prediction of the PLC. Physically, far apart regions of the planar black brane state above are uncorrelated in a way that we will make precise below. Therefore, one can treat these subsystems as effectively independent of each other. The complexity to decode uncorrelated subsystems is presumably additive in the number of subsystems, and thus the PLC prediction is non-trivially predicting this for this setting.\footnote{A similar phenomenon was also observed for multi-boundary wormholes in \cite{Arora:2024edk}.}

\subsection{Markov length and comparison with MPS model}

Readers with a background in quantum complexity will also know that there are more efficient ways to prepare such a short-range correlated state than the naive contraction of the MERA network above. Here we discuss two such methods that leverage the short-range correlated nature of the state. The first method makes use of a quantum Markov chain property inherent in the state to build the state with a subextensive complexity exponent. The second method invokes a matrix product state (MPS) model of part of the bulk geometry and uses a known method for contracting such MPS states with a subextensive complexity exponent. We also quantitatively compare both models to the PLC prediction.

\subsubsection*{Markov chain approach}

This method leverages a conditional independence property that enables a state to be assembled from its constituents. This property is characterized in terms of the mutual information, 
\be
    I(\A : \C) := S(\A) + S(\C) - S(\A\C)
\ee
and conditional mutual information,
\be
    I(\A:\C|\B) = S(\A\B)+S(\B\C)-S(\B)-S(\A\B\C).
\ee
In particular, if $I(\A:\C)=0$ then $\rho_{\A \C} = \rho_{\A} \otimes \rho_{\C}$, and if $I(\A:\C|\B)=0$, then $\A \B \C$ form a quantum Markov chain, and the full state can be reconstructed from the marginals $\rho_{\A \B}$ and $\rho_{\B \C}$. For classical distributions $p$, the formula is 
\begin{equation}
    p_{\A \B \C} = \frac{p_{\A \B} p_{\B \C}}{p_{\B}},
\end{equation}
and the corresponding quantum formula is a generalization of this that can be phrased as a quantum channel which builds $\rho_{\A \B \C}$ from $\rho_{\A \B}$ and $\rho_{\B \C}$.

This structure can be used to build our planar symmetric state through an iterated construction in which the full system is slowly grown, starting from the left boundary at $x = - L_{\text{IR}}/2$ (we are assuming open boundary conditions for simplicity). Divide the IR cutoff into segments of length $\Delta L$ and label these segments starting from the left with $i=1,\cdots, \frac{L_{\rm IR}}{\Delta L}$.

At a given stage of the procedure, the regions will be defined by
\begin{align}
    &\A : [- \tfrac{1}{2}L_{\rm IR} , -\tfrac{1}{2}L_{\rm IR} + L], \\
    &\B : [- \tfrac{1}{2}L_{\rm IR} +L , -\tfrac{1}{2} L_{\rm IR} + L + \ell_{\rm M}], \\ 
    &\C : [- \tfrac{1}{2}L_{\rm IR} + L + \ell_{\rm M}, -\tfrac{1}{2}L_{\rm IR} + L + \ell_M +\Delta L ],
\end{align}
where $\ell_{\rm M}$ is chosen such that $I(\A:\C|\B)=0$ (at the classical level) for all $ L \geq 0$. We refer to $\ell_{\rm M}$ as the Markov length and note that $\A$ is growing, $\B$ is always one Markov length, and $\C$ is always one segment length.

For later use, we also denote the coarse-grained entropy density as
\begin{equation}
    s = \frac{r_c}{4 G_{\rm N}},
\end{equation}
so that the coarse-grained entropy of the entire system is $s L_{\rm IR}$. For a boundary segment of $x$-coordinate length $L$, we take the effective Hilbert space size to be $e^{s L}$. This ansatz views the MERA components of the TN as embedding the IR Hilbert spaces defined at the constriction and the RT surface into the UV Hilbert space. We neglect this part of the TN in the remainder of this subsection. 

The procedure starts with the state of $\rho_{\A \B}(L=0)$, for which $\A$ is empty. Viewed as a generic mixed state on a $e^{s \ell_{\rm M}}$ dimensional Hilbert space, this state can be built with complexity at most $e^{2 s \ell_{\rm M}}$. The first new segment $\mathsf{C}$ of length $\Delta L$ is now glued to the state of $\A \B$ to form $\A\B \C$. This is accomplished with a quantum channel $\mathcal{R}_{\B \to \B \C}$ that maps $\rho_{\A \B}$ to $\rho_{\A \B \C}$.

At a general point in the growth process, $\A$ has size $L$, and we add a new $\Delta L$ chunk in the form of $\C$. Crucially, the quantum channel which adds $\C$ is independent of $\A$ and, hence, has a complexity that does not depend on the size of $\A$. Since any quantum channel can be obtained as an isometry from the input and another system of equal size in a fixed initial state, we can upper bound the complexity of $\mathcal{R}_{\B \to \B \C}$ by $e^{3 s (\ell_{\rm M} + \Delta L)}$.

Following this procedure and tracking only the exponential parts of the complexity and the total number of steps, the complexity to reach $L = L_{\rm IR}$ is 
\begin{equation}
    \Co \sim e^{2 s \ell_{\rm M}} + \frac{L_{\rm IR}}{\Delta L} e^{3s (\ell_{\rm M} + \Delta L)}.
\end{equation}
Viewing $\ell_{\rm M}$ as a fixed property of the state, we determine $\Delta L$ by minimizing the complexity. The minimum is obtained when $\Delta L = \frac{1}{3 s}$, so the final minimal complexity of this approach is
\begin{equation}
    \Co_{\rm Markov} \sim 3 s L_{\rm IR} e^{ 3 s \ell_{\rm M} + 1}. 
\end{equation}
As desired, we have a subextensive result for the complexity exponent.

What remains is to determine the Markov length. It turns out to correspond to the length at which a phase transition occurs in the RT surface of a subregion. Since $\Delta L$ turned out to be a small length, formally scaling like the Planck length, the key criterion is that $\B$ must be large enough so that any region larger than $\B$ has entropy controlled by an RT surface that reaches the whole boundary RT surface. The situation is sketched in Fig. \ref{fig:estimateMarkov}, which shows two candidate RT surfaces for the $\B$ subregion. Again neglecting the part of the RT length outside the constriction, the difference in length between the two RT candidates is roughly
\begin{equation}
    |X_{\B}^1| - |X_{\B}^2|=  r_c \ell_{\B}  - ( 2\ell_0 + r_h \ell_{\B}).
\end{equation}
If $\B$ is large enough, 
\begin{equation}
   \ell_{\B} > \frac{2 \ell_0}{r_c - r_h},
\end{equation}
then the deeper RT surface is the dominant one, and any region bigger than $\B$ will have the same class of RT surface. In this situation, the entropies entering the conditional mutual information will be
\begin{align}
    & S(\A \B) = \ell_0 + r_h (\ell_{\A} + \ell_{\B}) \\
    & S(\B \C) = 2 \ell_0 + r_h (\ell_{\B} + \ell_{\C} ) \\
    & S(\B) = 2 \ell_0 + r_h \ell_{\B} \\
    & S(\A \B \C) = \ell_0 + r_h (\ell_{\A} + \ell_{\B} + \ell_{\C}),
\end{align}
from which it can be seen that $I(\A:\C| \B)=0$ at the classical level. Note that the choice of $\ell_0$ versus $2\ell_0$ is determined by whether the region reaches the left boundary or not. Hence, we may set
\begin{equation}
    \ell_{\rm M} = \frac{2 \ell_0}{r_c - r_h}.\label{eq:markovlengthestim}
\end{equation}

We have already noted that $\log \Co_{\rm Markov}$ is subextensive in the Markov approach, but how does the exponent compare quantitatively with the PLC prediction? For the regime in which $r_h \approx r_c$ and $r_h \gg 1$, the wormhole length $\ell_0$ obeys $\ell_0 \approx 2 \ln \frac{r_b}{r_c}$, and the PLC prediction \eqref{eq:logcomplexityPLCmera} reduces to
\begin{equation}
    \log \Co_{\rm PLC} \sim \frac{\ell_0}{8 G_N}.
\end{equation}
In contrast, the Markov length is diverging $\ell_{\rm M}\to \infty$ in this limit, and so the predicted complexity exponent in this approach $\log \Co_{\rm Markov}$ is also diverging. This divergence is reasonable since reconstructing the interior requires access to it, but the $\B \C$ subregion does not go inside the constriction unless $\B$ is large enough.

\begin{figure}[h]
 		\centering
 		\includegraphics[width = .6\textwidth]{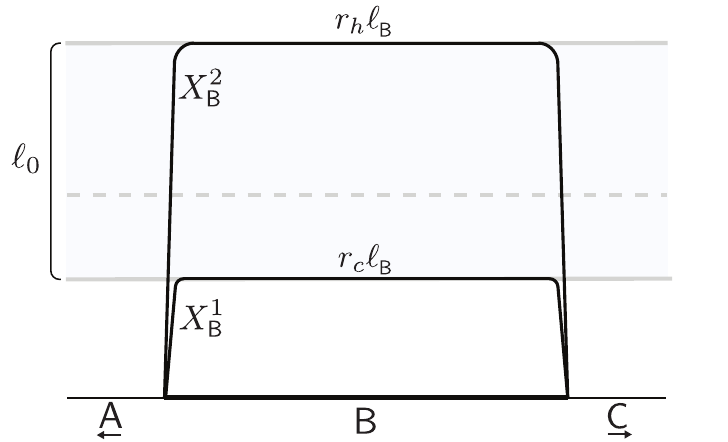}
 		\caption{The two RT candidates $X_{\B}^1$ and $X_{\B}^2$. For $\ell_{\B}>\ell_{\rm M}$, the RT surface is $X_{\B}^2$, and the boundary satisfies the Markov chain condition $I(\A:\C| \B)=0$ at the classical level.}
 		\label{fig:estimateMarkov}
 	\end{figure}

\subsubsection*{MPS approach}

We also consider a second approach in which we model the state using a matrix product state (MPS) and then leverage known contraction methods. The idea is to replace each Markov length chunk with a single MPS tensor with appropriate physical and bond dimensions, as shown in Fig. \ref{fig:planarlunchMPS}.

\begin{figure}[h]
    \centering
    \includegraphics[width = .75\textwidth]{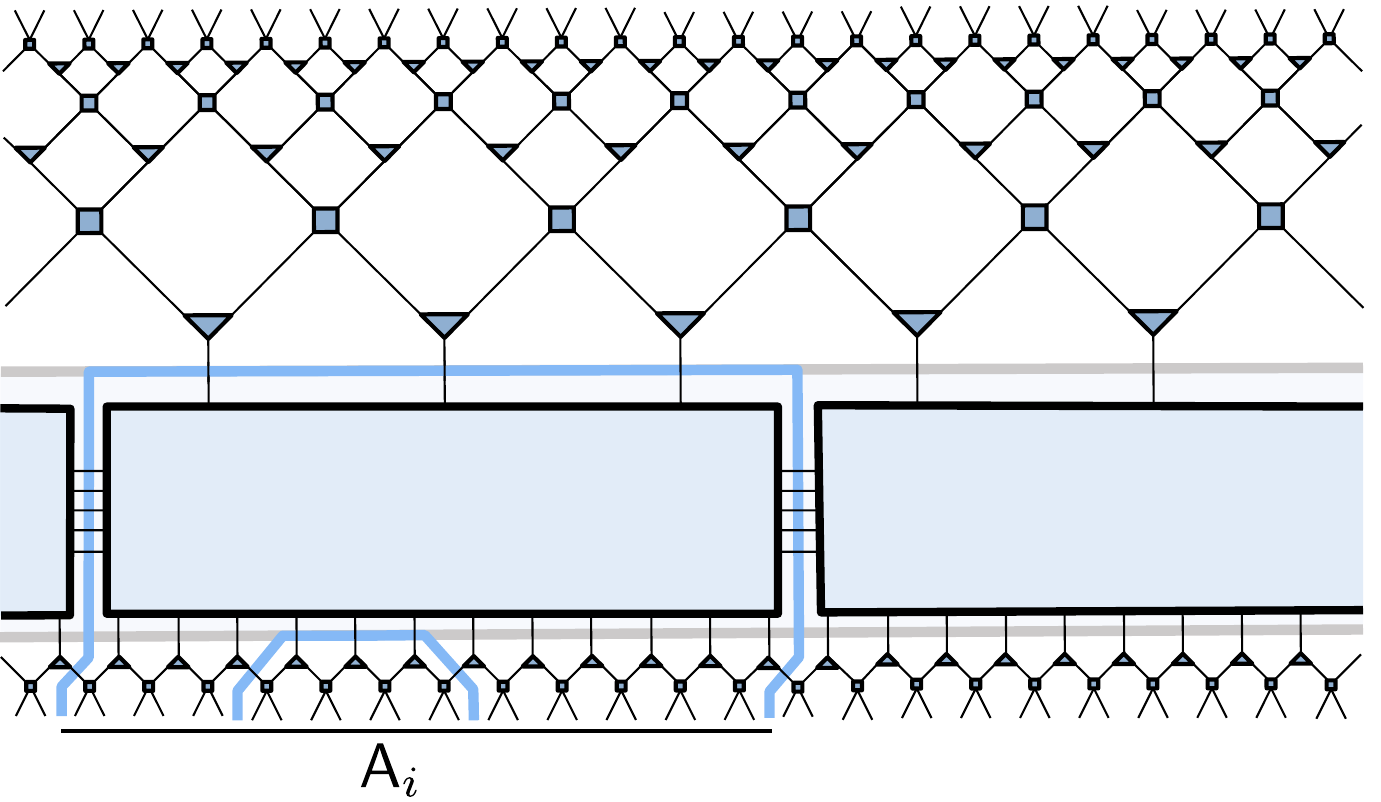}
    \caption{An MPS TN model of a planar python's lunch. We have divided the boundary into uncorrelated intervals $\A_i$ of Markov length $\ell_{\rm M}$.}
    \label{fig:planarlunchMPS}
\end{figure}

\begin{figure}[h]
    \centering
    \includegraphics[width = .7\textwidth]{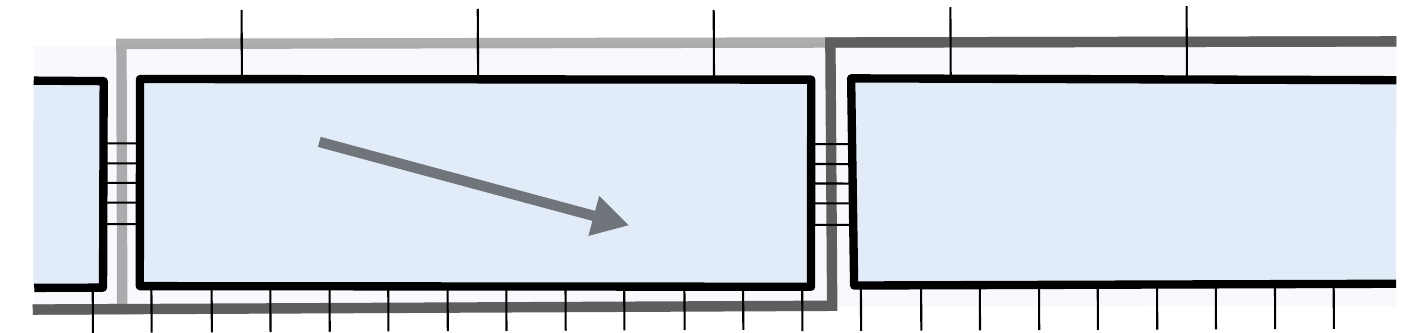}
    \caption{A step in the sideways contraction of the MPS representing the map $V$. The input legs are contained in the light gray slice and the output legs are contained in the dark gray slice.}
    \label{fig:slicing}
\end{figure}

We view the tensor as being defined by the geometry of the black hole as follows. For a region of length $\ell_{\rm M}$ at the Markov length, the two candidate RT surfaces in Fig. \ref{fig:estimateMarkov} are equal in area, 
\begin{equation}
    r_c \ell_{\rm M} = 2 \ell_0 + r_h \ell_{\rm M}.
\end{equation}
A single tensor of the MPS corresponds to the region of space between these two candidate RT surfaces, as shown in Fig. \ref{fig:planarlunchMPS}. The physical dimension of this tensor is the size of the Hilbert space on the constriction and the true RT surface, 
\begin{equation}
    \log D_{\text{phys}} = \frac{(r_c + r_h)\ell_{\rm M}}{4 G_{\rm N}}.
\end{equation}
The bond dimension is set by the orthogonal dimension of the region, the length of the wormhole,
\begin{equation}
    \log D_{\text{bond}} = \frac{\ell_0}{4 G_{\rm N}}.
\end{equation}

We can use the structure of the MPS to contract it in other ways rather than in the longitudinal direction. In this contraction, each MPS tensor is viewed as a map whose input consists of the constriction legs together with one wormhole bond, and whose output consists of the RT legs together with the other wormhole bond. The input and output dimensions of each step in the contraction are set by
\be 
\log D_{\text{in}} = \dfrac{\ell_{0} + r_h \ell_{\rm M}}{4G_{\rm N}}\,,\qquad \log D_{\text{out}} = \dfrac{\ell_{0} + r_c \ell_{\rm M}}{4G_{\rm N}}\,.
\ee 
For $D_{\text{out}}/D_{\text{in}}\gg 1$, a generic random tensor will be an approximately isometric sideways map. Using this, and that the generic isometry which approximates the tensor is maximally complex, of complexity $D_{\text{in}}D_{\text{out}}$, we arrive at the estimate for the exponent of contracting the MPS sideways:
\be\label{eq:CMPS} 
\log \mathcal{C_{\text{MPS}}} \sim \log(D_{\text{in}}D_{\text{out}}) \sim \frac{2\ell_0+(r_h+r_c)\ell_{\rm M}}{4G_{\rm N}} \approx  \dfrac{\ell_0}{G_{\rm N}}  \,,
\ee
where in the last step we are using \eqref{eq:markovlengthestim} in the regime $r_c \gg r_h$.

Comparing \eqref{eq:CMPS} with the PLC prediction \eqref{eq:logcomplexityPLCmera}, one finds that the PLC exponent is significantly smaller than the generic MPS exponent.
\be\label{eq:discrepancy}
\log \Co_{\text{MPS}} - \log \Co_{\text{PLC}} \gtrsim \dfrac{7\ell_0}{8G_{\rm N}}  \,.
\ee
The inequality follows from the fact that the length of the bulge is upper bounded by the length of the wormhole $\ell < \ell_0$.

The quantitative discrepancy in the complexity exponents \eqref{eq:discrepancy} can be traced to an implicit assumption in the PLC, namely that the elementary building blocks of the MPS carry additional structure and are therefore not maximally complex. As mentioned in the introduction, one might expect that locality of the TN is enough to guarantee this. However, this is far from clear. In particular, as we will see in the next sections, even if one begins with the hyperbolic network of Fig.\ \ref{fig:MERApython}, and chooses its local tensors at random, there is no general reason to expect that the complexity exponent is the one predicted by the PLC.

\subsection{Summary}

In this section we first reviewed the symmetry breaking phenomenon described in \cite{Arora:2024edk} and showed that the resulting PLC prediction is much smaller than an estimate obtained from a model of MERA TNs contracted in a naive planar-symmetric way. We then observed that, in fact, the PLC prediction is qualitatively reasonable because of the short-range correlated nature of the quantum states in question. A quantitative version of this is provided by a simple Markov chain construction and associated MPS construction, which indeed gave a $\log \Co_{\rm TN}$ that did not scale linearly with $L_{\text{IR}}$. However, we still found that the remaining $O(L_{\text{IR}}^0)$ exponent was smaller in the PLC prediction than in the general MPS construction.

This observation suggests that there is still a puzzle with PLC, even though the most egregious issue, the lack of scaling with $L_{\text{IR}}$, is actually expected. Specifically, the tensors entering the MPS construction of the state must somehow be special, so that they may be contracted with a complexity that is less than the maximum. This is because we expect that generic tensors in the MPS would require something close to the maximum complexity to contract. There is of course no sharp paradox---we are free to imagine that the tensors relevant for holographic systems are indeed quite special. On the other hand, there are ample examples in holography where the states and unitaries involved are indeed generic, such as in discussions of scrambling and chaos. We will focus on this remaining puzzle for the rest of the paper.

\section{Computational covariance in tensor networks}
\label{sec:requirements}

In this section we formulate conditions under which a TN model of the python's lunch can satisfy a PLC-like complexity bound. The basic issue is whether the TN map can be decomposed along intermediate cuts into low-complexity building blocks so that any exponential cost is isolated in the postselection step. This naturally leads to a notion of computational covariance.

We consider a general TN model of the python's lunch defined on a planar graph $\mathcal{G} = (\mathcal{V},\mathcal{E})$, where $\mathcal{V}$ is the set of vertices and $\mathcal{E}$ the set of edges. We assume that the TN defines an approximate isometry
\be
V_{\rm TN}:\mathcal{H}(X_{\A})\rightarrow \mathcal{H}(X^c_{\A})\,,
\ee
between two homologous minimal cuts $X_{\A}$ and $X^c_{\A}$, with
$|\mathcal{H}(X_{\A})|=D^{|X_{\A}|}$ and $|\mathcal{H}(X^c_{\A})|=D^{|X^c_{\A}|}$ for
$|X_{\A}|<|X^c_{\A}|$. Here $D$ denotes the bond dimension of the network, which for simplicity we take to be uniform throughout the graph. For holographic TNs, the local entropy $\log D$ should scale with the CFT central charge, so the natural regime is the large-$D$ limit. In that limit, we identify cut sizes with generalized entropies through
\be
S_{\rm gen}(X_{\A})=|X_{\A}|\log D\,,
\qquad
S_{\rm gen}(X^c_{\A})=|X^c_{\A}|\log D\,.
\ee

We will moreover assume that the two minimal cuts $X_{\A}$ and $X^c_{\A}$ are adjacent in the sense that no other homologous minimal cut lies strictly between them inside their common homology region. This restricts attention to the ``single lunch'' configuration shown in Fig.~\ref{fig:geompython}, for which the original PLC proposal \eqref{eq:PLC} is intended to apply.\footnote{For ``multi-lunch'' geometries, in which several homologous minimal cuts occur, one must instead use the more general prescriptions of \cite{Engelhardt:2021qjs}, or the further extension of \cite{Arora:2024edk}.} An example of such a TN is shown in Fig.~\ref{fig:localTNpython}.

The main questions are then the following: when is the complexity exponent of $V_{\rm TN}$ governed by the PLC prediction \eqref{eq:PLCprediction}, and what structural assumptions are needed for this to hold?

A first necessary ingredient is that the TN map admit a factorization through an intermediate cut into two computationally simple pieces.

\begin{definition}[Simple factorization]
Let $V_{\rm TN}:\mathcal{H}(X_{\A})\to \mathcal{H}(X^c_{\A})$ be an approximately isometric map defined by a TN of bond dimension $D$, where $X_{\A}$ and $X^c_{\A}$ are adjacent homologous minimal cuts. We say that $V_{\rm TN}$ admits a simple factorization if there exists a homologous cut $X_{\max}$ such that cutting the network along $X_{\max}$ induces a decomposition
\be
V_{\rm TN}=V_1V_2^\dagger,
\ee
where
\be
V_1:\mathcal{H}(X_{\A})\to \mathcal{H}(X_{\max}),
\qquad
V_2:\mathcal{H}(X^c_{\A})\to \mathcal{H}(X_{\max}),
\ee
are $\varepsilon(D)$-approximate isometries and have quantum complexity subexponential in $\log D$.
\end{definition}

The point of this requirement is that the complexities of $V_1$ and $V_2$ should be negligible compared with the postselection cost associated with the intermediate cut $X_{\max}$.

One could further require $V_1$ and $V_2$ to be sufficiently generic, or scrambling, so that the state-independent amplitude-amplification argument underlying the PLC applies directly. Even when this holds, however, the only effect is to reduce the postselection exponent by a factor of $2$ relative to naive measurement-and-repetition. Since this does not alter the qualitative scaling, we will not emphasize this refinement here.

Whenever a simple factorization exists, one obtains a PLC-like upper bound on the TN complexity in the large-$D$ limit. If $|\mathcal{H}(X_{\max})|=D^{|X_{\max}|}$, the associated exponent is
\be\label{eq:PLClike}
\log_D \Co(V_{\rm TN})
\lesssim
\frac{|X_{\max}|-|X^c_{\A}|}{\kappa}\,,
\ee
where $\kappa=2$ if state-independent amplitude amplification can be used, and $\kappa=1$ if one instead relies on direct postselection.

The inequality in \eqref{eq:PLClike} allows for the possibility that a more efficient algorithm exists for approximating $V_{\rm TN}$. A simple example is provided by the MERA model of the planar python's lunch shown in Fig.~\ref{fig:MERApython}. The naive planar-symmetric bulge surface $X_{\max}$ yields a simple foliation of the lunch and therefore suggests an extensive PLC exponent through \eqref{eq:PLClike}. However, after rewriting the MERA as an MPS, the transverse contraction discussed in the previous section gives a smaller exponent, in particular one that does not grow extensively with system size.

If the TN map admits a simple factorization, its complexity is controlled by a PLC-like formula of the form \eqref{eq:PLClike}. The next question is whether the intermediate cut $X_{\max}$ can be identified with the minimax cut \eqref{eq:minimaxTN}, namely the TN analog of the bulge QES. The underlying intuition is that one should be able to choose an optimal slicing of the python's lunch that minimizes the amount of postselection. From the TN perspective, however, this already assumes a kind of covariance: along any admissible slicing of the network, the TN map should continue to factor into simple pieces. We now make this more precise.

\begin{definition}[Everywhere non-contracting sequence]
Let $\mathcal{G}=(\mathcal{V},\mathcal{E})$ be a planar graph, and let
\be
\gamma:\quad X_0\to X_1\to \cdots \to X_T
\ee
be a discrete sequence of homologous cuts. We say that $\gamma$ is everywhere non-contracting if, for each step $t\to t+1$, the cut $X_{t+1}$ is obtained from $X_t$ by pushing the cut across a collection of vertices $\mathcal{V}_t\subset \mathcal{V}$ adjacent to $X_t$, and for every $v\in \mathcal{V}_t$ the number of incident cut edges does not decrease under that elementary move:
\be
n_{X_{t+1}}(v)\ge n_{X_t}(v)\,.
\ee
Here $n_X(v)$ denotes the number of edges in the cut $X$ incident on the vertex $v$.
\end{definition}

This condition is local: it requires that every elementary move in the foliation be non-shrinking at every vertex across which the cut is pushed.

For any two homologous cuts $X$ and $Y$, whenever the portion of the TN between them defines a linear map, we denote this induced map by
\be
V_{\rm TN}[X\to Y]:\mathcal{H}(X)\to \mathcal{H}(Y).
\ee
In particular, for a sequence $\{X_t\}$ we write
\be
V^t_{\rm TN}:=V_{\rm TN}[X_t\to X_{t+1}]\,.
\ee

\begin{definition}[Computational covariance]
A TN on a planar graph $\mathcal{G}=(\mathcal{V},\mathcal{E})$ with constant bond dimension $D$ is called computationally covariant if, for every everywhere non-contracting sequence of homologous cuts $\{X_t\}_{t=0}^T$, each elementary induced map
\be
V^t_{\rm TN}:\mathcal{H}(X_t)\to\mathcal{H}(X_{t+1})
\ee
is:
\begin{enumerate}
    \item an $\varepsilon(D)$-approximate isometry (or, when $|X_t|=|X_{t+1}|$, an approximate unitary), and
    \item of quantum complexity subexponential in $\log D$.
\end{enumerate}
\end{definition}

This definition captures the idea that any foliation built entirely from locally non-shrinking moves should yield a decomposition of the TN map into computationally simple pieces.

\begin{lemma}
Let $V_{\rm TN}:\mathcal{H}(X_{\A})\to\mathcal{H}(X^c_{\A})$ be a computationally covariant TN, where $X_{\A}$ and $X^c_{\A}$ are adjacent homologous minimal cuts. Suppose there exists a homologous cut $X_{\max}$ such that:
\begin{enumerate}
    \item $X_{\max}$ can be reached from $X_{\A}$ by an everywhere non-contracting sequence of homologous cuts, and
    \item $X_{\max}$ can also be reached from $X^c_{\A}$ by an everywhere non-contracting sequence of homologous cuts.
\end{enumerate}
Then $V_{\rm TN}$ admits a simple factorization through $X_{\max}$.
\end{lemma}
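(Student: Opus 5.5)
The plan is to build $V_1$ and $V_2$ directly from the two sequences the statement provides, and then check that they are approximate isometries, have low complexity, and compose to $V_{\rm TN}$.

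By hypothesis 1, there is an everywhere non-contracting sequence $X_{\A}=X_0\to X_1\to\cdots\to X_T=X_{\max}$. Set
\be
V_1 := V^{T-1}_{\rm TN}\cdots V^{1}_{\rm TN}V^{0}_{\rm TN}.
\ee
This is exactly the portion of the network lying between $X_{\A}$ and $X_{\max}$, because each step pushes the cut across a disjoint set of vertices $\mathcal{V}_t$. Contracting these slabs in order reproduces $V_{\rm TN}[X_{\A}\to X_{\max}]$; this is just associativity of tensor contraction. Hypothesis 2 likewise gives a sequence $X^c_{\A}=Y_0\to\cdots\to Y_{T'}=X_{\max}$, and I define $V_2$ as the analogous ordered product of its elementary maps $V_{\rm TN}[Y_s\to Y_{s+1}]$.

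Each elementary map is an $\varepsilon(D)$-approximate isometry of complexity subexponential in $\log D$, by the definition of computational covariance. The composition properties then follow by two routine estimates.

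\begin{enumerate}
\item \emph{Isometry error.} Write each factor as $W_t=\tilde W_t+E_t$, with $\tilde W_t$ an exact isometry and $\|E_t\|\le\varepsilon(D)$. A telescoping sum gives
\be
\Big\|\prod_t W_t-\prod_t\tilde W_t\Big\|\le T\,\varepsilon(D)\,(1+\varepsilon(D))^{T}.
\ee
\item \emph{Complexity.} Circuit complexity is subadditive under composition, so $\Co(V_1)\le\sum_t\Co(V^t_{\rm TN})$, and similarly for $V_2$.
\end{enumerate}

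Both estimates need the number of steps $T,T'$ to be independent of $D$. This holds because the graph is fixed and finite, and each move crosses at least one vertex, so $T,T'\le|\mathcal{V}|$. Hence $V_1$ and $V_2$ are $O(|\mathcal{V}|\,\varepsilon(D))$-approximate isometries whose complexity is still subexponential in $\log D$.

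It remains to show $V_{\rm TN}=V_1V_2^\dagger$. The cut $X_{\max}$ splits the homology region between $X_{\A}$ and $X^c_{\A}$ into two subnetworks. The first is the region between $X_{\A}$ and $X_{\max}$, which defines $V_1$. The second is the region between $X_{\max}$ and $X^c_{\A}$. Read as a map from $X^c_{\A}$ to $X_{\max}$ it is $V_2$, so read in the opposite direction it is $V_2^{\dagger}$; here I adopt the convention that reversing a slab's orientation conjugates its tensors, so the transpose is taken with complex conjugation. Gluing the two subnetworks along the bonds of $X_{\max}$ is the same as contracting $\mathcal{H}(X_{\max})$ in $V_1V_2^\dagger$, which recovers the full network $V_{\rm TN}$.

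The main obstacle is making sure the two sequences are compatible. The vertices swept by the first sequence and those swept by the second must partition the vertices between $X_{\A}$ and $X^c_{\A}$, with no vertex counted twice or missed. I expect this to follow from planarity and the homology condition: each sweep fills exactly the region between its initial cut and $X_{\max}$, and these two regions are complementary in the lunch. I would state this region bookkeeping explicitly rather than prove it in detail.
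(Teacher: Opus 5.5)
Your proposal is correct and follows essentially the same route as the paper. Computational covariance lets you compose the elementary maps along each everywhere non-contracting sequence into low-complexity approximate isometries $V_1$ and $V_2$, and gluing the two slabs along $X_{\max}$ gives $V_{\rm TN}=V_1V_2^\dagger$. Your extra bookkeeping makes explicit what the paper's three-line proof leaves implicit: the telescoping error bound, subadditivity of complexity with $T,T'\le|\mathcal{V}|$ independent of $D$, and the conjugation convention for reading the second slab backwards.
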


\begin{proof}
By computational covariance, the non-contracting sequence from $X_A$
to $X_{\max}$ composes to an approximate isometry
$V_1:\mathcal H(X_A)\to \mathcal H(X_{\max})$ of subexponential
complexity. Similarly, the non-contracting sequence from $X_A^c$ to
$X_{\max}$ gives an approximate isometry
$V_2:\mathcal H(X_A^c)\to \mathcal H(X_{\max})$. Cutting the network
along $X_{\max}$ gives $V_{\rm TN}=V_1V_2^\dagger$, so
$V_{\rm TN}$ admits a simple factorization through $X_{\max}$.
\end{proof}

\begin{figure}[h]
    \centering
    \includegraphics[width=.55\textwidth]{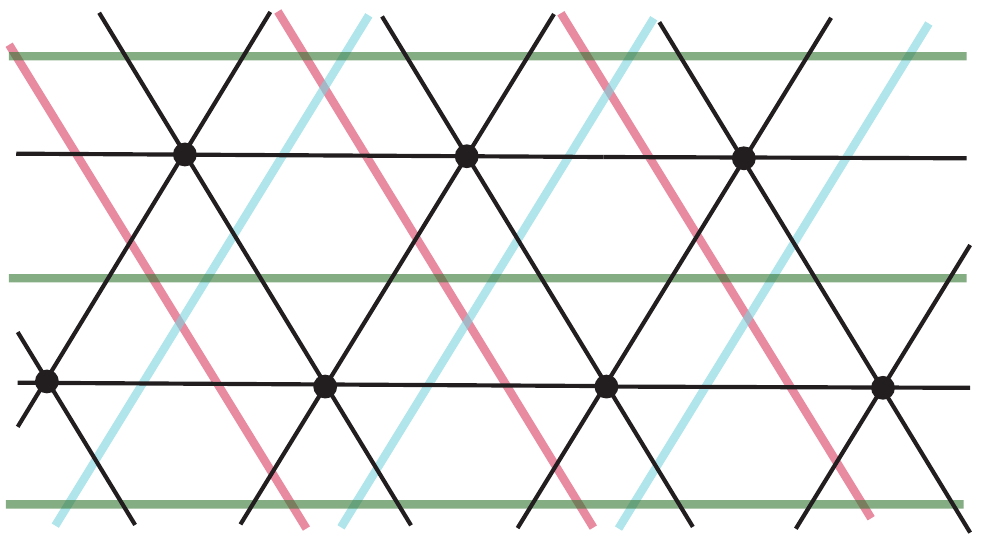}
    \caption{A computationally covariant TN in a flat geometry admits a sequence of unitary maps in any of the three slicings (blue, red, green). Moreover, the respective unitaries are quantum computationally ``simple''.}
    \label{fig:compcov}
\end{figure}

If $V_{\rm TN}$ is computationally covariant, then any everywhere
non-contracting foliation gives a simple factorization of the network.
One can therefore optimize over such foliations and minimize the largest
cut that appears along the way. This defines the \emph{non-contracting
bulge}
\be\label{eq:minimaxTN2}
|X_{\A}^{\rm nc}|
=
\min_{\gamma^*}\;\max_{0\le t\le T}|X_t|\,,
\ee
where the minimization is over sequences of homologous cuts
\be
\gamma^*:\quad X_0\to X_1\to\cdots\to X_T,
\qquad
X_0=X_{\A},\quad X_T=X^c_{\A},
\ee
such that, from each minimal cut, the sequence is everywhere
non-contracting up to a maximal cut. The PLC argument then gives the
covariant TN bound
\be\label{eq:PLClike2}
\log_D \Co(V_{\rm TN})
\lesssim
\frac{|X_{\A}^{\rm nc}|-|X^c_{\A}|}{\kappa}\,.
\ee

The cut $X_{\A}^{\rm nc}$ should not be confused with the minimax
bulge $X_{\A}^b$ of the discrete network. The minimax bulge is defined
by optimizing over all foliations, while $X_{\A}^{\rm nc}$ only allows
foliations that are locally non-contracting on the way to the largest
cut. These two notions can differ in a discrete graph, and
$X_{\A}^{\rm nc}$ need not exist at all. Thus
\eqref{eq:PLClike2} gives a PLC-like bound for computationally covariant
TNs, but it does not automatically reproduce the geometric PLC exponent in generic situations.

As we explain in Sec.~\ref{sec:localshrink}, this obstruction is inherently tied to the discreteness of the TN. Along a generic TN foliation, some local moves can decrease the number of cut legs at a vertex, even if the total cut size is increasing. Such moves require local postselection and are therefore excluded from the definition of $X_{\A}^{\rm nc}$. In a smooth geometry, this local obstruction is absent: one can choose a foliation that expands locally from a minimal surface up to the bulge. We return to this point in Sec.~\ref{sec:localshrink}.

\section{Testing random tensor networks}
\label{sec:RTN}

In this section, we test the computational covariance property in random tensor network (RTN) models of holography \cite{Hayden:2016cfa}.\footnote{RTNs have been shown to reproduce different holographic properties \cite{Hayden:2016cfa,Akers:2021pvd,Akers:2022zxr,Akers:2024pgq}.} At first sight, RTNs seem like promising candidates. Since the tensors are drawn from rotationally invariant ensembles, the network does not distinguish any preferred direction, and one might therefore expect a covariant behavior, at least on average, under different slicings of the graph. Moreover, the linear maps defined by RTNs often preserve inner products approximately on small sets of states, suggesting that they may be close to isometries or unitaries in an appropriate sense. However, this intuition is misleading. Whether a random map is ``close to unitary'' depends strongly on the notion of approximation, and several inequivalent notions that look similar at first lead to rather different conclusions.

A first goal of this section is to disentangle these notions. For instance, a random map may reproduce the pairwise overlaps of a given collection of states and hence admit a unitary approximation on that restricted set, while still failing to be close to any single unitary uniformly on the full Hilbert space. Since computational covariance requires a factorization that works simultaneously for all input states, this distinction is essential.

Even when some approximate unitary description exists, there is a further issue related to complexity. A random linear map can sometimes be completed to a unitary on a sufficiently restricted input, for example, after tensoring with an ancilla or when evaluated on a particular entangled state. But the resulting unitary need not respect the tensor-product structure naturally associated with the network, in particular, if the corresponding input subspace is highly entangled. This makes it difficult to extract a useful complexity bound from the locality of the underlying tensors alone. Thus, the question is not only whether RTNs admit approximate unitary descriptions but also whether they do so in a way compatible with the geometric decomposition of the network and with the notion of complexity relevant for the PLC.

With this in mind, we proceed in stages. We first show that individual random tensors are generally not perfect: depending on the notion of approximation, they may preserve some overlaps while still failing to be well approximated by a single isometry on the full Hilbert space. We then consider RTN maps built from local tensors and show that locality itself tends to obstruct isometricity, even when the graph is everywhere expanding. This leads us to restrict the map to smaller $\alpha$-bit code subspaces, for which approximate isometries can exist. We next ask whether the corresponding closest isometries are computationally simple and argue that they are not, in general. Finally, we discuss more explicit ways of implementing RTN maps, including schemes based on local measurements or local unitary replacement. The overall picture is that RTNs do not naturally realize a simple strong form of computational covariance, although in certain restricted settings they can come closer to it.

\subsection{Definition}

A RTN is defined by a graph $\mathcal{G} = \lbrace \mathcal{V}, \mathcal{E} \rbrace $, where $\mathcal{V}$ is the collection of vertices and $\mathcal{E}$ is the collection of edges. For each vertex $x \in \mathcal{V}$, we define a Hilbert space $\mathcal{H}^{x}_{1}\otimes ... \otimes \mathcal{H}^{x}_{i_{n_{x}}}$ and the state
\be\label{eq:tensor}
\ket{R_{x}} = \dfrac{1}{
\sqrt{D^{n_{x}}}
}\sum_{i_1,...,i_{n_{x}}=1}^D\,(R_{x})_{ i_1...i_{n_{x}}} \ket{i_1}...\ket{i_{n_{x}}}\,.
\ee
Here $n_{x}$ is the degree of vertex $x$ (the number of edges that contain it), and $D = |\mathcal{H}^{x}_{i}|$ is the {\it bond dimension}, which for simplicity will be kept constant throughout the network. The states $\ket{R_{x}}$ are chosen to be independent random states distributed according to some probability distribution, which could be the uniform random state distribution \cite{Hayden:2016cfa}. For convenience, however, we will take $(R_{x})_{ i_1...i_{n_{x}}}$ to be independent complex Gaussian random variables with zero mean and unit variance\footnote{Uniform random states are effectively Gaussian random for most purposes in the large bond dimension limit.}
\be 
\overline{(R_{x})_{I}} =0\,,\quad\quad \overline{(R_{x})_{I}(R_{y})^*_{J}} =\delta_{x,y}\delta_{I,J}\,.
\ee 
where we are using the shorthand notation $I = i_1...i_{n_{x}}$, $J = j_1...j_{n_{y}}$ for all the indices. With this choice, the state \eqref{eq:tensor} is only normalized on average.

We now need to connect the vertices to construct the network. Given an edge $e = \lbrace x,y \rbrace  \in \mathcal{E}$, the rule is to contract the corresponding factors $\mathcal{H}^{x}_{a}$ and $\mathcal{H}^{y}_{b}$ with the maximally entangled state on the corresponding factors
\be\label{eq:epr} 
\ket{\phi_e}  = \dfrac{1}{\sqrt{D}}\sum_{i=1}^D\ket{i_{x,a}^*}\ket{i_{y,b}}\,.
\ee

Denote by $\partial \mathcal{G}$ the set of open tensor legs, i.e., those tensor indices that are not paired by an internal edge in $\mathcal E$. The contraction of the internal edges defines a (generally unnormalized) pure state on the boundary,
\be
\ket{V_{\rm RTN}}
=\left(\prod_{e\in\mathcal E}\bra{\phi_e}\right)
\left(\prod_{x\in\mathcal V}\ket{R_{x}}\right)
\in \mathcal H(\partial \mathcal{G})\,.
\ee
For the models relevant to this paper, we identify
$\mathcal H(\partial \mathcal{G})\simeq \mathcal H(X_\A)\otimes \mathcal H(X_\A^c)$.\footnote{For simplicity, we do not include bulk dangling legs and instead absorb them into $\mathcal H(X_\A)$.}
The python's lunch model is the linear map $V_{\rm RTN}:\mathcal H(X_\A)\to \mathcal H(X_\A^c)$ associated with this boundary state by the state-map isomorphism. Equivalently, for any $\ket{\psi}\in\mathcal H(X_\A)$,
\be\label{eq:VstateRTN_action}
V_{\rm RTN}\ket{\psi}
\,\propto\,
\left(\bra{\psi^*}_{X_\A}\otimes \mathbf 1_{X_\A^c}\right)
\ket{V_{\rm RTN}}\,,
\ee
where the contraction is performed on the $X_\A$ legs only. The global constant of proportionality will be fixed by the condition that $V_{\rm RTN}$ is an isometry on average.

\subsection{Random tensors are not perfect}
\label{sec:gaussianRT}

We now consider an individual random state of the form \eqref{eq:tensor}. We split the legs of the tensor into $n$ input legs and $m$ output legs, and contract the input legs with suitable maximally entangled states to view the tensor as a Gaussian random map
\begin{center} 
\includegraphics[width=0.21\linewidth]{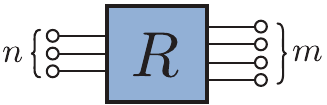} 
\end{center}
representing
\be\label{eq:gaussrandmap} 
R = \dfrac{1}{\sqrt{D^{m}}}\,\sum_{I,J}\,R_{IJ} \ket{I}\bra{J^*} 
\ee 
where $I = i_1,...,i_m$, $J = j_1,...,j_n$ is shorthand notation for the indices and $\ket{I} = \ket{i_1}...\ket{i_m}$, $\ket{J} = \ket{j_1}...\ket{j_n}$. For convenience, we have chosen a different normalization from the one in \eqref{eq:tensor}. This is possible since the overall scaling of the tensor is unimportant for our purposes.

We want to quantify how close to an isometry the map $R$ is for $n\leq  m$, or equivalently, how close to a co-isometry it is (i.e., how close $R^\dagger$ is to an isometry) for $n\geq  m$. The isometry condition would be $R^\dagger R=\mathbf{1}_{D^n}$, or pictorially:
\begin{center} 
\includegraphics[width=0.42\linewidth]{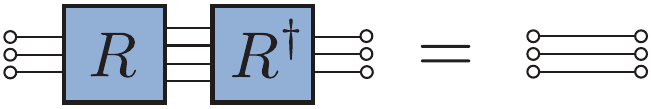} 
\end{center}
We quantify the deviation from isometricity by $\|R^\dagger R-\mathbf{1}_{D^n}\|_{1}$, with the standard definition of the $1$-norm ${\big\|X\big\|_{1}} = \text{Tr}\sqrt{X^\dagger X}$. The following theorem measures this:

\begin{theorem}\label{thrm:gaussianrandomtensor}
Let $R: \mathbb{C}^{D^n} \to \mathbb{C}^{D^m}$ be a Gaussian random linear map, with $n \leq m$, normalized such that 
\be\label{eq:normcondition}
\overline{R^\dagger R} = \mathbf{1}_{D^n} \,.
\ee
Then, in the limit $D \to \infty$, the map $R$ is an isometry with unit probability if and only if
\be\label{eq:condisomg} 
    m > 2n \,.
\ee
\end{theorem}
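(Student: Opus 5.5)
The plan is to reduce the statement to a second-moment computation for the Wishart-type matrix
\[
W \equiv R^\dagger R = \frac{1}{D^m}\sum_{I} \sum_{J,J'} R^*_{IJ}R_{IJ'}\,\ket{J^*}\bra{J'^*}\,,
\]
an $N\times N$ matrix with $N=D^n$, built from $M=D^m$ independent Gaussian samples. By the normalization \eqref{eq:normcondition}, $\overline{W}=\mathbf{1}_{D^n}$, so the deviation from isometricity is controlled by the fluctuations of $W$ about its mean. I would first estimate the deviation in the Hilbert--Schmidt norm, $\overline{\|W-\mathbf{1}\|_2^2} = \overline{\text{Tr}\,W^2} - N$. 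Expanding $\text{Tr}\,W^2$ as a quartic polynomial in the Gaussian entries and applying Wick's theorem with $\overline{R_{IJ}R^*_{I'J'}}=\delta_{II'}\delta_{JJ'}$ leaves two pairings. The planar pairing reproduces $\text{Tr}\,\overline{W}^{\,2}=N$. The crossed pairing gives $N^2 M/M^2 = N^2/M$. Hence
\[
\overline{\|R^\dagger R-\mathbf{1}_{D^n}\|_2^2} = \frac{D^{2n}}{D^{m}} = D^{2n-m}\,.
\]
This tends to zero as $D\to\infty$ exactly when $m>2n$, which is the sufficiency direction.

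Second, I would upgrade this from an average statement to a statement holding with unit probability. For the "if" direction, Markov's inequality suffices: $\Pr\big(\|W-\mathbf{1}\|_2^2>\epsilon\big)\le D^{2n-m}/\epsilon\to 0$. For the "only if" direction I need the deviation to stay $O(1)$ or larger with high probability when $m\le 2n$, not merely on average. I would compute the variance of $\text{Tr}\,W^2$ via the eighth-moment Wick expansion, or simply quote the Marchenko--Pastur law with ratio $c=N/M=D^{n-m}$. Either route shows that the fluctuations of $\|W-\mathbf{1}\|_2^2$ are subleading relative to its mean $N^2/M$. A Chebyshev bound then gives $\|W-\mathbf{1}\|_2^2\ge \tfrac12 D^{2n-m}\gtrsim 1$ with probability tending to one. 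Finally, I would argue that no rescaling or nearby isometry can rescue the situation. The eigenvalues of $W$ spread over a window of width $\sim 2\sqrt{c}$ around $1$, so the distance to the nearest isometry (the polar part of $R$) is comparable to $\|W-\mathbf{1}\|$ itself.

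The main obstacle is the choice of norm, and I would flag it explicitly in the write-up. Using the Marchenko--Pastur density, the trace norm scales as $\|W-\mathbf{1}\|_1\sim N\sqrt{c}=D^{(3n-m)/2}$. The operator norm scales as $\sim 2D^{(n-m)/2}$. Neither threshold is $2n$. The threshold $m>2n$ is exactly the one produced by the Hilbert--Schmidt (collision) quantity $D^{2n-m}$, which is also what the replica/swap-operator computation standard in RTN analyses yields. I would therefore carry out the argument for the $2$-norm deviation, which is the quantity fixed by \eqref{eq:normcondition} together with the Wick computation above. I would then state carefully how the threshold shifts if one insists on the unnormalized trace norm or the operator norm. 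The step requiring the most care is the variance computation behind the converse, since it is what turns the averaged identity into the "with unit probability" claim.
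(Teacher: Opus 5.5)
Your argument is correct, but it takes a different route from the paper's, and the norm issue you flag is exactly where the two diverge.

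\textbf{What the paper does.} The paper works in the trace norm. It writes $\overline{\|R^\dagger R-\mathbf{1}_{D^n}\|_1}=D^n\,\overline{|\lambda-1|}$ with $\lambda$ Marchenko--Pastur distributed, evaluates $\overline{|\lambda-1|}$ through an elliptic-integral expression, and expands it as $\tfrac r2+\tfrac{r^3}{16}+\cdots$ to get $\tfrac12 D^{2n-m}$.

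\textbf{Why that step fails.} Your objection is right, and it breaks that step. For small $r$ the MP law is essentially a semicircle of radius $2\sqrt r$ centred at $1+r$, so $\overline{|\lambda-1|}\simeq\tfrac{8}{3\pi}\sqrt r$. The discrepancy is already visible at $r=1$: the exact mean absolute deviation is $\tfrac{3\sqrt3}{2\pi}\approx 0.83$, whereas the paper's expression gives $\tfrac{2}{\pi}\approx 0.64$. The quantity that really is of order $r$ is the variance $\overline{(\lambda-1)^2}=r$. That is precisely your Hilbert--Schmidt deviation, $D^n r=D^{2n-m}$.

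\textbf{Consequence for the threshold.} In the trace norm the threshold is really $m>3n$. This agrees with the paper's own single-tensor footnote after Theorem~\ref{thrm:RTN}, where the sufficient condition $E_{\rm gap}>2E_{\rm GS}$ reads $m>3n$. The stated threshold $m>2n$ holds when isometricity is measured in the $2$-norm. Equivalent formulations are:
\begin{itemize}
\item the distance $\|R-T_R\|_2^2\simeq\tfrac14\|R^\dagger R-\mathbf{1}_{D^n}\|_2^2$ to the polar isometry $T_R$;
\item the overlap-preservation criterion the paper discusses right after the theorem.
\end{itemize}

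\textbf{What each route buys.} Your route gives an exact finite-$D$ identity from a two-pairing Wick computation,
\[
\overline{\|R^\dagger R-\mathbf{1}_{D^n}\|_2^2}=D^{2n-m}.
\]
Sufficiency then follows from Markov's inequality alone. The converse needs only that $\mathrm{Tr}(R^\dagger R-\mathbf{1})^2$ concentrates, and its relative fluctuations are indeed $O(D^{-n})$. This is also the same second-moment (purity) quantity behind the paper's RTN theorem. The spectral MP route in principle gives more information, since it controls every Schatten norm at once. Carried out correctly, however, it does not yield $m>2n$ for the trace norm the paper uses.
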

\begin{proof}
For Gaussian $R$, the operator $R^\dagger R$ is distributed as a Wishart random matrix.  
A key parameter controlling its spectrum is the ratio of the input and output Hilbert space dimensions, given by
\be 
    r :=
    D^{n-m}\,.
\ee
Without loss of generality, assume $n \leq m$, so that $0 < r \leq 1$.  
In the large bond–dimension limit $D \to \infty$, the eigenvalue distribution of the Wishart random matrix $R^\dagger R$ converges to the Marchenko–Pastur (MP) law \cite{marchenko1967distribution}, given by
\be\label{eq:MPdistribution}
    p(\lambda) = \frac{1}{2\pi r \lambda} \sqrt{(\lambda_+ - \lambda)(\lambda - \lambda_-)} \,,
\ee
supported on the interval $\lambda \in [\lambda_-, \lambda_+]$, where
\be
    \lambda_{\pm} := (1 \pm \sqrt{r})^2 \,.
\ee
The distribution has mean $\overline{\lambda} = 1$ and variance $\overline{\lambda^2} - 1 = r$ (see Fig.~\ref{fig:MP}).

\begin{figure}[h]
    \centering
    \includegraphics[width =\textwidth]{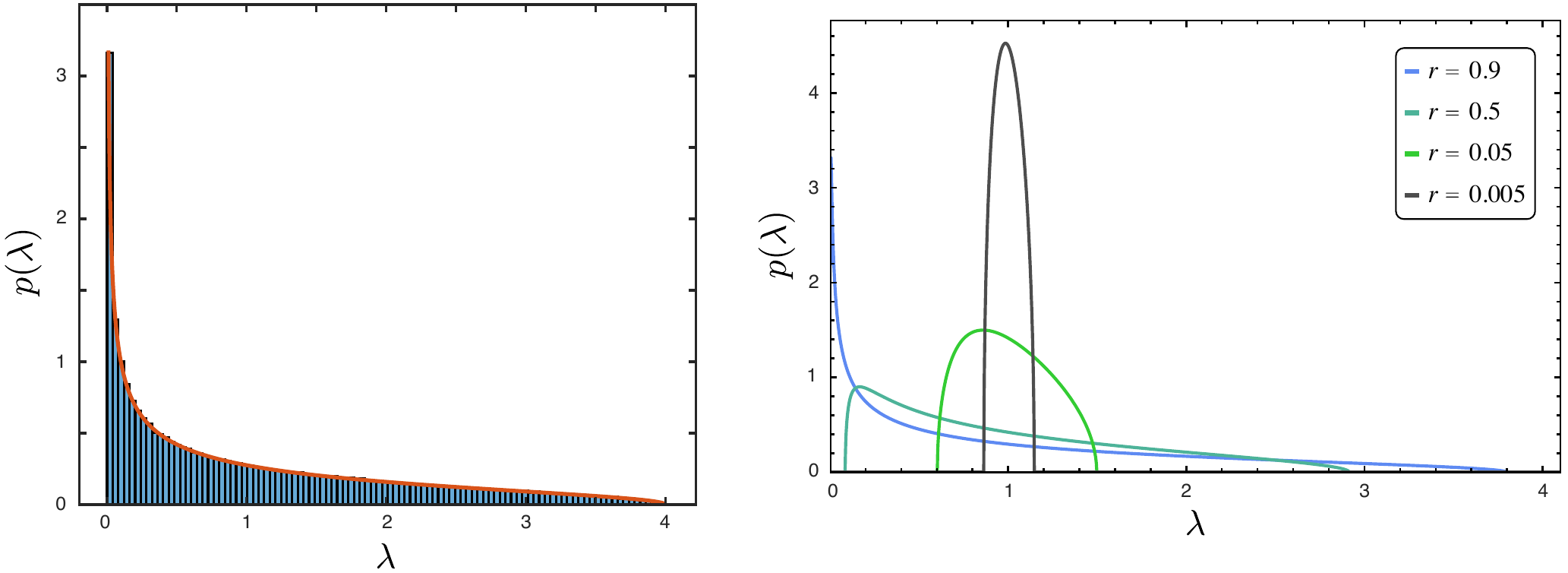}
    \caption{Left: histogram of the eigenvalues of $R^\dagger R$ for a square Gaussian random tensor $R$ with bond dimension $D=400$, compared to the MP distribution at $r=1$. The $O(1)$ spread of the singular values indicates that the map is far from an isometry. 
Right: MP distribution for different values of $r$. As $r$ decreases, the distribution becomes increasingly peaked around $1$, converging to a delta function in the limit $r \to 0$.}
    \label{fig:MP}
\end{figure}

On average, the distance to isometry is determined by the mean distance of the singular values from one,
\be 
    \overline{\big\|R^\dagger R- \mathbf{1}_{_{D^n}}\big\|}_{1} = D^n \,\overline{|\lambda-1|}\,.
\ee 
Using the MP distribution \eqref{eq:MPdistribution}, this gives
\be 
    \overline{|\lambda-1|} = \dfrac{(1+r)\,E\!\left(\tfrac{4r}{(1+r)^2}\right) - (1-r)\,K\!\left(\tfrac{4r}{(1+r)^2}\right)}{\pi r}\,,
\ee 
where $K(x)$ and $E(x)$ denote the complete elliptic integrals of the first and second kind, respectively. As $r \to 0$ we obtain
\be 
    \overline{|\lambda-1|} = \dfrac{r}{2} + \dfrac{r^3}{16} + \cdots\,,
\ee
and thus 
\be\label{eq:avgonenormR} 
    \overline{\big\|R^\dagger R- \mathbf{1}_{D^n}\big\|}_{1} = \dfrac{r}{2}\,D^n \left(1 + O(r^2)\right) \sim \dfrac{1}{2} D^{2n-m}\,.
\ee 
\end{proof}

Therefore, we conclude that the average distance to isometry vanishes as $D\rightarrow \infty$ only if \eqref{eq:condisomg} is satisfied, that is, if the number of output legs is more than twice the number of input legs. In this sense, individual random tensors are not ``perfect''. This result does not really depend on the fact that the tensor $R$ was chosen to be Gaussian distributed. In particular, it is also expected to hold if the state $\ket{R}$ is defined as a uniform random state on the $n+m$ legs. In Appendix~\ref{app:measureconcentration}, we outline stronger measure concentration results regarding the asymptotic probability that an individual draw $R$ from the Gaussian ensemble is approximately isometric as $D\to \infty$. 

\subsubsection{Notions of approximate isometry}

A less restrictive measure of approximate isometry is to consider the operator norm, or max singular value, $\big\|X\big\|_{\infty } = \text{sup}_{\ket{\psi}} \|X\ket{\psi}\|$, where the supremum is taken over normalized states. For the random tensor $R$, in the large bond dimension limit, we have
\be\label{eq:avgerrorRT} 
\overline{\big\|R^\dagger R- \mathbf{1}_{D^n}\big\|}_{\infty } \leq \lambda_+ -1 = 2\sqrt{r} + r = D^{\frac{n-m}{2}} (2 + D^{\frac{n-m}{2}})\,.
\ee 
This is already exponentially small in $\log D$ if $m>n$ for which case random tensors would behave as approximately ``perfect'' according to this definition. 

However, this is not the right criterion if our goal is to approximate the map $R$ by a single isometry on the full Hilbert space. For two normalized states $\ket{\Psi},\ket{\Phi} \in \mathbb{C}^{D^n}$, the absolute deviation from isometry in their inner product is bounded by the operator norm,
\be
\big|\bra{\Psi}R^\dagger R\ket{\Phi}- \braket{\Psi}{\Phi}\big| \;\leq\; \big\|R^\dagger R- \mathbf{1}_{D^n}\big\|_{\infty }\,.
\ee
Yet, for inner products to be preserved across the Hilbert space, this error must be much smaller than the characteristic value of the overlap $|\braket{\Psi}{\Phi}|$, which is exponentially small in the number of qudits for relatively random vectors.\footnote{More physically, one may consider drawing $\ket{\Psi}$ and $\ket{\Phi}$ from an approximate quantum state $1$-design, which reproduces the same scaling of $|\braket{\Psi}{\Phi}|^2$ while being efficiently prepared on a quantum computer.} More precisely, the expectation value of the overlap between two independent random states is \cite{bengtsson2017geometry}
\be\label{eq:avmagnitudeoverlap}
\int \text{d}\Psi \text{d}\Phi \,|\braket{\Psi}{\Phi}| = \dfrac{\sqrt{\pi}}{2} \dfrac{\Gamma(D^n)}{\Gamma(D^n+\tfrac{1}{2})} \sim \dfrac{\sqrt{\pi}}{2}D^{-\tfrac{n}{2}}\,.
\ee
This follows from the fact that the probability $x =|\langle\Psi|\Phi\rangle|^2$ is distributed as $\mathrm{Beta}(1,D^n-1)$, which is the distribution for the square of a single coordinate of a random unit vector in $\mathbb{C}^{D^n}$. The probability distribution is $f(x) = (D^n-1)(1-x)^{D^n-2}$ on $x \in [0,1]$. The above integral \eqref{eq:avmagnitudeoverlap} is simply $\int_0^1 \text{d}x \,f(x)\,x^s$ for $s=1/2$ .\footnote{Note that in the $D\rightarrow \infty $ limit, $f(x) \sim D^n \exp(-x D^n)$ asymptotes to the Porter-Thomas distribution.}

Thus, if one wants $R$ to preserve inner products uniformly enough to admit a single isometric approximation on the full Hilbert space, the average error \eqref{eq:avgerrorRT} must in particular be much smaller than the scale of typical overlaps. This again points to the condition $m>2n$, and clarifies why the trace norm is the more relevant notion here.

\subsection{Locality vs isometricity}

\begin{figure}[h]
    \centering
    \includegraphics[width=0.77\linewidth]{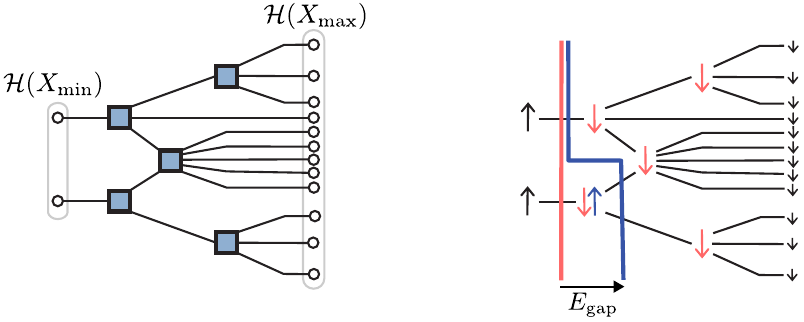}
    \caption{Left: the RTN map $W_{\rm RTN}$, acting from left to right. Right: the ground state and a first excited state of the classical $\mathbb{Z}_2$ Ising model on the network. The black arrows represent the pinning magnetic field, while the red and blue arrows correspond to Ising spins. The ground state corresponds to the red domain wall configuration at the minimal cut. The first excited state is obtained by flipping a single spin in the neighborhood of this domain wall, with the resulting domain wall shown in blue.}
    \label{fig:isometricityRTN}
\end{figure}

We now consider an RTN defined on a planar graph $\mathcal{G} = \lbrace \mathcal{V}, \mathcal{E} \rbrace $, such as the one shown in Fig.\ \ref{fig:isometricityRTN}. This RTN defines a map
\be 
W_{\rm RTN}: \mathcal{H}(X_{\min}) \rightarrow \mathcal{H}(X_{\max})\,,
\ee
between the minimal cut $X_{\min}$ and the ``maximal'' cut $X_{\max}$, homologous to each other. Under what conditions is $W_{\rm RTN}$ an approximate isometry? 

The naive expectation from the previous subsection is that $W_{\rm RTN}$ should be approximately isometric if the graph expands locally at a sufficiently fast rate; for example, when each random tensor has more than twice as many output legs as input legs. As we now show, however, this expectation does not hold.

\begin{theorem}\label{thrm:RTN}  
Let $W_{\rm RTN}: \mathcal{H}(X_{\min}) \to \mathcal{H}(X_{\max})$ be a Gaussian RTN map of constant bond dimension $D$, where $|\mathcal{H}(X_{\min})| = D^{|X_{\min}|}$ and $|\mathcal{H}(X_{\max})| = D^{|X_{\max}|}$, defined on the graph $\mathcal{G} = \{\mathcal{V}, \mathcal{E}\}$ and normalized such that  
\be\label{eq:condRTN} 
\overline{W_{\rm RTN}^\dagger W_{\rm RTN}} = \mathbf{1}_{D^{|X_{\min}|}}\,.
\ee 
Consider the classical $\mathbb{Z}_2$ Ising Hamiltonian on the network,  
\be \label{eq:isingH}
E[\{s_x\}] = -\frac{1}{2}\left[ \sum_{\langle xy\rangle \in \mathcal{E}} (s_x s_y - 1) + \sum_{x\in \partial \mathcal{V}} (h_x s_x - 1) \right]\,,
\ee
where $s_x \in \{\pm 1\}$ is the spin variable on vertex $x \in \mathcal{V}$, $\langle xy\rangle$ denotes nearest neighbors, and $\partial\mathcal{V} \subset \mathcal{V}$ denotes the collection of boundary vertices. The pinning magnetic field $h_x$ is fixed at the boundary by  
\be\label{eq:pinningm} 
h_x = \begin{cases}
    +1, & x \in X_{\min}, \\[0.2cm]
    -1, & x \in X_{\max}.
\end{cases}
\ee 
Then, in the limit $D \to \infty$, the map $W_{\rm RTN}$ is asymptotically isometric in trace norm, with probability tending to one, provided that the ground state energy $E_{\rm GS}$ and the energy gap $E_{\rm gap}$ of \eqref{eq:isingH} satisfy  
\be \label{eq:RTNconditionisom}
E_{\rm gap} > 2E_{\rm GS}\,.
\ee
Conversely, in the limit $D \to \infty$, the map $W_{\rm RTN}$ is non-isometric with unit probability if
\be\label{eq:RTNconditionisom2}
E_{\rm gap} < E_{\rm GS}\,.
\ee
\end{theorem}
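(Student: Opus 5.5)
We will follow the same strategy as in Theorem~\ref{thrm:gaussianrandomtensor}: control the trace-norm distance $\|W^\dagger W-\mathbf 1\|_1$ (writing $W\equiv W_{\rm RTN}$) through its second moment, and evaluate that moment with the standard replica/Ising mapping of Gaussian RTNs \cite{Hayden:2016cfa}. Concretely, we bound
\be
\overline{\big\|W^\dagger W-\mathbf 1\big\|_1}\;\le\;\sqrt{D^{|X_{\min}|}}\,\Big(\overline{\mathrm{Tr}\big[(W^\dagger W-\mathbf 1)^2\big]}\Big)^{1/2},
\ee
using Cauchy--Schwarz on the $D^{|X_{\min}|}$ eigenvalues. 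By \eqref{eq:condRTN}, the right-hand side reduces to $\overline{\mathrm{Tr}(W^\dagger W)^2}-D^{|X_{\min}|}$.

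Next we compute $\overline{\mathrm{Tr}(W^\dagger W)^2}$ by Wick contraction. Each Gaussian tensor appears twice and twice conjugated, so the average is a sum over permutations $\sigma_x\in S_2=\{e,\tau\}$ at each vertex, i.e.\ over Ising spins $s_x$. Each internal edge contributes $D^2$ if its two spins agree and $D$ otherwise. The boundary legs are contracted with the fixed permutation implementing $\mathrm{Tr}(W^\dagger W)^2$: on the $X_{\min}$ legs this is one pairing, and on the $X_{\max}$ legs the other. This is exactly the pinning field \eqref{eq:pinningm}.

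After the normalization fixed by \eqref{eq:condRTN}, the sum becomes
\be
\overline{\mathrm{Tr}(W^\dagger W)^2}=D^{|X_{\min}|}\sum_{\{s\}}D^{-E[\{s\}]}\Big(1+O(\ldots)\Big),
\ee
up to a matching of conventions. The configuration with every spin aligned to the $X_{\min}$ field should reproduce the subtracted $\mathbf 1$ piece. The remaining terms are then organized by domain-wall energy: the ground state has $E_{\rm GS}$, the minimal cut between the two pinned boundaries, and excitations cost at least $E_{\rm GS}+E_{\rm gap}$. Checking carefully which configuration cancels against $\mathbf 1$, and with what prefactor, is the main bookkeeping step.

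Plugging this expansion into the Cauchy--Schwarz bound, the leading surviving term gives a distance of order
\be
D^{|X_{\min}|/2}\,D^{(|X_{\min}|-E_{\rm gap})/2}\sim D^{(2E_{\rm GS}-E_{\rm gap})/2}.
\ee
Here we identify $E_{\rm GS}$ with $|X_{\min}|$, since the ground-state domain wall sits at the minimal cut as in Fig.~\ref{fig:isometricityRTN}. This vanishes iff $E_{\rm gap}>2E_{\rm GS}$, which generalizes $m>2n$ from the single-tensor case. Markov's inequality then upgrades the average statement to one holding with probability tending to one.

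The hardest part is the converse \eqref{eq:RTNconditionisom2}, because an upper bound on the second moment does not imply non-isometry. For this we plan a lower bound on $\|W^\dagger W-\mathbf 1\|_1$ via $\|A\|_1\ge \mathrm{Tr}A^2/\|A\|_\infty$. We will show that the variance $\overline{\mathrm{Tr}(W^\dagger W-\mathbf 1)^2}\sim D^{2E_{\rm GS}-E_{\rm gap}}$ exceeds $D^{E_{\rm GS}}$ when $E_{\rm gap}<E_{\rm GS}$; this means the typical normalized eigenvalue deviation $\overline{(\lambda-1)^2}\sim D^{E_{\rm GS}-E_{\rm gap}}$ diverges. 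Using higher replica moments, the $S_3$ analogue of the same Ising mapping, we will then show concentration: the operator norm stays $O(\overline{(\lambda-1)^2}^{1/2})$ while $\mathrm{Tr}A^2$ concentrates, for instance by a fourth-moment/Paley--Zygmund argument. This yields a trace-norm distance of order at least $D^{E_{\rm GS}}$ with probability one. We expect the gap between the two thresholds $E_{\rm GS}$ and $2E_{\rm GS}$ to reflect exactly this loss between the $L^2$ and trace-norm controls.
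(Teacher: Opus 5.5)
Your sufficiency argument is the paper's: Cauchy--Schwarz (the paper's Jensen plus Schatten comparison), the exact Wick/Ising evaluation of the second moment, and the low-temperature expansion. Using Markov's inequality to pass to a single draw is, if anything, cleaner than the paper's appeal to concentration. The bookkeeping you defer should read $\overline{\mathrm{Tr}(W^\dagger W)^2}=D^{2|X_{\min}|}\sum_{\{s\}}D^{-E[\{s\}]}$ (not $D^{|X_{\min}|}$). The term cancelling the subtracted $\mathbf 1$ is the ground state itself, not the configuration aligned with the $X_{\min}$ field. In the ground state all bulk spins are aligned with the $X_{\max}$ field and the wall sits on $X_{\min}$; it contributes exactly $\mathrm{Tr}\big(\overline{W^\dagger W}\,\overline{W^\dagger W}\big)=D^{|X_{\min}|}$. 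The configuration aligned with the $X_{\min}$ field has its wall on $X_{\max}$ and is subleading. So, with $A:=W^\dagger W-\mathbf 1$, you get $\overline{\mathrm{Tr}A^2}\simeq N_*D^{E_{\rm GS}-E_{\rm gap}}$, as in the paper. Also, for a single tensor $E_{\rm GS}=n$ and $E_{\rm gap}=m-n$, so your condition reads $m>3n$, not $m>2n$; the paper's footnote explicitly calls it sufficient but not necessary there.

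The converse, however, rests on a wrong scaling. You write $\overline{\mathrm{Tr}A^2}\sim D^{2E_{\rm GS}-E_{\rm gap}}$. From this you conclude that $\overline{(\lambda-1)^2}\sim D^{E_{\rm GS}-E_{\rm gap}}$ diverges and that $\|A\|_1\gtrsim D^{E_{\rm GS}}$. This contradicts your own first half. The second moment is $N_*D^{E_{\rm GS}-E_{\rm gap}}$, so $\overline{(\lambda-1)^2}\sim N_*D^{-E_{\rm gap}}\to 0$; for one tensor this is just the Marchenko--Pastur variance $r=D^{n-m}$. Your own Cauchy--Schwarz bound already gives $\overline{\|A\|_1}\lesssim D^{E_{\rm GS}-E_{\rm gap}/2}\ll D^{E_{\rm GS}}$, so the advertised lower bound is false. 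The eigenvalues are individually close to $1$, and non-isometry comes from how many of them there are. The eigenvalue-deviation statement you plan to establish with third-replica moments therefore cannot be proved.

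Nor is the real difficulty that you only have an upper bound on the second moment. The Gaussian Wick expansion has positive weights, so $\overline{\|A\|_2^2}\ge N_*D^{E_{\rm GS}-E_{\rm gap}}$ exactly, and this diverges when $E_{\rm gap}<E_{\rm GS}$. Then $\|A\|_1\ge\|A\|_2$ transfers the divergence to the trace norm with no operator-norm input. That is the paper's converse (it cites only the divergence of the $2$-norm, with this inequality implicit). It then invokes concentration for a single draw. A Chebyshev bound on $\mathrm{Tr}A^2$ from the four-copy model would make that step precise; Paley--Zygmund alone gives only constant probability.

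Your inequality $\|A\|_1\ge \mathrm{Tr}A^2/\|A\|_\infty$ is salvageable once the scaling is corrected. An $O(1)$ operator-norm bound reproduces the same threshold. A bound $\|A\|_\infty\lesssim D^{-E_{\rm gap}/2}$ would go further and close the window $E_{\rm GS}\le E_{\rm gap}\le 2E_{\rm GS}$ that the paper leaves open. That, however, needs high-moment control of $\|A\|_\infty$, which neither you nor the paper supply.
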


\begin{proof}
 We can write $W_{\rm RTN}^\dagger W_{\rm RTN} = D^{|X_{\min}|} \rho_{X_{\min}}$ where $\rho_{X_{\min}}$ is the (unnormalized) reduced density matrix to subsystem $X_{\min}$ in the Choi state $\ket{W_{\rm RTN}} \in \mathcal{H}(X_{\min}) \otimes   \mathcal{H}(X_{\max})$. On average, $\rho_{X_{\min}}$ corresponds to the maximally mixed state by condition \eqref{eq:condRTN}.

The average square $2$-norm distance to isometry is determined by the average purity of this state,
\be\label{eq:RTNtwonormdistance} 
\overline{\big\|W_{\rm RTN}^\dagger W_{\rm RTN}- \mathbf{1}_{D^{|X_{\min}|}}\big\|^2_{2}} 
= D^{|X_{\min}|}\!\left( D^{|X_{\min}|}\,\overline{\Tr(\rho_{X_{\min}}^2)} - 1 \right)\!.
\ee 

As shown in~\cite{Hayden:2016cfa}, the calculation of the average purity reduces to the thermal partition function of a classical $\mathbb{Z}_2$ Ising model defined on the network,  
\be\label{eq:RTNising}
\overline{\Tr(\rho_{X_{\min}}^2)} \;=\; \sum_{\{s_x=\pm 1\}} e^{-\beta E[\{s_x\}]}\,,
\ee 
with effective inverse temperature
\be 
\beta = \log D\,,
\ee 
and Hamiltonian \eqref{eq:isingH} with boundary conditions set by the pinning field \eqref{eq:pinningm}.  
The key simplification for our purposes, given \eqref{eq:RTNtwonormdistance}, is that the state $\rho_{X_{\min}}$ is normalized only on average, so \eqref{eq:RTNising} holds exactly. If instead we had normalized the state explicitly, as in~\cite{Hayden:2016cfa}, the classical partition function would only approximate the normalized average.

The large bond dimension limit $D \to \infty$ corresponds to the low-temperature limit $\beta \to \infty$ of the Ising model. In this regime, the partition function \eqref{eq:RTNising} is dominated by the ground state of the Ising Hamiltonian. As explained in~\cite{Hayden:2016cfa}, the ground state configuration consists of a single domain wall located at the minimal cut $X_{\min}$, with energy 
\be\label{eq:GSE}
E_{\rm GS} = |X_{\min}|\,.
\ee 
In Fig.\ \ref{fig:isometricityRTN}, this configuration appears as the red domain wall.

The leading correction at low temperature comes from the first excited states of the Ising Hamiltonian. Accordingly, the purity admits the expansion  
\be 
\overline{\Tr(\rho_{X_{\min}}^2)} = e^{-\beta E_{\rm GS}}\!\left(1 + N_* e^{-\beta E_{\text{gap}}} + \dots \right),
\ee 
where $N_*$ denotes the number of first excited states and $E_{\text{gap}}$ their energy above the ground state. The dots represent subleading contributions from higher-energy configurations. Substituting into \eqref{eq:RTNtwonormdistance} gives  
\be\label{eq:twonormspectralgap} 
\overline{\big\|W_{\rm RTN}^\dagger W_{\rm RTN}- \mathbf{1}_{D^{|X_{\min}|}}\big\|^2_{2}} \;\sim\; N_*\, e^{\beta(E_{\rm GS}-E_{\text{gap}})}.
\ee 

Using Jensen's inequality, together with standard relations between Schatten norms, one has
\be\label{eq:twonormspectralgap2} 
\overline{\big\|W_{\rm RTN}^\dagger W_{\rm RTN}- \mathbf{1}_{D^{|X_{\min}|}}\big\|_{1}} \;\leq\; \sqrt{D^{|X_{\min}|}\, \overline{\big\|W_{\rm RTN}^\dagger W_{\rm RTN}- \mathbf{1}_{D^{|X_{\min}|}}\big\|^2_{2}}}\;\sim\; \sqrt{N_*}\, e^{\beta(E_{\rm GS}-E_{\text{gap}}/2)}\,.
\ee 
Therefore, $W_{\rm RTN}$ is an isometry on average in the large bond dimension limit, $\beta \to \infty$, provided the Ising model has a spectral gap larger than twice the ground state energy \eqref{eq:RTNconditionisom}. The opposite statement \eqref{eq:RTNconditionisom2} follows from the divergence of \eqref{eq:twonormspectralgap} if $E_{\rm GS}> E_{\rm gap}$. By concentration of measure as $D \to \infty$, the above statements hold for a single $W_{\rm RTN}$ with unit probability.
\end{proof}

Let us emphasize that Theorem \ref{thrm:RTN} does not decide the
intermediate regime $E_{\rm GS}\leq E_{\rm gap}\leq 2E_{\rm GS}$. Our trace-norm isometry bound requires $E_{\rm gap}>2E_{\rm GS}$,
whereas the second-moment argument only proves non-isometricity when
$E_{\rm gap}<E_{\rm GS}$. The gap between these conditions comes from
the factor $D^{|X_{\min}|/2}$ needed to convert the 2-norm estimate
into a 1-norm estimate.

For a local RTN, the condition \eqref{eq:RTNconditionisom} will generally not be satisfied; instead, \eqref{eq:RTNconditionisom2} will be true. The ground state energy $E_{\rm GS}$ is extensive in the number of edges intersected by the minimal cut \eqref{eq:GSE}. By contrast, the gap $E_{\rm gap}$ is non-extensive since the Ising Hamiltonian is local, and its first excited states are obtained by flipping single spins in the ground state configuration. As a result, $E_{\rm gap}$ is controlled by the smallest local expansion of the bulk vertices of the network.\footnote{For a single random tensor with $n$ input and $m$ output legs, the corresponding Ising model has a single spin with $E_{\rm GS} = n$, $E_{\text{gap}} = m-n$, and $N_*=1$. The condition \eqref{eq:RTNconditionisom2} then reproduces the fact that for $m<2n$ the map is non-isometric as $D\to \infty$, as derived in Section~\ref{sec:gaussianRT} (in this case, \eqref{eq:RTNconditionisom} is sufficient but not necessary).}

\begin{corollary}
Let $W_{\rm RTN} : \mathcal{H}(X_{\min}) \to \mathcal{H}(X_{\max})$ denote the Gaussian RTN map defined above. 
For each bulk vertex $v$ adjacent to the minimal-cut domain wall, let $\chi(v)$ be the number of incident edges of $v$ that are not intersected by $X_{\min}$, and let $\chi_{\min}$ be the minimum of $\chi(v)$ over such vertices. If $\chi_{\min}<|X_{\min}|$, then in the $D\to\infty$ limit, the Gaussian RTN map $W_{\rm RTN}$ is non-isometric.
\end{corollary}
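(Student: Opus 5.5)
The plan is to reduce the corollary to the non-isometricity criterion \eqref{eq:RTNconditionisom2} of Theorem~\ref{thrm:RTN}. Concretely, we will show that $E_{\rm gap}\le\chi_{\min}$ by exhibiting an explicit excited Ising configuration. Combined with $E_{\rm GS}=|X_{\min}|$ from \eqref{eq:GSE}, the hypothesis $\chi_{\min}<|X_{\min}|$ then gives $E_{\rm gap}<E_{\rm GS}$. The second half of the theorem yields non-isometricity with unit probability as $D\to\infty$.

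First, recall the ground state. All spins on the $X_{\min}$ side of the minimal cut are $+1$, and all others are $-1$. The energy \eqref{eq:isingH} counts, with weight one each, the bonds and pinned boundary legs whose two ends disagree. This count is exactly the number of legs crossed by the domain wall, so $E_{\rm GS}=|X_{\min}|$.

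Next, take a bulk vertex $v$ adjacent to the domain wall that attains $\chi(v)=\chi_{\min}$. Choose $v$ on the side where its cut edges connect it to spins of the opposite sign. Flipping $s_v$ repairs all cut edges incident on $v$, lowering the energy by $n_{X_{\min}}(v)$. It frustrates the $\chi(v)$ other incident edges, raising the energy by $\chi(v)$. The new configuration is a domain wall pushed across $v$, with energy $|X_{\min}|-n_{X_{\min}}(v)+\chi(v)$.

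This is not by itself an excitation energy, since the energy change $\chi(v)-n_{X_{\min}}(v)$ could be negative or zero. However, the ground state is the minimal cut, so every other configuration has energy at least $E_{\rm GS}$. The flipped configuration is therefore a valid excited configuration, and the first excitation energy obeys $E_{\rm gap}\le \chi(v)-n_{X_{\min}}(v)\le\chi_{\min}$. We expect strict minimality, so this configuration is not degenerate with the ground state. If it were, the gap would be even smaller and the conclusion only stronger.

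The main care needed is in this gap bound. One must check that the flipped configuration is distinct from the ground state and counts among the states entering the low-temperature expansion. A degenerate ground state would change the prefactor, but it still gives divergence of \eqref{eq:twonormspectralgap}. We also need $n_{X_{\min}}(v)\ge 1$ and that $v$ is a bulk vertex rather than a pinned boundary vertex. Pinned legs count as edges in the energy, so the same counting applies either way.

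With $E_{\rm gap}\le\chi_{\min}<|X_{\min}|=E_{\rm GS}$, the two-norm distance \eqref{eq:twonormspectralgap} scales like $N_* D^{E_{\rm GS}-E_{\rm gap}}$ and diverges. By the concentration argument already invoked in Theorem~\ref{thrm:RTN}, $W_{\rm RTN}$ is then non-isometric with unit probability.
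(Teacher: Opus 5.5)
Your proposal is correct and is essentially the paper's argument. You flip a single bulk vertex adjacent to the wall at $X_{\min}$ to get $E_{\rm gap}\le \chi(v)-n_{X_{\min}}(v)\le\chi_{\min}$, combine this with $E_{\rm GS}=|X_{\min}|$, and invoke the non-isometric half of Theorem~\ref{thrm:RTN}; your extra remark that a zero-cost flip (ground-state degeneracy) would still make the average two-norm distance diverge is a harmless refinement that the paper leaves implicit.
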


\begin{proof}
The ground state domain wall lies on $X_{\min}$, so $E_{\rm GS}=|X_{\min}|$. Flipping a vertex adjacent to this wall changes the domain wall energy by $\Delta E(v)=\chi(v)-a(v)\leq \chi(v)$, where $a(v)\geq 0$ is the number of incident edges of $v$ intersected by $X_{\min}$. Hence $E_{\rm gap}\leq \chi_{\min}$. If $\chi_{\min}<|X_{\min}|=E_{\rm GS}$, then $E_{\rm gap}<E_{\rm GS}$, and Theorem~\ref{thrm:RTN} implies non-isometricity.
\end{proof}

Thus, even when the graph admits an everywhere-expanding foliation, local expansion alone does not ensure trace-norm isometricity of the associated RTN map $W_{\rm RTN}$.\footnote{This notion of non-isometricity without having a non-trivial kernel was dubbed ``weak non-isometricity'' in \cite{Antonini:2024yif}.} The obstruction is that the Ising gap is controlled by local excitations, while the ground state energy is extensive in the size of the minimal cut.

For instance, in the RTN shown in Fig.\ \ref{fig:isometricityRTN}, we find $E_{\rm GS} = 2$ and $E_{\rm gap} = 1$, so the condition \eqref{eq:RTNconditionisom2} is satisfied, and the RTN map is asymptotically non-isometric. Again, this conclusion does not truly rely on the Gaussian random nature of the individual tensors; it is also true if the individual tensors are defined in terms of uniform random states or quantum state $2$-designs. 

\subsubsection{A simple example}

In order to illustrate this explicitly, we consider taking the single Gaussian random tensor $R$ from Sec. \ref{sec:gaussianRT}. We can factor this tensor with the identity on some other factor of $N$ qudits and consider the map $W_R = R\otimes \mathbf{1}_{D^N}$ instead. Pictorially, this is the map
\begin{center} 
\includegraphics[width=0.23\linewidth]{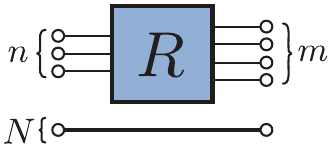} 
\end{center}
where we are using a single thicker line to represent the auxiliary $N$ factors.

Using \eqref{eq:avgonenormR} in this case, we can explicitly compute the $1$-norm distance to isometry 
\be\label{eq:distex}
\overline{\big\|W_R^\dagger W_R- \mathbf{1}\big\|}_{1} = \dfrac{1}{2}\,D^{N + 2n-m}\,\left(1 + O(D^{2(n-m)})\right) \sim \dfrac{1}{2}\,D^{E_{\rm GS} -E_{\rm gap}}\,.
\ee
In the second step, we have translated the result to the $\mathbb{Z}_2$ Ising model quantities, which in this network only includes a single bulk spin. The ground state energy is $E_{\rm GS} = N+n$, while the single excited state ($N_*=1$) has an energy gap $E_{\rm gap} = m-n$. Alternatively, we could evaluate the average $2$-norm squared using \eqref{eq:twonormspectralgap}, which yields the same scaling with $D$.

Therefore, $W_R$ is approximately isometric on average only if $E_{\rm gap}>E_{\rm GS}$, or equivalently if
\be 
m > N+2n\,.
\ee 
For fixed $m,n$ this condition will not be met if $N$ is large enough, even if $m>2n$ so that $R$ is itself approximately isometric. The reason is that generic states in the product Hilbert space are highly entangled, and having them mapped approximately isometrically requires controlling multiplicative errors.

\subsection{$\alpha$-bit restriction}

One way around the non-isometricity of RTNs is to recognize that demanding reconstruction of code subspaces with $O(|X_\A| \log D)$ entropy may simply be too much. Indeed, this kind of limitation might be expected within the framework of approximate (state-independent) quantum error correction \cite{Hayden:2017xed,Hayden:2018khn}.
Instead, we can restrict ourselves to a smaller subspace $\mathcal{H}_\alpha \subset \mathcal{H}(X_{\min})$ with
\be\label{eq:defalpha} 
\alpha :=\dfrac{\log_D |\mathcal{H}_\alpha|}{\log_D |\mathcal{H}_{\min}|}\,.
\ee
We define the {\it $\alpha$-bit restriction} of the RTN map
\be
W_\alpha = W_{\rm RTN} \circ V_\alpha \,,
\ee
where $V_\alpha : \mathcal{H}_\alpha \hookrightarrow \mathcal{H}(X_{\min})$ is an isometric embedding. This map is illustrated in Fig. \ref{fig:RTNalpha} where the green tensor is the isometry $V_\alpha$. We will moreover assume that the $\alpha$-bit subspace $\mathcal{H}_\alpha$ is generic by modeling $V_\alpha$ with a random isometry (in fact, a $2$-design suffices for our purposes).

\begin{figure}[h]
    \centering
    \includegraphics[width=0.4\linewidth]{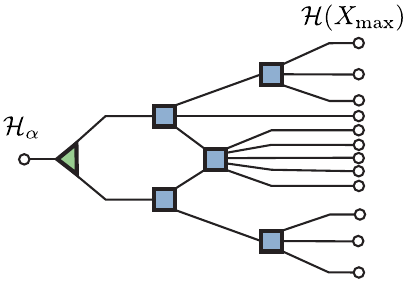}
    \caption{The $\alpha$-bit restriction $W_\alpha:\mathcal{H}_\alpha\rightarrow \mathcal{H}(X_{\max})$ is an approximate isometry, provided $\alpha$ is sufficiently small.  If the RTN has a finer degree of locality close to the minimal cut, we expect $\alpha \ll 1$ (generally $\alpha \lesssim O(1/|X_{\rm min}|)$).}
    \label{fig:RTNalpha}
\end{figure}

Given this, we have the following result:

\begin{corollary}
\label{thrm:alphabit}  
Let $V_\alpha:\mathcal{H}_\alpha\hookrightarrow \mathcal{H}(X_{\min})$ be an isometry drawn from a $2$-design. Consider the $\alpha$-bit restriction $W_\alpha:= W_{\rm RTN} \circ V_\alpha$ of the Gaussian RTN map $W_{\rm RTN}: \mathcal{H}(X_{\min}) \to \mathcal{H}(X_{\max})$, normalized such that  
\be\label{eq:condRTNalpha} 
\overline{W_{\alpha}^\dagger W_{\alpha}} = \mathbf{1}\,.
\ee 
Consider the classical $\mathbb{Z}_2$ Ising Hamiltonian \eqref{eq:isingH} on the extended network, with pinning magnetic field
\be
h_x = \begin{cases}
+1\,, & x\in X_\alpha\,,\\
-1\,, & x\in X_{\max}\,,
\end{cases}
\ee
where $X_\alpha$ is the cut associated with the subspace $\mathcal H_\alpha$, so that $|X_\alpha|=\alpha |X_{\min}|$. The vertex associated with the isometry $V_\alpha$ carries an additional Ising spin. However, by the exact isometricity of $V_\alpha$, flipping this spin while keeping all other spins in the ground state configuration leaves the contribution unchanged. Let $E_{\rm GS}=|X_\alpha|$ be the ground state energy, and let $E_{\rm gap}$ be the gap to the lowest nontrivial excited configuration, namely the lowest excitation not related to the ground state in this trivial way. Then the $\alpha$-bit restriction $W_\alpha$ is asymptotically isometric in trace norm, with probability tending to one as $D\rightarrow \infty$, provided that
\be\label{eq:conditionalpha}
\alpha < \dfrac{E_{\rm gap}}{2|X_{\min}|}\,.
\ee 
\end{corollary}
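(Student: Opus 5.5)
The plan is to rerun the argument of Theorem~\ref{thrm:RTN} on the extended network obtained by attaching the $2$-design vertex $V_\alpha$ to the $X_{\min}$ legs. First, write $W_\alpha^\dagger W_\alpha = |\mathcal H_\alpha|\,\rho_{X_\alpha}$, where $\rho_{X_\alpha}$ is the (unnormalized) reduced state on $X_\alpha$ of the Choi state $\ket{W_\alpha}\in\mathcal H_\alpha\otimes\mathcal H(X_{\max})$. By \eqref{eq:condRTNalpha}, $\rho_{X_\alpha}$ is maximally mixed on average. The identity \eqref{eq:RTNtwonormdistance} then holds with $|X_{\min}|$ replaced by $|X_\alpha|=\alpha|X_{\min}|$:
\be
\overline{\big\|W_\alpha^\dagger W_\alpha-\mathbf 1\big\|_2^2}=D^{|X_\alpha|}\left(D^{|X_\alpha|}\,\overline{\Tr\rho_{X_\alpha}^2}-1\right).
\ee
The average purity factorizes into independent averages over the Gaussian tensors and over $V_\alpha$. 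Because $V_\alpha$ is a $2$-design, its second moment is the same linear combination of identity and swap that a Haar isometry gives. So its vertex contributes an Ising spin with the usual weights, up to $1+O(D^{-1})$ factors coming from the Weingarten normalization. The purity is therefore again a low-temperature Ising partition function at $\beta=\log D$ with the stated pinning field.

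Next, identify the ground state. Pinning $X_\alpha$ to $+1$ and $X_{\max}$ to $-1$, the cheapest domain wall cuts the $X_\alpha$ legs. Any wall further out costs at least $|X_{\min}|>|X_\alpha|$ whenever $\alpha<1$, and this is implied by \eqref{eq:conditionalpha} as long as $E_{\rm gap}\le 2|X_{\min}|$; otherwise the statement is even easier. So $E_{\rm GS}=|X_\alpha|$. The configuration in which the $V_\alpha$ spin is flipped and the wall sits at $X_{\min}$ must be treated separately. Its bare energy is $|X_{\min}|$, but the exact isometry $V_\alpha^\dagger V_\alpha=\mathbf 1$ makes its weight coincide with the ground state contribution, as asserted in the statement. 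I would verify this by contracting the identity/swap insertion on the $V_\alpha$ vertex directly. These configurations only renormalize the leading term, and that renormalization is fixed by \eqref{eq:condRTNalpha}, so they cancel against the $-1$ in the displayed formula.

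Expanding as before then gives
\be
\overline{\|W_\alpha^\dagger W_\alpha-\mathbf 1\|_2^2}\sim N_*\,D^{E_{\rm GS}-E_{\rm gap}},
\ee
and Jensen's inequality with $\|X\|_1\le\sqrt{\dim}\,\|X\|_2$ gives
\be
\overline{\|W_\alpha^\dagger W_\alpha-\mathbf 1\|_1}\lesssim\sqrt{N_*}\,D^{|X_\alpha|-E_{\rm gap}/2}.
\ee
This vanishes exactly when $\alpha|X_{\min}|<E_{\rm gap}/2$, which is \eqref{eq:conditionalpha}. Markov's inequality then upgrades the average bound to a bound holding with probability tending to one.

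The main obstacle is the bookkeeping of the ``trivial'' excitations associated with the $V_\alpha$ spin. One has to show rigorously that every configuration differing from the ground state only by that spin (together with compatible shifts of the wall to $X_{\min}$) contributes exactly the leading weight and nothing else. One also has to show that the remaining configurations are suppressed by at least $D^{-E_{\rm gap}}$, with a $D$-independent multiplicity $N_*$. For this I would use the $2$-design property to reduce the $V_\alpha$ vertex to the Haar identity/swap calculus, and treat the $1/D$ Weingarten corrections as subleading relative to the gap.
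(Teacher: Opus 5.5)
Your proposal is correct and follows the paper's route: the corollary is Theorem~\ref{thrm:RTN} rerun on the network extended by the $V_\alpha$ vertex with $E_{\rm GS}=|X_\alpha|$, so the trace-norm bound $\sqrt{N_*}\,D^{E_{\rm GS}-E_{\rm gap}/2}$ vanishes exactly under \eqref{eq:conditionalpha}. Using Markov's inequality for the high-probability statement is a cleaner step than the paper's appeal to concentration of measure. One correction for the direct contraction you plan at the $V_\alpha$ vertex: the negative Weingarten cross term there is not a $1+O(D^{-1})$ factor. It is exactly what cancels the bare weight $\sim D^{2|X_\alpha|-|X_{\min}|}$ of the flipped-$V_\alpha$ configuration; without it that weight would diverge for $\alpha>1/2$. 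With it, $V_\alpha^\dagger V_\alpha=\mathbf 1$ and \eqref{eq:condRTNalpha} make the ground-state sector of $\overline{\mathrm{Tr}\,(W_\alpha^\dagger W_\alpha)^2}$ equal exactly $D^{|X_\alpha|}$, which cancels the subtracted term.
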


For a fixed isometry $V_\alpha$, one can prove similar results provided $V_\alpha$ is sufficiently entangling. In particular, the average squared $2$-norm distance to isometry is given by the partition function of a $\mathbb{Z}_2$ Ising model on the original network. The modification with \eqref{eq:isingH} is that now the Ising Hamiltonian contains a non-local boundary term replacing the pinning magnetic field for the vertices at $X_{\min}$. This boundary condition penalizes configurations by an energy given by the second R\'{e}nyi entropies of the state $|V_\alpha\rangle$ across different bipartitions. See Sec.~\ref{sec:RTTPTN} for similar stat mech models.

For the Ising model associated with the TN of Fig.~\ref{fig:RTNalpha}, the lowest nontrivial excitation is obtained by flipping the spins at $V_\alpha$ and at the bottom tensor in the first layer, which gives $E_{\rm gap}=3-2\alpha$. Since in this example $|X_{\min}|=2$, condition \eqref{eq:conditionalpha} implies that the map is isometric in trace distance as $D\to\infty$ whenever $\alpha<1/2$.

\subsection{Closest isometry and complexity}

Given the approximate isometry $W_\alpha$, we now want to explicitly find the isometry that best approximates it. This isometry is obtained from the polar decomposition of $W_\alpha$.

\begin{lemma}
Let $W: \mathcal{H}_A \to \mathcal{H}_B$ be a full-rank linear map, where $|\mathcal{H}_A| = D^n$ and $|\mathcal{H}_B| = D^m$ with $m \geq n$. The isometry that minimizes the distance to $W$,  
\be\label{eq:isomTR}
\min_{T\,{\rm isom}} \|W - T\|_{2}\,,
\ee
is the isometry $T_W: \mathcal{H}_A \hookrightarrow \mathcal{H}_B$ appearing in the polar decomposition
\be
T_W = W(W^\dagger W)^{-1/2}\,,
\ee
Moreover, defining the Hermitian matrix $\Sigma$ by $(\mathbf{1}_{D^n}+ \Sigma)^2 = W^\dagger W$, the minimum distance to the isometry is determined by the spread of the singular values,
\be
\big\|W - T_W\big\|_{2} = \sqrt{{\rm Tr}\!\left(\Sigma^2\right)}\,.
\ee

\end{lemma}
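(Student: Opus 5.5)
The plan is to reduce the Frobenius-norm problem to a matrix-nearness problem via the singular value decomposition, since $\|\cdot\|_2$ is unitarily invariant. Write $W = U\,\Lambda\, Q^\dagger$, where $U:\mathbb{C}^{D^n}\hookrightarrow \mathcal{H}_B$ is an isometry, $Q$ is a unitary on $\mathcal{H}_A$, and $\Lambda=\mathrm{diag}(\sigma_1,\dots,\sigma_{D^n})$ with $\sigma_i>0$ by the full-rank assumption. Then $W^\dagger W = Q\Lambda^2 Q^\dagger$ is invertible. Hence $(W^\dagger W)^{-1/2}=Q\Lambda^{-1}Q^\dagger$ is well defined, and
\[
T_W = W(W^\dagger W)^{-1/2}=UQ^\dagger .
\]
This is manifestly an isometry, since $T_W^\dagger T_W = QU^\dagger U Q^\dagger=\mathbf{1}_{D^n}$. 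We also have $W^\dagger W=(\mathbf{1}+\Sigma)^2$ with $\mathbf{1}+\Sigma := (W^\dagger W)^{1/2}=Q\Lambda Q^\dagger$. This is the unique positive root, which fixes the choice of $\Sigma$ (Hermitian, with eigenvalues $\sigma_i-1$).

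Next, expand the objective for an arbitrary isometry $T$:
\[
\|W-T\|_2^2=\mathrm{Tr}(W^\dagger W)+\mathrm{Tr}(T^\dagger T)-2\,\mathrm{Re}\,\mathrm{Tr}(T^\dagger W)=\sum_i\sigma_i^2 + D^n - 2\,\mathrm{Re}\,\mathrm{Tr}(T^\dagger W).
\]
Minimizing the distance is therefore equivalent to maximizing $\mathrm{Re}\,\mathrm{Tr}(T^\dagger W)$ over isometries. Using the SVD,
\[
\mathrm{Tr}(T^\dagger W)=\mathrm{Tr}(Q^\dagger T^\dagger U\Lambda)=\sum_i \sigma_i\, M_{ii}, \qquad M:=Q^\dagger T^\dagger U .
\]
Here $M$ is a $D^n\times D^n$ contraction, being a product of a co-isometry, an isometry, and a unitary, so $\|M\|_\infty\le 1$ and $|M_{ii}|\le 1$. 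Hence $\mathrm{Re}\,\mathrm{Tr}(T^\dagger W)\le\sum_i\sigma_i$, with equality when $M=\mathbf{1}$, i.e. when $T=UQ^\dagger=T_W$. This step is the crux and is essentially von Neumann's trace inequality specialized to one factor being a contraction.

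For uniqueness, note that equality requires $M_{ii}=1$ for all $i$ because $\sigma_i>0$. A contraction with unit diagonal must equal the identity, since each row has norm at most one. Thus $Q^\dagger T^\dagger U=\mathbf{1}$, so $T^\dagger U = Q$ and $T^\dagger$ maps $\mathrm{ran}\,U$ unitarily. Because $T$ is an isometry, $T Q^\dagger$ lies in the range of $U$, and we conclude $T=UQ^\dagger$. The main point needing care is exactly this argument that the maximizer is unique and lies in $\mathrm{ran}\,U$; it uses full rank and $m\ge n$.

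Finally, substitute the optimum:
\[
\|W-T_W\|_2^2=\sum_i\sigma_i^2+D^n-2\sum_i\sigma_i=\sum_i(\sigma_i-1)^2=\mathrm{Tr}(\Sigma^2).
\]
Equivalently, $W-T_W = U(\Lambda-\mathbf{1})Q^\dagger = T_W\Sigma$, and its Frobenius norm is $\sqrt{\mathrm{Tr}(\Sigma^2)}$, which gives the claimed formula.
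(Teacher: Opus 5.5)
Your proof is correct, but it takes a different route from the paper.

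The paper argues variationally. It adjoins a Hermitian Lagrange multiplier $\Sigma$ for the constraint $T^\dagger T=\mathbf{1}_{D^n}$, and stationarity in $T^\dagger$ gives $T(\mathbf{1}_{D^n}+\Sigma)=W$. Hence $(\mathbf{1}_{D^n}+\Sigma)^2=W^\dagger W$, full rank yields $T_W=W(W^\dagger W)^{-1/2}$, and substituting back gives the distance. That route is short and explains why $\Sigma$ appears as a multiplier, but it only identifies critical points. Every Hermitian square root of $W^\dagger W$ (flip the sign of any eigenvalue) produces a stationary pair $(T,\Sigma)$. The paper implicitly picks the positive root and never shows that this choice is the global minimum, or that the minimizer is unique.

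Your argument bounds $\mathrm{Re}\,\mathrm{Tr}(M\Lambda)\le\mathrm{Tr}\,\Lambda$ for the contraction $M=Q^\dagger T^\dagger U$. This is the Frobenius case of the von Neumann/Fan--Hoffman reasoning the paper only cites in its next lemma, and it establishes global optimality directly. Your unit-diagonal-contraction step then turns full rank into uniqueness. You also rightly fix $\Sigma$ through the positive root $(W^\dagger W)^{1/2}$; the formula $\|W-T_W\|_2=\sqrt{\mathrm{Tr}(\Sigma^2)}$ fails for the other Hermitian roots allowed by the statement as written.

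One phrasing point in the uniqueness step could be sharpened. From $T^\dagger U=Q$ you get that $U^\dagger T=Q^\dagger$ is unitary. Hence $\|UU^\dagger Tx\|=\|x\|=\|Tx\|$, so $Tx\in\mathrm{ran}\,U$, and therefore $T=UU^\dagger T=UQ^\dagger$.
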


\begin{proof}
The restricted minimization \eqref{eq:isomTR} can be reformulated as the unconstrained problem defined by the Lagrangian for the square distance
\begin{align}
\mathcal{L}(T,\Sigma) &= \|W - T\|^2_{2} - \text{Tr}\!\left(\Sigma(T^\dagger T - \mathbf{1}_{D^n})\right) \nonumber \\[.2cm]
&= \text{Tr}\!\left(W^\dagger W + T^\dagger T - W^\dagger T - T^\dagger W + \Sigma T^\dagger T - \Sigma \right), \label{eq:lagrangiandistance}
\end{align}
where $\Sigma$ is a Hermitian matrix of Lagrange multipliers enforcing the isometry constraint on $T$.

Stationarity requires
\be\label{eq:minimizationEL}
\dfrac{\delta \mathcal{L}}{\delta T^\dagger } = 0 \quad \Rightarrow \quad T(\mathbf{1}_{D^n}+\Sigma) = W\,,
\ee
which is the polar decomposition of $W$. A way to see it is $(\mathbf{1}_{D^n}+\Sigma)^2 = W^\dagger W$ since $T$ is an isometry and $\Sigma$ is Hermitian. Because the matrix $W^\dagger W$ is full rank\footnote{For a generic tensor $W$ drawn from a smooth probability distribution, this is the relevant case since non–full rank tensors form measure-zero subsets.}, $T_W = W(W^\dagger W)^{-1/2}$. Replacing this in \eqref{eq:lagrangiandistance} leads to $\big\|W - T_W\big\|_{2} = \sqrt{{\rm Tr}\!\left(\Sigma^2\right)}$.
\end{proof}

An alternative way to obtain the polar decomposition is to start from the singular value decomposition (SVD), which allows us to write $W = X \Lambda U^\dagger$ with $U^\dagger:\mathcal{H}_A \to \mathcal{H}_A$, a diagonal square matrix $\Lambda=\mathrm{diag}(\lambda_1,\dots,\lambda_{D^n})\succ0$ containing the singular values of $W$, and $X:\mathcal{H}_A \hookrightarrow \mathcal{H}_B$ an isometry. Comparing with \eqref{eq:minimizationEL}, we obtain
\be
T_W = X U^\dagger\,, \qquad \mathbf{1}_{D^n} + \Sigma = U \Lambda U^\dagger\,.
\ee

The following lemma extends the Frobenius case to all Schatten $p$-norms (indeed, to any unitarily invariant norm).

\begin{lemma}
Let $W: \mathcal{H}_A \to \mathcal{H}_B$ be a full-rank linear map, where $|\mathcal{H}_A| = D^n$ and $|\mathcal{H}_B| = D^m$ with $m \geq n$. For any Schatten $p$-norm,
\be
\min_{T\,\mathrm{isom.}}\|W-T\|_p
=\big\|(W^\dagger W)^{1/2}-\mathbf 1_{D^n}\big\|_p
=\|\Sigma\|_p,
\ee
and the minimizer is again $T_W=W(W^\dagger W)^{-1/2}$.
\end{lemma}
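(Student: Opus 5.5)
The plan is to reduce to a statement about singular values, using the SVD $W = X\Lambda U^\dagger$ introduced just before the statement. For any isometry $T:\mathcal H_A\hookrightarrow\mathcal H_B$, compare the singular values of $W$ with those of $T$ (all equal to one), and invoke a Mirsky-type inequality for unitarily invariant norms: for any two operators $A,B$ of the same shape,
\be
\|A-B\|_p\;\geq\;\big\|\mathrm{diag}(\sigma_i(A)-\sigma_i(B))\big\|_p\,,
\ee
where the singular values on both sides are listed in decreasing order. With $A=W$ and $B=T$, the right-hand side becomes $\|\mathrm{diag}(\lambda_i-1)\|_p=\|\Lambda-\mathbf 1_{D^n}\|_p=\|U(\Lambda-\mathbf 1)U^\dagger\|_p=\|(W^\dagger W)^{1/2}-\mathbf 1_{D^n}\|_p=\|\Sigma\|_p$. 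Here we use $(W^\dagger W)^{1/2}=U\Lambda U^\dagger=\mathbf 1+\Sigma$, which is the identification already made after the previous lemma. This gives the lower bound for every isometry $T$.

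Next, show that $T_W$ attains the bound. Since $T_W=XU^\dagger$, we have $W-T_W=X(\Lambda-\mathbf 1)U^\dagger$. Because $X$ is an isometry and $U$ is unitary, the singular values of $W-T_W$ are exactly $|\lambda_i-1|$, so $\|W-T_W\|_p=\|\Lambda-\mathbf 1\|_p=\|\Sigma\|_p$. Together with the lower bound, this proves the equality and identifies $T_W$ as a minimizer. Full rank ensures $(W^\dagger W)^{-1/2}$ exists, so $T_W=W(W^\dagger W)^{-1/2}$ is well defined. The same argument works verbatim for any unitarily invariant norm, since the Mirsky inequality holds for all of them.

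The main obstacle is justifying the Mirsky inequality itself. I would cite it as a standard consequence of Lidskii/Ky Fan majorization: the vector $|\sigma(A)-\sigma(B)|$ is weakly majorized by $\sigma(A-B)$. To make the argument self-contained in the rectangular case, I would pass to the Hermitian dilation
\be
\begin{pmatrix}0&A\\ A^\dagger&0\end{pmatrix}\,,
\ee
whose eigenvalues are $\pm\sigma_i(A)$ together with zeros. I would then apply Weyl/Lidskii eigenvalue perturbation to $A$ and $B$, and use that every symmetric gauge function is monotone under weak majorization. For $p=2$ this reduces to the previous lemma; equivalently, $\mathrm{Re}\,\Tr(W^\dagger T)\leq\Tr\Lambda$ by von Neumann's trace inequality. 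Hence the present statement is the natural generalization of that lemma.
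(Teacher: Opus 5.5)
Your proposal is correct and follows the paper's route. The paper simply cites the rectangular Fan--Hoffman theorem and then computes $\|W-T_W\|_p=\|X(\Lambda-\mathbf 1)U^\dagger\|_p=\|\Sigma\|_p$ from the SVD, exactly as in your attainment step; the only difference is that you prove Fan--Hoffman yourself, via Mirsky's inequality from the Hermitian dilation, Lidskii majorization and monotonicity of symmetric gauge functions, and you rightly claim only that $T_W$ is \emph{a} minimizer, since for $p=1$ or $p=\infty$ the minimizer need not be unique.
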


\begin{proof}
This is the rectangular Fan-Hoffman theorem: for any unitarily invariant norm, the closest isometry to a full-rank rectangular matrix is its polar factor. Thus, if $W=X\Lambda U^\dagger$ is the SVD, the minimizer is $T_W=XU^\dagger=W(W^\dagger W)^{-1/2}$. Therefore
\be
\min_{T\,\mathrm{isom.}}\|W-T\|_p
=
\|W-T_W\|_p
=
\|\Lambda-\mathbf 1_{D^n}\|_p
=
\|(W^\dagger W)^{1/2}-\mathbf 1_{D^n}\|_p
=
\|\Sigma\|_p .
\ee
\end{proof}

In general, the image of the $\alpha$-bit encoding $V_\alpha$ is a highly entangled subspace of $\mathcal{H}(X_{\min})$. Consequently, even if the map $W$ has an everywhere-expanding RTN representation, the corresponding isometry $T_{W_\alpha}$ that best approximates $W_\alpha$ cannot be obtained merely by replacing each tensor with a local isometry. Thus, in the circuit model of quantum computation where local gates are regarded as “simple”, there is no reason for the isometry $T_{W_\alpha}$ to be simple. On the contrary, generally, we expect this isometry to be maximally complex, of complexity (see Appendix \ref{app:overestimate})
\be 
 \Co(T_{W_\alpha}) \sim  D^{\alpha |X_{\min}| + |X_{\max}|} \,,
\ee 
Thus, under the natural assumption that $T_{W_\alpha}$ is typical, the RTN does not satisfy computational covariance: its optimal isometric approximation is not expected to be simple.\footnote{The python's lunch model $V_{\rm RTN}$ might still admit a simple factorization if the tensors are chosen to be exact random isometries in some given foliation, like the case of a randomly chosen planar MERA TN in Fig. \ref{fig:MERApython}. However, any other foliation will generally fail to define a simple factorization, at least within the best approximation of the RTN map described here.}

\subsubsection{A simple example}
\label{sec:simpleexample2}

To see that $T_{W_\alpha}$ will, in fact, not have low complexity, consider a simple example: a Gaussian random map $R: \mathcal{H}_n \rightarrow \mathcal{H}_m$ of the form \eqref{eq:gaussrandmap} with $m>2n$. In this case, $R$ will be isometric in the large bond dimension limit. Now, as above, include an additional $N$ qudits and consider the map $R \otimes \mathbf{1}_{D^N}$, which will cease to be isometric for $N>m-2n$. Define the $\alpha$-bit restricted isometry $W_R: \mathcal{H}_\alpha \rightarrow \mathcal{H}_n \otimes \mathcal{H}_N$ of the form
\be
W_R = (R \otimes \mathbf{1}_{D^N}) V_\alpha \,.
\ee
Here $V_\alpha:  \mathcal{H}_\alpha \hookrightarrow \mathcal{H}_n\otimes  \mathcal{H}_N$ is an isometry and $\mathcal{H}_\alpha$ is some subspace of dimension $D^{\alpha(n+N)}$ according to the definition of $\alpha$ in \eqref{eq:defalpha}. In particular, applying Corollary \ref{thrm:alphabit}, and assuming $N({1-\alpha})>n {\alpha}$, we get
\be
\alpha < \dfrac{n+m}{2(n+N)}\quad\Rightarrow\quad W_\alpha \;\; {\rm isometric\;for}\;D\to \infty\,.
\ee Pictorially, this map is represented on the left:
\begin{center}
\includegraphics[width=0.7\linewidth]{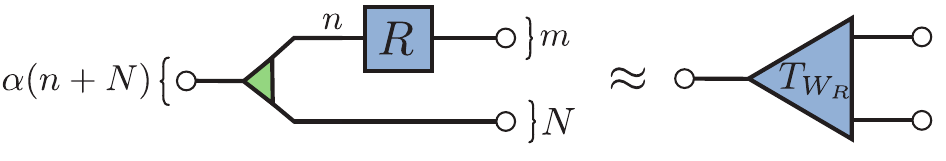}
\end{center}

This map is best approximated by the isometry $T_{W_R}$ represented on the right. Recall that
\be
T_{W_R}=(R\otimes \mathbf 1_{D^N})V_\alpha\,K_\alpha^{-1/2},
\ee
where
\be
K_\alpha:=V_\alpha^\dagger (R^\dagger R\otimes \mathbf 1_{D^N})V_\alpha \,.
\ee

The question is whether the complexity of this isometry is maximal or additive in the components of the original tensor network
\be
\mathcal{C}(T_{W_R})
\;\stackrel{?}{\sim}\;
\begin{cases}
D^{\,m+\alpha n +(1+\alpha)N}\, & \text{if it is maximal,}\\[4pt]
D^{\,m+n}+D^{\,(1+\alpha)(n+N)}\, & \text{if it is additive.}
\end{cases}
\ee

A simple way to understand why $T_{W_R}$ should have maximal complexity is the following. If the isometry $V_\alpha$ is sufficiently scrambling ($2$-design), it will map a simple basis of
$\mathcal H_\alpha$ to a collection of highly entangled, in particular nearly maximally entangled, states between the two factors
$\mathcal H_n\otimes\mathcal H_N$. Acting with $R\otimes \mathbf 1_{D^N}$ on the first factor then sends each such state to a generic state in $\mathcal H_m\otimes\mathcal H_N$. Since $R$ is Gaussian random, these output states are themselves generically random and highly entangled. Therefore, the $\alpha$-bit restricted map $W_R$ sends a simple basis of $\mathcal H_\alpha$ to a set of essentially random states in the output Hilbert space. The closest exact isometry $T_{W_R}$ must reproduce this action on the whole code subspace, so it behaves like a generic isometry rather than a composition of simple local pieces. Its complexity is therefore expected to be of order the maximal complexity allowed by the input and output dimensions.

More precisely, writing the polar decomposition for the individual tensor
\be
R=T_R(R^\dagger R)^{1/2} \,,
\ee
one does not obtain $T_{W_R}$ by simply replacing $R$ with $T_R$. The difference from $(T_R\otimes \mathbf 1_{D^N})V_\alpha$ is controlled by
the operator $K_\alpha$. Only if $K_\alpha\propto \mathbf 1_{D^{\alpha(n+N)}}$,
equivalently if $R\otimes \mathbf 1_{D^N}$ already acts as an approximate isometry
on the image of $V_\alpha$, does the restricted polar decomposition reduce to
that of $R$. In general, however, the code subspace is not aligned with the
singular directions of $R\otimes \mathbf 1_{D^N}$, and in fact is maximally entangled between the two factors, so $K_\alpha$ is nontrivial.

Therefore, we generally expect
\be 
\mathcal{C}(T_{W_R}) \sim  D^{m+\alpha n + (1+\alpha) N}\,,
\ee 
which grows with the {\it total} input and output dimensions.

This illustrates that the local structure of the TN defining the approximate isometric map offers no advantage because the states propagating through the network become highly entangled, forcing the best approximating isometry to be inherently nonlocal and generic.

\subsection{Simpler ways to implement RTNs}

There are, however, simpler ways to implement RTNs using isometries. In this subsection we describe two explicit methods.

{\bf Local postselection.} The first method is very general. For concreteness, one may consider the approximately isometric RTN map $W_\alpha$ discussed above, although the construction does not rely on approximate isometricity. The goal is not to construct the polar isometry $T_{W_\alpha}$, but rather to implement the map $W_\alpha$ itself by enlarging the Hilbert space, using isometries, and postselecting on simple measurement outcomes.\footnote{See \cite{Malz:2024val,Harley:2025gjs} for related methods in the context of PEPS and fault-tolerant quantum computation.} 

The basic idea can be understood with a simple example (where, in fact, the RTN is not close to an isometry in trace distance). Consider a single Gaussian random tensor $R:\mathbb C^{D^n}\to \mathbb C^{D^n}$, normalized so that
\be
\langle \psi|R^\dagger R|\psi\rangle\le 1
\ee
for every $|\psi\rangle\in\mathbb C^{D^n}$. Now define the quantum channel
\be
\Phi_R(\rho)
=
R\rho R^\dagger
+
\left(\sqrt{1-R^\dagger R}\right)\rho\left(\sqrt{1-R^\dagger R}\right).
\ee
A Stinespring dilation is then provided by a unitary
\be
V_R:\mathbb C^{D^n}\otimes\mathbb C^2\to \mathbb C^{D^n}\otimes\mathbb C^2
\ee
satisfying
\be
V_R|\psi\rangle|0\rangle
=
R|\psi\rangle|0\rangle
+
\sqrt{1-R^\dagger R}\,|\psi\rangle|1\rangle.
\ee
Postselecting the ancilla on the outcome $|0\rangle$ therefore implements $R$ on $|\psi\rangle$. The corresponding success probability is
\be
p(0)=\langle\psi|R^\dagger R|\psi\rangle.
\ee
Conditioned on obtaining the ancilla outcome $\ket{0}$, this implements $R$ on $\ket{\psi}$. If one has access to fresh preparations of the input state, the expected number of trials is $p(0)^{-1}$. For an unknown input state in a coherent algorithm, however, a failed measurement disturbs the input, so one cannot literally repeat the procedure unless the computation is restarted or the postselection is implemented coherently, for example, by
amplitude amplification.

For a Gaussian random map $R$ with normalization \eqref{eq:normcondition}, the operator norm $\|R\|_\infty^2=\lambda_{\max}(R^\dagger R)$ concentrates near $4$ at large $D$.
For a Gaussian random map $R$ with normalization \eqref{eq:normcondition}, the operator norm $\|R\|_\infty^2=\lambda_{\max}(R^\dagger R)$ concentrates near $4$ at large $D$. We therefore rescale $R\to R/2$, so that $R^\dagger R\le 1$, up to finite-$D$ tail effects that we ignore. With this normalization, the ensemble-averaged success probability is
\be
\overline{p(0)}=\frac14.
\ee
Thus, one may implement $R$ by adding an ancilla initialized in $|0\rangle$, applying $V_R$, measuring the ancilla, and repeating until the outcome $|0\rangle$ is obtained. On average, this takes about
\be
\left(\overline{p(0)}\right)^{-1}=4
\ee
attempts. One may also use amplitude amplification to reduce the number of applications.

The extension to a rectangular map $R:\mathbb C^{D^n}\to \mathbb C^{D^m}$ is straightforward. When $m\ge n$, the conclusion is essentially unchanged: $R$ can again be implemented by postselecting a single ancilla qubit, with an expected $O(1)$ number of repetitions for a Gaussian random map normalized as above. When $m<n$, by contrast, one must postselect $n-m$ qudits. The success probability is then $O(D^{\,m-n})$, so the expected number of repetitions grows as $O(D^{\,n-m})$, with a possible square-root improvement from amplitude amplification.

We now apply this idea to the full RTN map $W_\alpha:\mathcal H_\alpha\to\mathcal H(X_{\max})$ by following a chosen foliation
\be
X_0\to X_1\to \cdots \to X_T,
\qquad
X_0=X_\alpha,\quad X_T=X_{\max},
\ee
and implementing the network one tensor at a time. The discussion above shows that any tensor whose number of input legs is less than or equal to its number of output legs can be implemented by a Stinespring dilation followed by repeated measurement of a single ancilla qubit until the desired outcome is obtained. This way of implementing the RTN map is illustrated in Fig. \ref{fig:RTNalphaunit}. The precise number of repetitions depends on the tensor, but will typically be $O(1)$.

\begin{figure}[h]
    \centering
    \includegraphics[width=0.45\linewidth]{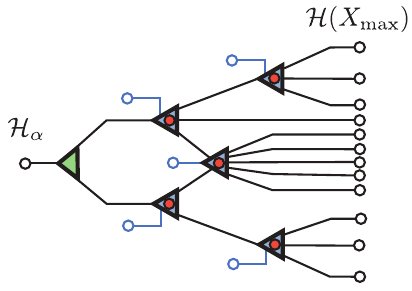}
    \caption{Implementation of the map of Fig. \ref{fig:RTNalpha} with isometries and local qubit postselection. The red circles represent output ancillas postselected into the $\ket{0}$ state and the triangular tensors correspond to the Stinespring dilations of the quantum channels associated with each tensor.}
    \label{fig:RTNalphaunit}
\end{figure}

Assume now that the chosen foliation $\gamma$ is everywhere non-contracting, so that every tensor encountered along the way has at least as many output legs as input legs. Let $i=1,\ldots,K$ label the tensors, and let $r_i=O(1)$ denote the inverse success probability associated with the postselection implementing the corresponding dilation $V_{R_i}$.

If a failure at any step forces one to restart the entire postselected implementation, the expected cost is bounded by
\be\label{eq:complexityexpander}
\Co(W_{\alpha})
\lesssim
\left(\prod_{i=1}^K r_i\right)\sum_{i=1}^K \Co(V_i).
\ee
This is the conservative estimate appropriate for a fully postselected implementation. More efficient coherent implementations may be possible if the postselections can be amplitude-amplified.

Note that the task here is different from the reconstruction task considered throughout the paper. Before, we wanted to find an exact isometry $T_{W_\alpha}$ that best approximates the RTN map $W_\alpha$. Here we instead implement the generally non-isometric map $W_\alpha$ itself, using ancillas and postselection. The former task is expected to be exponentially more complex, with $\mathcal{C}(T_{W_\alpha})\gg \Co(W_{\alpha})$ as $D\to \infty$. In particular, $W_\alpha$ need not be close to an isometry in trace norm, or even in operator norm, for the postselected implementation to have small postselection cost.

{\bf Local isometric replacement.} The second method applies only in restricted situations. Consider the approximately isometric TN map $W_\alpha:\mathcal H_\alpha \to \mathcal H(X_{\max})$ defined in this section, and suppose there exists an everywhere-expanding foliation $\gamma$ such that the slices expand locally fast enough, with each tensor having significantly more outputs than inputs, making all individual tensors approximate isometries. The idea is to replace each local tensor with its isometric counterpart. Although this will not generally be the isometry that best approximates the full map $W_\alpha$, it still defines an isometry that becomes asymptotically close to $W_\alpha$ in the $D\to\infty$ limit. We now make this precise.

\begin{lemma}[Telescoping bound in trace norm]
Let $W_\alpha$ and $\widetilde W_\alpha$ be two tensor networks that differ only by replacing, at each vertex $i=1,\dots,K$, the local tensor $R_i$ by another tensor $T_i$. Then
\be
\|W_\alpha-\widetilde W_\alpha\|_1
\;\le\;
\sum_{i=1}^K
\|\mathcal E_i\|_{1\leftarrow 1}\,
\|R_i-T_i\|_1,
\ee
where $\mathcal E_i$ is the linear map obtained by contracting the network with all tensors held fixed except at vertex $i$, and
\be
\|\mathcal E_i\|_{1\leftarrow 1}
:=
\sup_{y\neq 0}\frac{\|\mathcal E_i(y)\|_1}{\|y\|_1}
\ee
is the induced trace norm.

In particular, if
\be
\|R_i-T_i\|_1\le \varepsilon
\qquad\text{and}\qquad
\|\mathcal E_i\|_{1\leftarrow 1}\le L
\ee
for all $i$, then
\be
\|W_\alpha-\widetilde W_\alpha\|_1
\le K\,L\,\varepsilon.
\ee
\end{lemma}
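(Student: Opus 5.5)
The plan is a standard hybrid (telescoping) argument. Order the vertices $i=1,\dots,K$ and define interpolating networks $W^{(j)}$, $j=0,\dots,K$, in which the tensors at vertices $1,\dots,j$ have been replaced by $T_1,\dots,T_j$, while those at $j+1,\dots,K$ are still the original $R_{j+1},\dots,R_K$. Then $W^{(0)}=W_\alpha$ and $W^{(K)}=\widetilde W_\alpha$, and
\be
W_\alpha-\widetilde W_\alpha=\sum_{j=1}^K\left(W^{(j-1)}-W^{(j)}\right).
\ee
The triangle inequality for the trace norm reduces the claim to bounding each consecutive difference.

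Each consecutive difference isolates a single vertex. The networks $W^{(j-1)}$ and $W^{(j)}$ differ only at vertex $j$, where one carries $R_j$ and the other $T_j$. Contraction of a tensor network is multilinear in the local tensors, so
\be
W^{(j-1)}-W^{(j)}=\mathcal E_j(R_j-T_j).
\ee
Here $\mathcal E_j$ is the linear map that inserts an arbitrary tensor at vertex $j$ and contracts it with the rest of the network. For this step, the environment of vertex $j$ is the hybrid one: $T$'s at vertices before $j$ and $R$'s after. I will read ``all tensors held fixed except at vertex $i$'' in the statement in this hybrid sense, which is the standard convention for such bounds. Alternatively, one can note that the uniform hypothesis $\|\mathcal E_i\|_{1\leftarrow1}\le L$ is meant to cover whichever environment appears. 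By the definition of the induced norm,
\be
\|\mathcal E_j(R_j-T_j)\|_1\le\|\mathcal E_j\|_{1\leftarrow1}\,\|R_j-T_j\|_1 .
\ee
Here the local tensor is viewed as an operator or vector in the trace-norm space on which $\mathcal E_j$ acts. Summing over $j$ gives the first inequality, and substituting $\|R_i-T_i\|_1\le\varepsilon$ and $\|\mathcal E_i\|_{1\leftarrow1}\le L$ gives $K L\varepsilon$.

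The main obstacle is interpretive rather than computational. One must specify precisely which environment $\mathcal E_i$ denotes, and on which space $\|\cdot\|_1$ is taken for the local tensor $R_i-T_i$, viewed as a map from its input legs to its output legs. I will fix the hybrid environment convention explicitly. I will also remark that if a fixed environment built only from the $R$'s, or only from the $T$'s, is preferred, the same telescoping still works provided the bound $L$ holds for all the hybrid environments. This is the form in which the lemma is used downstream: the replaced tensors are near-isometries, so the environments have $O(1)$ induced norm in the regime of the everywhere-expanding foliation.
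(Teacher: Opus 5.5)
Your proposal is correct and follows essentially the same route as the paper: hybrid networks $W^{(i)}$ with $T_1,\dots,T_i,R_{i+1},\dots,R_K$, multilinearity giving $W^{(i-1)}-W^{(i)}=\mathcal E_i(R_i-T_i)$, then the triangle inequality and the induced norm. You are also right that the paper's $\mathcal E_i$ is implicitly the hybrid environment, so the uniform bound $L$ has to hold for those environments; the lemma's wording (``all tensors held fixed except at vertex $i$'') leaves this unstated.
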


\begin{proof}
Define the intermediate networks
\be
W^{(i)}
=
N(T_1,\dots,T_i,R_{i+1},\dots,R_K),
\ee
so that $W^{(0)}=W_\alpha$ and $W^{(K)}=\widetilde W_\alpha$. By linearity of the network in each tensor component, for each $i$ there exists a linear map $\mathcal E_i$ such that
\be
W^{(i-1)}-W^{(i)}=\mathcal E_i(R_i-T_i).
\ee
Summing over $i$ gives
\be
W_\alpha-\widetilde W_\alpha
=
\sum_{i=1}^K \mathcal E_i(R_i-T_i).
\ee
The bound then follows from the triangle inequality and the definition of the induced norm.
\end{proof}

Applying the estimates from this section, the local replacement error at vertex $i$ scales as
\be
\|R_i-T_i\|_1 \lesssim D^{-a_i},
\ee
where $a_i>0$ is the local expansion exponent associated with the foliation. To control the corresponding environment map, we further assume that the partial contraction obtained by fixing all tensors except at vertex $i$ does not amplify the operator norm by more than an $O(1)$ factor,
\be
\|\mathcal E_i(Y)\|_\infty \lesssim \|Y\|_\infty .
\ee
Under this assumption, one has
\be
\|\mathcal E_i(Y)\|_1
\le
\rank(\mathcal E_i(Y))\,\|\mathcal E_i(Y)\|_\infty
\lesssim
D^{\alpha |X_{\min}|}\,\|Y\|_1,
\ee
since $\mathcal E_i(Y)$ is a map out of the code subspace $\mathcal H_\alpha$, whose dimension is $D^{\alpha |X_{\min}|}$. Therefore, it is natural to expect
\be
\|\mathcal E_i\|_{1\leftarrow 1}\lesssim D^{\alpha |X_{\min}|}.
\ee
This estimate is certainly crude, but it captures the idea that the only extensive enhancement in the induced trace norm should come from the size of the code subspace. The telescoping lemma then gives
\be
\|W_\alpha-\widetilde W_\alpha\|_1
\lesssim
K\,D^{\alpha |X_{\min}|-a_{\min}},
\ee
where
\be
a_{\min}:=\min_i a_i.
\ee
It follows that, whenever
\be
\alpha |X_{\min}|<a_{\min},
\ee
the distance between $W_\alpha$ and its locally isometric replacement $\widetilde W_\alpha$ vanishes in the large-$D$ limit.

The upshot for the PLC is that local replacement of the random tensors by their isometric parts only works when the foliation expands every tensor sufficiently fast. This is clearly a very restrictive condition. It is stronger than the requirement of an everywhere-expanding foliation since one needs not just expansion everywhere, but expansion at a large enough rate at each tensor. We do not expect this to be a general property, even in gravity.

\subsubsection{A simple example}

Consider the simple example of Sec.~\ref{sec:simpleexample2} for the map $W_R=(R\otimes \mathbf{1}_{D^N})V_\alpha$, where $R:\mathcal{H}_n \to \mathcal{H}_m$ is Gaussian random. Let $T_R$ and $T_{W_R}$ denote the polar parts of $R$ and $W_R$, respectively.

To compare the different notions of approximation, it is useful to consider four related quantities:
\begin{gather}
D_1:=\|W_R^\dagger W_R-\mathbf{1}_{D^{\alpha(n+N)}}\|_2^2,\notag\\
D_2:=\|W_R-T_{W_R}\|_2^2,\notag\\
D_3:=\|W_R-(T_R\otimes \mathbf{1}_{D^N})V_\alpha\|_2^2,\notag\\
D_4:=\|T_{W_R}-(T_R\otimes \mathbf{1}_{D^N})V_\alpha\|_2^2.
\end{gather}
Here $D_1$ measures the failure of $W_R$ to be isometric, $D_2$ measures its distance to the closest isometry, $D_3$ measures the error of the naive local replacement $R\to T_R$, and $D_4$ measures the distance between this naive isometry and the true polar isometry of the restricted map.

The important point is that these quantities scale differently with $D$. For Haar-random $V_\alpha$, one finds that as $D\to\infty$,
\be
\mathbb E_{V_\alpha,R}[D_1]
\sim
D^{-m-n+\alpha(n+N)}+D^{-m-N+2\alpha(n+N)},
\ee
and similarly
\be
\mathbb E_{V_\alpha,R}[D_2]
\sim
\frac14\,\mathbb E_{V_\alpha,R}[D_1].
\ee
Thus both $D_1$ and $D_2$ are small whenever the restricted map $W_R$ is close to an isometry.

By contrast, the naive local replacement is controlled by
\be\label{eq:distancereplacementisomexample}
\mathbb E_R[D_3]
\sim
\frac14\,D^{\,n-m+\alpha(n+N)},
\ee
and one likewise finds
\be
\mathbb E_{V_\alpha,R}[D_4]
\sim
\mathbb E_R[D_3].
\ee
Therefore $D_3$ and $D_4$ are exponentially larger than $D_1$ and $D_2$. Nevertheless, if $m-n$ is sufficiently large, their exponents can still be negative, so these distances also vanish in the large-$D$ limit. In this regime, the naive replacement $(T_R\otimes \mathbf{1}_{D^N})V_\alpha$ approximates $W_R$ less well than the optimal polar isometry, but it remains asymptotically close to it. Its main advantage is that its circuit complexity is additive in the individual TN building blocks. This replacement, however, only works when the local expansion is sufficiently large.

To test the telescoping bound in the present example, note that after the replacement of $R$ by $T_R$ one has $W_R-\widetilde W_R=\mathcal E(R-T_R)$, for the map $\mathcal E(Y)=(Y\otimes \mathbf{1}_{D^N})V_\alpha$. In this case, it follows that the environment map does not exponentially amplify the operator norm
\be
\|\mathcal E(Y)\|_\infty
=
\|(Y\otimes \mathbf{1}_{D^N})V_\alpha\|_\infty
\le
\|Y\|_\infty\,.
\ee
This, in turn, implies the bound
\be
\|\mathcal E\|_{1\leftarrow 1}\le D^{\alpha(n+N)}\,.
\ee
Together with the computed trace distance $\|R-T_R\|_1\sim D^{2n-m}$ we have that the telescoping bound yields
\be
\|W_R-\widetilde W_R\|_1
\lesssim
D^{\,2n-m+\alpha(n+N)}\,,
\ee
which is exponentially larger than \eqref{eq:distancereplacementisomexample}. Thus, the telescoping argument gives a sufficient condition for the local replacement to be possible.

\section{Twirled perfect tensor networks}\label{sec:TPTN}

In this section we propose a new class of TNs that satisfy the computational covariance property while being generic: {\it twirled perfect tensor networks} (TPTNs). 

The very definition of computational covariance in Sec. \ref{sec:requirements} requires that, locally, each tensor is perfect. Recall that a perfect tensor is defined in terms of an absolutely maximally entangled (AME) state of the form
\be\label{eq:AME}
\ket{P} = \sum_{i_1,...,i_n=1}^D\,P_{ i_1...i_{n}} \ket{i_1}...\ket{i_n}\,,
\ee
where the coefficients are chosen so that $\ket P$ is normalized. AME states are defined to be maximally entangled across any bipartition in which the chosen set of input legs is no larger than its complement. More precisely, for any subset of $k$ legs with $k \le n/2$, the reduced density matrix obtained by tracing out the remaining $n-k$ legs is maximally mixed
\be\label{eq:AMEcond} 
\rho_{k} := \text{Tr}_{n-k}(\ket{P}\bra{P}) = \frac{1}{D^{k}}\,\mathbf{1}_{D^{k}}\qquad \forall \,k\leq \frac{n}{2}\,.
\ee 
Given any bipartition of the legs into a set of $k$ input legs and a complementary set of output $n-k$ legs, we may reshape the coefficients of $\ket{P}$ into a linear map $P_k: \mathbb{C}^{D^k}\to \mathbb{C}^{D^{n-k}}$, with a suitable global rescaling. By definition, \eqref{eq:AMEcond} implies that such a linear map is exactly isometric (unitary) when the number of input legs is smaller (equal) to the number of output legs, 
\be 
P_k^\dagger P_k = \mathbf{1}_{D^k}\,\quad \text{for } k\leq \dfrac{n}{2}\,.
\ee
Therefore, perfect tensors exhibit isometric covariance and are natural candidates to implement TNs with computational covariance properties.

However, the explicit perfect tensors commonly used in the literature are very special.\footnote{The classification of perfect tensors at large enough bond dimension remains an open problem. For a recent survey on the classification of AME states, see \cite{Rajchel-Mieldzioc:2025mqh}.} Examples of low bond dimension perfect tensors are known, and these are the ones commonly used in holographic TNs \cite{Almheiri:2014lwa,Pastawski:2015qua}. These correspond to the four-qutrit ($n=4$, $D=3$) or six-qubit ($n=6$, $D=2$) AME states. Importantly, in these examples, the state $\ket{P}$ is typically chosen to be a stabilizer state. This implies that such tensors can only generate Clifford circuits or stabilizer quantum error correcting codes. As a result, TNs built entirely from these perfect tensors are highly non-universal: they cannot produce states with genuinely large (exponential) circuit complexity. In fact, the states they generate remain efficiently classically simulable by the Gottesman–Knill theorem. 

For this reason, the stabilizer nature of perfect TNs obstructs any notion of exponential quantum complexity from applying if we want to use these TNs to model gravity. If the entire network is Clifford, then the naive quantum circuit application, brute-force post-selection, or amplitude amplification cannot possibly be the optimal algorithm, as there exists an efficient classical simulation instead. As a consequence, for models of the python's lunch built from these stabilizer tensors, the PLC estimate would not reflect the true computational complexity of the TN.

To overcome this limitation, we will ``twirl'' perfect tensors with local random unitaries. Pictorially:
\begin{center}
    \includegraphics[width=0.3\linewidth]{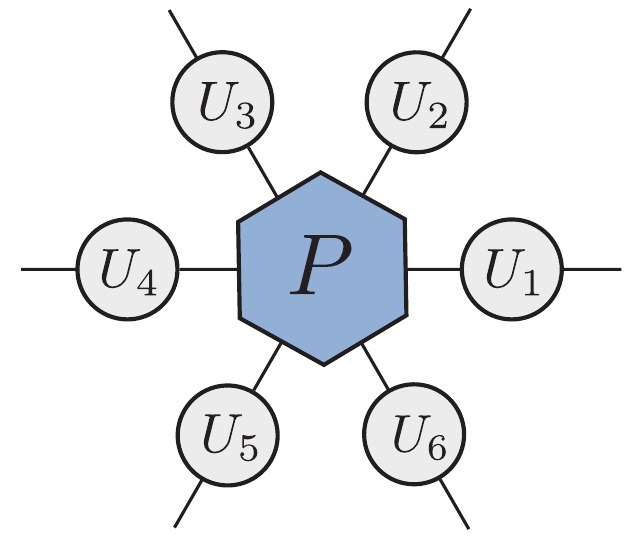}
\end{center}
Twirling simply generates another representative in the same local unitary orbit, so the tensor remains perfect because the AME property is invariant under local unitaries. The difference is that the tensor is no longer special from a computational point of view. The perfect tensor, together with the local unitaries, generate a universal gate set, by the standard result that arbitrary single-qudit unitaries together with an entangling gate are universal. For the small bond dimension examples mentioned above, this removes the Clifford nature and allows such TNs to approximate arbitrary quantum states.

To achieve computational covariance in the sense of Sec.~\ref{sec:requirements}, we assume that the local twirls are simple in the chosen gate model. This is not automatic. In a fixed qudit gate model, a Haar-random unitary typically has complexity polynomial in $D$, so it can contribute to the leading large-$D$ complexity exponent. One way to avoid this extra cost is to count these local unitaries as elementary operations. Another possibility is to draw them from an approximate unitary design generated by polynomial-size circuits. For what follows, an approximate unitary $2$-design is enough.

\subsection{Definition} \label{subsec:TPTN}

Consider a planar graph $\mathcal{G} = \lbrace \mathcal{V}, \mathcal{E} \rbrace $, where $\mathcal{V}$ is the collection of vertices and $\mathcal{E}$ is the collection of edges. For each vertex $x \in \mathcal{V}$, we define a Hilbert space $\mathcal{H}^{x}_{1}\otimes ... \otimes \mathcal{H}^{x}_{i_{n_{x}}}$ of constant bond dimension $|\mathcal{H}^{x}_{i}| = D$ and consider a twirled AME state on each vertex
\be 
|TP_{x}\rangle := \left(\prod_{i=1}^{n_{x}} U^{x}_{i}\right)\ket{P_{x}}\,\,
\ee 
where the $U^{x}_{i}$ are independent Haar random unitaries in $U(D)$.

To define the TPTN on the graph, we contract these vertex states with maximally entangled states $\ket{\phi_e}$ defined in \eqref{eq:epr} at each edge of the graph $e\in \mathcal{E}$, which defines a (generally unnormalized) state $\ket{V_{\text{TPTN}}}$ on the vertices at the boundary of the graph. For the purposes of this paper, we may want to consider the linear map $V_{\text{TPTN}}$ defined by dualizing part of the boundary vertices
\be
\ket{V_{\rm TPTN}}
=\left(\prod_{e\in\mathcal E}\bra{\phi_e}\right)
\left(\prod_{x\in\mathcal V}\ket{TP_{x}}\right)
\in \mathcal H(\partial \mathcal{G})\,.
\ee

Because the Haar measure is invariant under left and right multiplication,
the twirling unitaries acting on internal legs can be combined pairwise.
Each internal edge initially carries two independent unitaries, one from
each adjacent vertex, but these combine into a single random unitary per edge. The result is shown pictorially as:
\begin{center}
\includegraphics[width=0.45\linewidth]{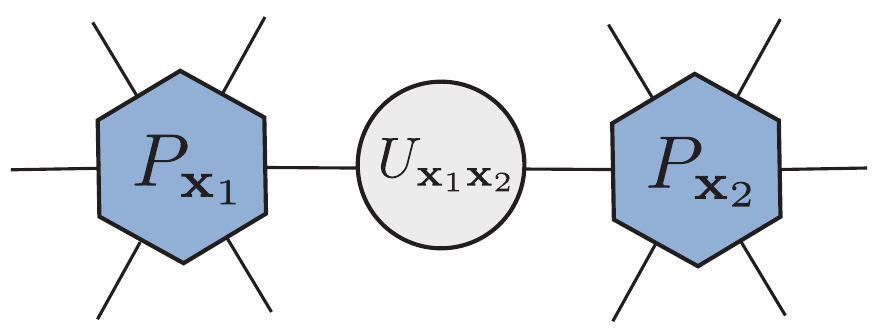}
\end{center}
Therefore, an equivalent definition of the TPTN
is to consider fixed AME states at the vertices
and contract them with maximally entangled random states on the
internal edges,\footnote{The same conclusion holds for more structured
ensembles, such as approximate unitary $k$-designs, provided one restricts
attention to sufficiently low moments.}
\be
|V_{\rm TPTN}\rangle
=
\left(\prod_{a\in \mathcal \partial\mathcal{G}}U_a\right)\left(\prod_{e\in\mathcal E}\bra{U_e}\right)
\left(\prod_{x\in\mathcal V}\ket{P_{x}}\right),
\qquad
\ket{U_e}=(U_e\otimes \mathbf 1_D)\ket{\phi_e},
\ee
where the $U_e, U_a$ are independent Haar-random unitaries associated with
the ``internal'' edges $e\in\mathcal E$, and open ``external'' tensor legs $a \in \partial \mathcal{G}$, respectively.

\begin{figure}[h]
 		\centering
 		\includegraphics[width = .5\textwidth]{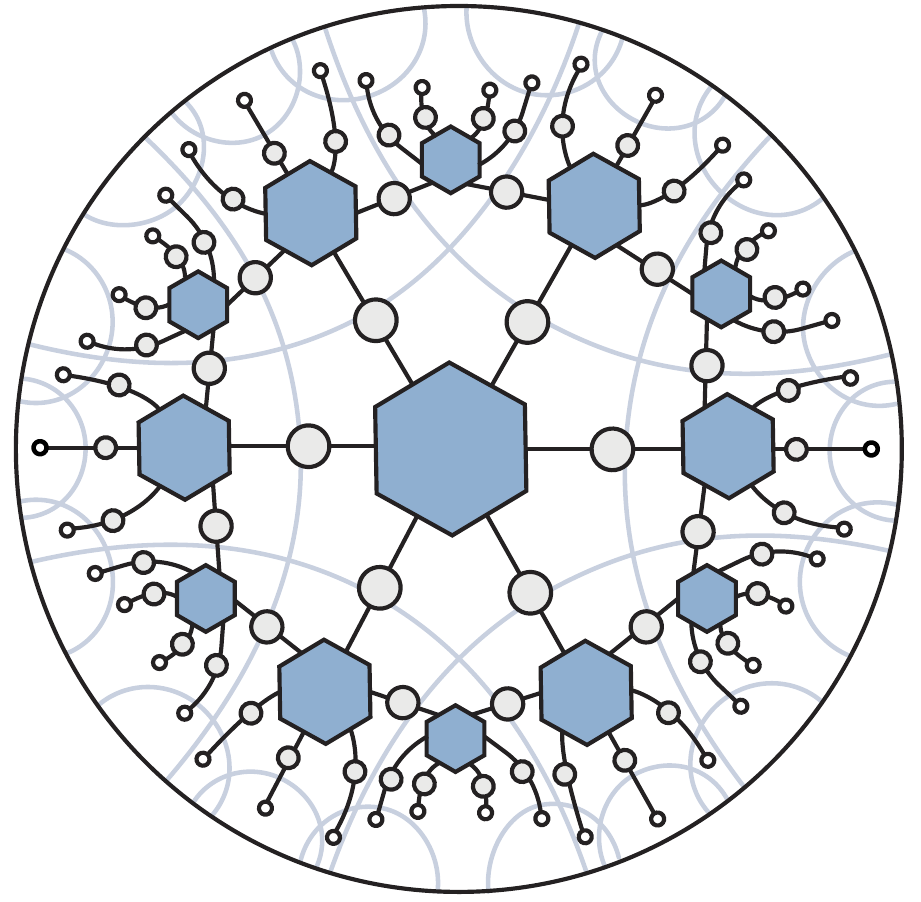}
 		\caption{Example of TPTN consisting on the twirled HaPPY stabilizer qubit state. The TN is computationally covariant and generic. In particular, it is not a stabilizer state.}
 		\label{fig:hexagontwirl}
 \end{figure}

\subsection{Lattice RT formula from spin model}\label{sec:RTTPTN}

We now show that holographic TPTNs saturate lattice versions of the RT formula for arbitrary boundary subregions in the large bond dimension limit. This contrasts with holographic stabilizer perfect TNs \cite{Pastawski:2015qua}, where the ``greedy algorithm'' fails to converge to the RT (or QES) cut for general boundary subregions.

The derivation for TPTNs closely parallels that of RTNs. In particular, subsystem R\'{e}nyi entropies of the state $|V_{\rm TPTN}\rangle$ can be mapped to an effective classical spin model on the network. There are several important differences from the usual RTN derivation, which we now explain.

To begin, we consider the TPTN state $|V_{\rm TPTN}\rangle$ and divide the boundary Hilbert space $\mathcal H(\partial \mathcal{G})\simeq \mathcal H_\A\otimes \mathcal H_{\A^c}$. The reduced density matrix associated with the subregion $\A$ is
\be\label{eq:densitymatrixA}
{\rho}_{\A} = \dfrac{1}{\mathcal{N}}\,\text{Tr}_{\A^c}\left(|V_{\rm TPTN}\rangle\langle V_{\rm TPTN}|\right)\,,
\ee
where $\mathcal{N} =\bra{V_{\rm TPTN}}\ket{V_{\rm TPTN}}$ is the normalization of the state defined by the TPTN.

The entanglement entropy is $S(\rho_{\A})=-\mathrm{Tr}(\rho_{\A}\log\rho_{\A})$.
In the large bond-dimension limit, averaging over the ensemble of twirling unitaries leads to
\be
\overline{S(\rho_{\A})} \;\underset{D \to \infty}{\sim}\; |X_{\A}| \log D ,
\ee
where $|X_{\A}|$ denotes the minimal cut of the graph homologous to $\A$.

To see this, as it is standard, one starts from the $n$-th R\'enyi entropies of the density matrix is $S_n(\rho_{\A}) = -\tfrac{1}{n-1}\log \text{Tr}(\rho_{\A}^n)$. We will compute the ``annealed'' average R\'enyi entropy
\be 
\overline{S_n(\rho_{\A})}  = \dfrac{1}{1-n}\log\overline{\text{Tr}(\rho_{\A}^n)}\,,
\ee 
and then proceed by analytic continuation to $n\to 1$, since $S_n(\rho_{\A})\to S(\rho_{\A})$ in this limit.\\

{\bf Spin model for second R\'{e}nyi entropy.} Let $\widehat{\mathcal E} = \mathcal E \cup \partial \mathcal G$ denote the
extended edge set (internal edges together with boundary legs). For $n=2$, we will now show that there is an effective classical $\mathbb{Z}_2$ spin model defined on $\widehat{\mathcal E}$, with a spin variable $s_e \in \{\pm 1\}$ for each edge $e \in \widehat{\mathcal E}$, which reproduces the purity in the large bond dimension limit. The inverse temperature of the spin model is related to the bond dimension via $\beta =\log D$. More precisely,
\be\label{eq:effstatmech} 
\overline{{\rm Tr}(\rho_{\A}^2)} 
\;\underset{\beta \to \infty}{\sim}\;
\sum_{\{s_e = \pm 1\}^*} 
e^{-\beta E[\{s_e\}]}\,,
\ee
for an effective Hamiltonian $E[\{s_e\}]$ defined below, in \eqref{eq:effHTPTNz2}.

We now provide a step by step derivation of this effective Hamiltonian, but we assume familiarity with \cite{Hayden:2016cfa}. First, one starts from the purity and introduces four replicas of the TN to write
\be 
\text{Tr}(\rho_{\A}^2) = \text{Tr}_{\A \A}\left(\mathcal{F}\left(\rho_{\A} \otimes \rho_{\A}\right)\right)\,,
\ee 
where $\mathcal{F}$ is the swap operator. The second step is to take the ``annealed'' average of the numerator and denominator associated with the density matrix \eqref{eq:densitymatrixA} in the last expression separately.

This involves doing the following Haar integral at each edge
\be\label{eq:n2Haar} 
\int \text{d}U \left( U \otimes U^* \otimes U \otimes U^*\right) = \dfrac{D^2}{D^2-1}\left(\ket{\mathcal{I}}\bra{\mathcal{I}} + \ket{\mathcal{F}}\bra{\mathcal{F}}   - \frac{1}{D} \left(\ket{\mathcal{I}}\bra{\mathcal{F}} + \ket{\mathcal{F}}\bra{\mathcal{I}}\right)\right)\,,
\ee 
for the invariant states in four replicas
\be
\ket{\mathcal{I}} = \ket{\phi}_{1\bar{1}}\ket{\phi}_{2\bar{2}}\,,\qquad \ket{\mathcal{F}} =  \ket{\phi}_{1\bar{2}}\ket{\phi}_{2\bar{1}}\,.
\ee
These states generate the invariant subspace under left/right unitary multiplication by $U_0\otimes U_0^* \otimes U_0\otimes U_0^*$ for fixed $U_0\in U(D)$.\footnote{By the Schur-Weyl duality, the invariant states also generate an irreducible representation of the permutation group $\text{Sym}(2)$.} In particular, the right hand side of \eqref{eq:n2Haar} is the orthogonal projector to this subspace. The coefficients in the expression are known as the Weingarten functions, and they are computed as the inverse of the Gram matrix of overlaps between the invariant states. In \eqref{eq:n2Haar} the first two terms are the ``Gaussian'' contractions, while the last two terms with a negative sign are subleading Weingartens, suppressed by $1/D$.

Due to these four terms, for each edge of the TPTN, we must assign {\it two} spin variables, one for each index of the unitary. In particular, if we make the assignment $\ket{\mathcal{I}}\to -1$ and $\ket{\mathcal{F}} \to +1$ each average \eqref{eq:n2Haar} inside the second R\'enyi receives four contributions of the form 
\be 
\dfrac{D^2}{D^2-1}\sum_{\{s_1,s_2 =\pm 1\}} \text{sign}(s_1s_2) e^{-\beta  \frac{1-s_1s_2}{2}}\left(\cdot\right)\,,
\ee
where recall that $\beta = \log D$. Because of the sign term, an additional distinction with RTNs is that the average purity calculation for a TPTN takes the form of $\mathbb{Z}_2$-graded partition function, with Boltzmann weights that carry a parity sign due to exact unitarity. 

We show these two spins diagramatically as:
\begin{center}
\includegraphics[width=.5\linewidth]{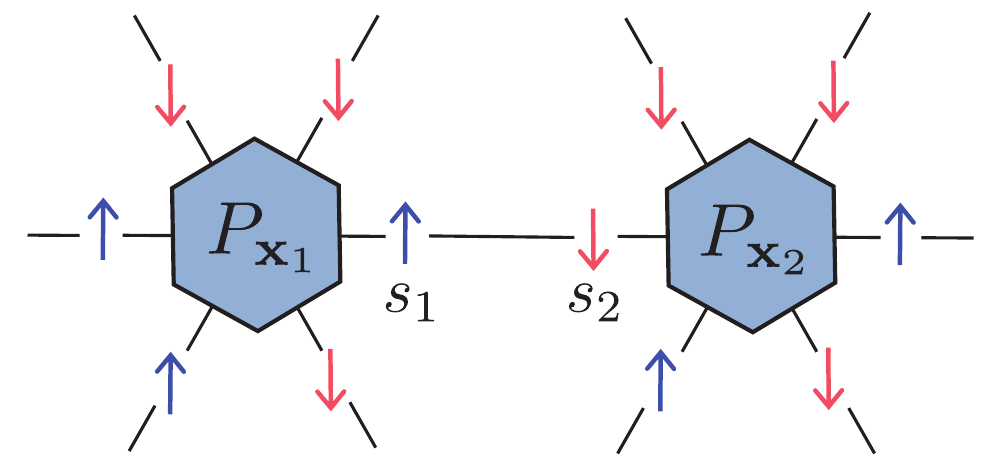}    
\end{center}

Let us now turn to how the edge spins interact at the vertices. We retain a single Ising variable on each extended edge, $s_e \in \{\pm 1\}$ for all $e \in \widehat{\mathcal E}$, namely the spin adjacent to the vertex (among the two originally associated with each edge). For each vertex $x \in \mathcal V$, define the incidence set $\delta(x) = \{ e \in \widehat{\mathcal E} : e \text{ is incident on } x \}$ and let the number of incident edges be $n_x = |\delta(x)|$. The resulting effective vertex interaction Hamiltonian is
\be\label{eq:bulkintTPTN2}
\beta E_{\rm int}[\{s_e\}]
=
\sum_{x \in \mathcal V} S_2\!\left(
P_x \,\middle|\,
A_x(\{s\})
\right)\,,
\ee
where the subset of legs at vertex $x$ selected by the spin configuration is
\begin{equation}
A_x(\{s_e\})
:=
\{\, e \in \delta(x) \;:\; s_e = +1 \,\}.
\end{equation}
Here $S_2(P_x \mid A_x)$ denotes the second Rényi entropy of the perfect tensor
state $\ket{P_x}$ reduced to the subsystem consisting of the legs in $A_x$.\footnote{So far, the construction is fairly general: twirling an arbitrary TN with local random unitaries yields an effective spin model that saturates the RT formula under general conditions. We nevertheless restrict attention to TPTNs because of their isometric covariance properties.} Recall that this form of the interaction follows from how loops of the four replicas of the TPTN are contracted when doing the Haar integrals.

For a perfect tensor at each vertex, it follows that the interaction energy is given in terms of the total magnetization at the vertex
\be 
S_2\!\left(
P_x \,\middle|\,
A_x(\{s_e\})\right) = \dfrac{1}{2}\log D\, (n_x -|\mathcal{S}_x(\{s\})|)\,,\qquad \mathcal{S}_x(\{s_e\}) := \sum_{e \in \delta(x)} \!s_e\,.
\ee 
The interaction \eqref{eq:bulkintTPTN2} is thus ``ferromagnetic'' and favors an ordered configuration of the edge spins at low temperatures.

A useful simplification occurs in the large bond-dimension limit $D \to \infty$. In this regime, the two spins associated with a given edge are effectively locked together: any configuration in which they disagree has strictly higher energy than the ground state configuration. To see this, fix all spins except the two living on a given edge and examine their contribution to the graded partition function. One finds that the lowest-energy configuration can be represented with matched spins and that it is non-degenerate.
\be\label{eq:localeffz}
\sum_{\{s_1,s_2 = \pm 1\}}\sign(s_1s_2) e^{-\beta E(s_1,s_2)} \;\underset{\beta \to \infty}{\sim}\; \sum_{ \{s=\pm 1\}^\ast}\,e^{-\beta E(s,s)}\,,
\ee 
where
\be 
E(s_1,s_2) = -\frac{1}{2}\left[|\mathcal{S}_1(s_1)| + |\mathcal{S}_2(s_2)| + (s_1s_2-1)\right]\,.
\ee 
is the effective Hamiltonian obtained by restricting to the two spins on the edge, with all other spins held fixed. 

To see \eqref{eq:localeffz}, suppose without loss of generality that a configuration with $s_1 = -s_2 = +1$ minimizes the energy. Flipping one of the spins changes the magnetization of the adjacent vertices in multiples of $2$, so the configurations with $s_1 = s_2 = \pm 1$ necessarily have the same energy. In this situation, there is therefore a threefold degeneracy of the minimal energy level. However, in the graded partition function \eqref{eq:localeffz}, the anti-aligned configuration $s_1 = -s_2$ carries an opposite sign and cancels against one of the aligned contributions. The net result is a single surviving lowest-energy contribution with $s_1 = s_2$. On the other hand, if an aligned configuration such as $s_1 = s_2 = +1$ already minimizes the energy, then all remaining configurations have strictly larger energy and are therefore suppressed at zero temperature.  The star in \eqref{eq:localeffz} or in \eqref{eq:effstatmech} represents that configurations related by single spin flips, which are degenerate, should only be considered once.

Hence, in the $D \to \infty$ limit, the model reduces to a classical $\mathbb{Z}_2$ spin system with one spin per edge. The total spin Hamiltonian then takes the form
\be\label{eq:effHTPTNz2}
E[\{s_e\}] = -\frac{1}{2}\left[ 
\sum_{x \in \mathcal V}
\left(
|\mathcal{S}_x(\{s_e\})| - n_x
\right)
+
\sum_{e\in \partial \mathcal{G}} (h_e s_e - 1)
\right]\,,
\ee
where we have also included the usual boundary pinning fields that implement the appropriate replica boundary conditions,
\be\label{eq:pinningm2} 
h_e = \begin{cases}
    +1, & e \in {\A}, \\[0.2cm]
    -1, & e \in {\A^c}.
\end{cases}
\ee 
We show the corresponding spin model in Fig. \ref{fig:spintptn}.

\begin{figure}[h]
 		\centering
 		\includegraphics[width = .5\textwidth]{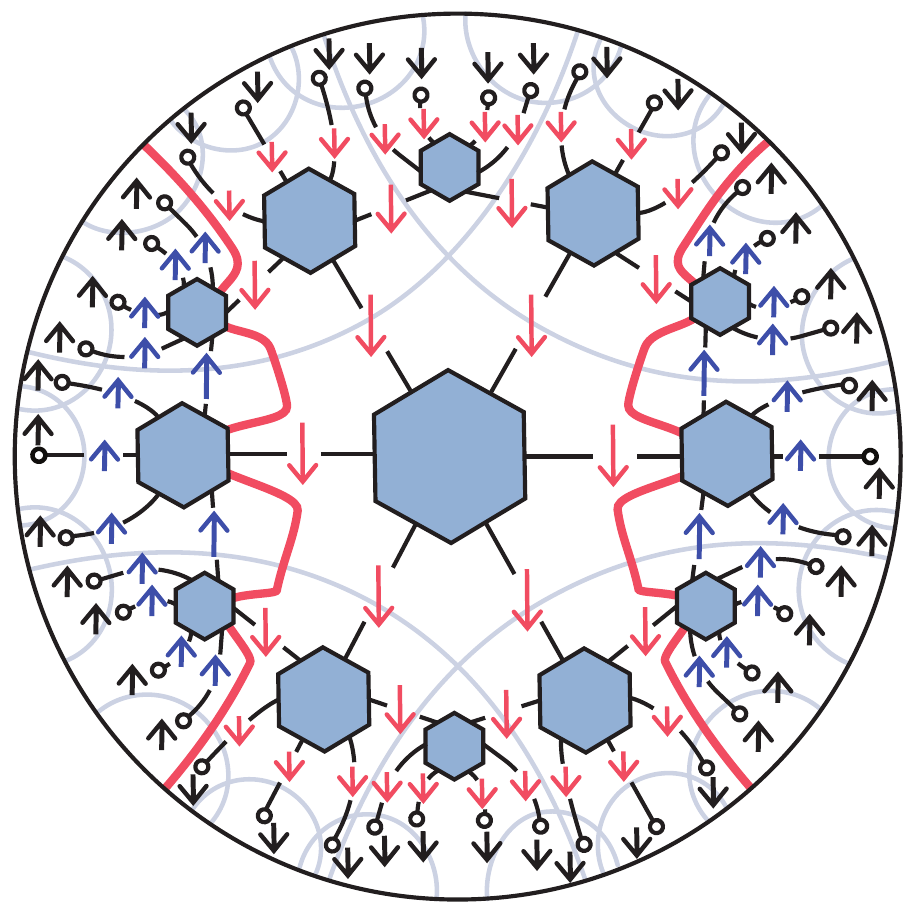}
 		\caption{Effective stat mech model for the purity of $\rho_{\A}$ for $\A$ the union of two disjoint intervals in the connected phase. Spins are associated with edges.  Each vertex contributes to the energy with minus the total magnetization. The boundary pinning field is defined by the choice of $\A$. The ground state configuration (red domain wall) passes through the vertices and reproduces the lattice RT formula. Local fluctuations of the domain wall are redundant due to cancellations with subleading Weingarten contributions.}
 		\label{fig:spintptn}
 \end{figure}

Note that throughout, we are working in the $D \to \infty$ approximation and neglecting $1/D$ corrections. This is sufficient for the purpose of computing the purity in the $D \to \infty$ limit in this section. However, if one wishes to study subleading $1/D$ corrections—such as those relevant in the previous section when testing the exact unitarity of the TN—then these subleading Weingarten contributions must be retained. In that case, the graded nature of the partition function becomes essential: it is precisely this graded structure that ensures the cancelation of unwanted terms and ultimately restores the fact that TPTNs define exact isometries along expanding foliations.\\
 
{\bf Average purity from the ground state.} Since the Hamiltonian is local in the graph, if the planar TN model is local, then the ground state configuration will contain a single domain wall separating two domains: one of $+1$ spins and one of $-1$ spins. In this case, the domain wall passes through the vertices of the network
We find, as expected, in the large bond dimension limit, the ground state energy of the TPTNs is 
\be
    E_{\rm GS} = |X_{\A}|\,.
\ee
Since the domain wall passes through the vertices, one might think that it is possible to flip neighboring edges without any energy cost (see Fig.~\ref{fig:spintptn}). However, such seemingly degenerate configurations are not actually included: as stated above, they are canceled by subleading Weingarten contributions. Thus, although there appears to be some freedom for the domain wall to fluctuate locally, these fluctuations are not physical in the computation of the purity.\footnote{Had we instead twirled the TN with random Gaussian tensors rather than random unitaries, we would have encountered a significant ground state degeneracy extensive in the size of the domain wall. This extensive degeneracy does not, however, modify the leading R\'enyi entropy in the $D\to \infty$ limit. For the TPTNs, by contrast, the domain wall is rigid: local fluctuations are redundant precisely because they are canceled by subleading Weingarten contributions.}

From the ground state energy, and assuming no leading degeneracy, we obtain the average second R\'{e}nyi entropy
\be\label{eq:S_2TPTN}
    \overline{S_2(\rho_{\A})}  \;\underset{D \to \infty}{\sim}\; |X_{\A}|\log D \,.
\ee

{\bf Entanglement entropy.} The result above implies that TPTNs have flat entanglement spectra in the $D \to \infty$ limit, as the von Neumann entropy satisfies 
\be\label{eq:ineq}
    |X_{\A}| \log D \,\geq S(\rho_{\A}) \geq S_2(\rho_{\A})\,.
\ee
The upper bound is the classic bound on TNs, while the lower bound follows from the monotonicity of the R\'enyi entropies. We now take the average and assume that annealed and quenched averages lead to the same answer (which is expected to be true very generally as $D\to \infty$). Then, the result for the R\'enyi \eqref{eq:S_2TPTN} implies the saturation of the inequalities \eqref{eq:ineq}, which yields the lattice RT formula
\be 
\overline{S(\rho_{\A})} \;\underset{D \to \infty}{\sim}\; |X_{\A}| \log D\,.
\ee 
Thus, we see that the large-$D$ entanglement spectrum of $\rho_{\A}$ is flat, or equivalently, that $\rho_{\A}$ is the isometric embedding of $\frac{1}{D^{|X_\A|}}\mathbf{1}_{D^{|X_\A|}}$ into the Hilbert space $\mathcal{H}(X_\A)$.\\

{\bf Spin model for higher R\'{e}nyi entropies.} Higher R\'enyis for TPTNs also admit effective spin model interpretations, which can be derived starting from
\be 
\text{Tr}(\rho_{\A}^n) = \text{Tr}_{\A... \A}\left(C_{\eta}\left(\rho_{\A} \otimes ...\otimes \rho_{\A}\right)\right)\,,
\ee 
where $\eta = (12...n)\in \text{Sym}(n)$ is the cyclic permutation, and $C_\eta$ is the corresponding permutation operator of the $n$ replicas of $\A$.

Taking the annealed averages and performing the corresponding Haar integral, one finds that the corresponding state mech model contains two $\text{Sym}(n)$-valued spins per edge corresponding to each permutation, with a ferromagnetic interaction of the form
\be 
E_{12} = n -\,\# \text{cycles}(\sigma_1^{-1}\sigma_2)\,,\qquad \sigma_1,\sigma_2 \in \text{Sym}(n)\,.
\ee

The vertex interactions are no longer R\'enyi entropies, but they are generalizations (sometimes dubbed ``multi-entropies'' \cite{Gadde:2022cqi}) that depend on more complicated contractions of the AME state in $n$ replicas. The partition function is also graded due to the subleading Weingartens, but we expect that this leads to a similar effect where, at $D\to \infty$, one can consider a single $\text{Sym}(n)$-valued spin per edge.

 While a more extensive analysis of this is beyond the scope of this paper, we expect that the stat mech model serves to show the flatness of the entanglement spectrum of TPTNs in the $D\to \infty$ limit. A situation where we know this is true is if the TPTN is everywhere expanding from the minimal cut $X_\A$ to the boundary region. In this case, the reduced density matrix $\rho_{\A}$ is an isometric embedding of the maximally mixed state on $\mathcal{H}(X_\A)$, and thus has a flat entanglement spectrum.\footnote{For these subregions, the ``greedy algorithm'' of \cite{Pastawski:2015qua} for the TPTN converges to the minimal cut.}

\subsection{Other properties} 

Here we outline other nice properties of TPTNs.

{\bf Covariant complexity exponent.} For TPTN models of the python's lunch we identify $\mathcal H(\partial \mathcal{G})\simeq \mathcal H(X_\A)\otimes \mathcal H(X_\A^c)$, and define the dual map $V_{\rm TPTN}:\mathcal H(X_\A)\to \mathcal H(X_\A^c)$ from the condition
\be\label{eq:VstateTPTN_action}
V_{\rm TPTN}\ket{\psi}
\,\propto\,
\left(\bra{\psi^*}_{X_\A}\otimes \mathbf 1_{X_\A^c}\right)
\ket{V_{\rm TPTN}}\,,
\ee
for every $\ket{\psi} \in \mathcal{H}(X_\A) $. The global proportionality constant will be fixed so that 
\be 
\overline{V_{\rm TPTN}^\dagger V_{\rm TPTN}} = \mathbf{1}_{D^{|X_{\A}|}}\,.
\ee 

By definition, TPTNs are computationally covariant. According to the considerations of Sec. \ref{sec:requirements} the map $V_{\rm TN}$ therefore has a complexity exponent given by the ``non-contracting bulge'' \eqref{eq:PLClike2} in general situations. In general, this cut need not coincide with the minimax cut, since its definition relies on the existence of everywhere non-contracting foliations.

{\bf Non-stabilizer holographic quantum error correcting codes.} Perfect tensors define quantum error correcting codes. For example, the six qubit AME stabilizer state defines a $[[5,1,3]]$ qubit code when interpreted as a map from one leg to the other five. The concatenation of such maps in a hyperbolic tiling defines the HaPPY stabilizer code \cite{Pastawski:2015qua}. This code provides a concrete, simple model of holographic quantum error correction, exhibiting complementary recovery for boundary subregions. 

If we twirl the corresponding HaPPY code, we obtain a more general, non-stabilizer TN, as shown in Fig. \ref{fig:pentagontwirl}. Globally, the twirling unitaries can be pushed through the network to the boundary, so the resulting code map is unitarily equivalent to the original one. Since the Knill–Laflamme conditions for quantum error correction are invariant under unitary conjugation, such a TPTN therefore defines a quantum error-correcting code.

\begin{figure}[h]
 		\centering
 		\includegraphics[width = .5\textwidth]{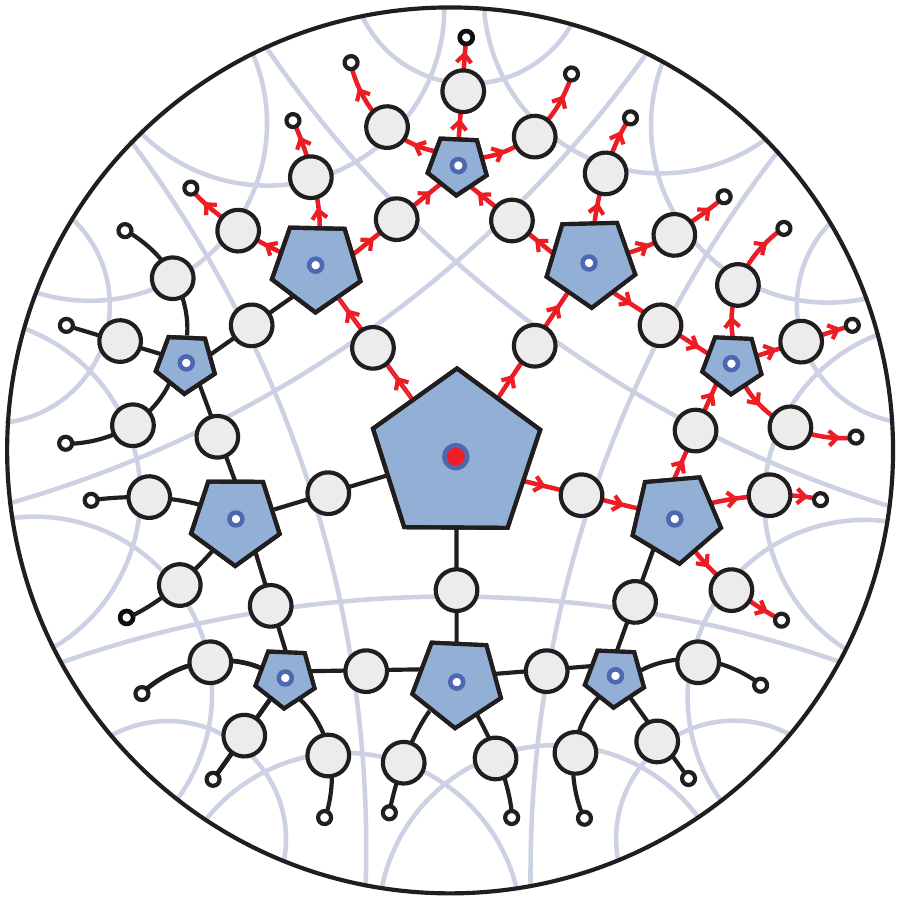}
 		\caption{Operator pushing in the twirled HaPPY quantum error correcting code.}
 		\label{fig:pentagontwirl}
 \end{figure}

However, because the physical Hilbert space is transformed from the original one by a unitary transformation, it is not immediately clear that the twirled code continues to exhibit complementary recovery for boundary subregions. For boundary regions for which the greedy algorithm of \cite{Pastawski:2015qua} yields the entanglement wedge, the twirling unitaries in the entanglement wedge can be pushed to the boundary and absorbed locally into those subregions. In such cases, the correctability of the corresponding erasures (and thus the associated entanglement wedge reconstruction properties) remains intact. For more complicated boundary regions or other network geometries, however, this pushing argument need not preserve the boundary tensor factorization, and the geometric interpretation of complementary recovery may be altered. Thus, while twirling preserves most of the underlying quantum error-correcting structure, it generically removes special algebraic features such as stabilizer structure. The basic operator-pushing picture is illustrated in Fig.~\ref{fig:pentagontwirl}.

\section{Complexity of local postselection}
\label{sec:localshrink}

\begin{figure}[h]
    \centering
    \includegraphics[width=.85\linewidth]{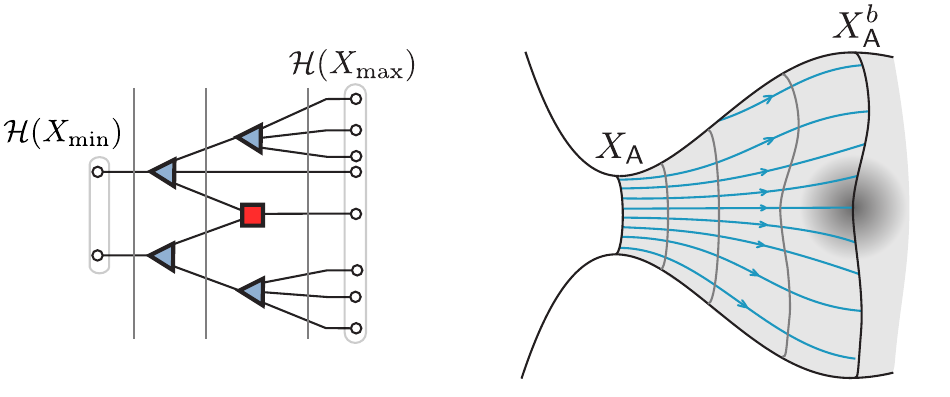}
    \caption{Left: A computationally covariant TN may admit slicings that are globally expanding but not everywhere locally expanding. In the example shown, pushing the cut across the red tensor increases the total cut size, so the associated map may still be an approximate isometry, but the move is locally shrinking and therefore incurs an additional postselection cost. Right: In the continuum geometric picture, the PLC surface may similarly contain a local bump where the areas locally decrease along a particular slicing. However, between the RT (or constriction) and the bulge one can always choose an everywhere-area-increasing foliation, generated by mean curvature flow. This local shrinking obstruction should therefore be absent in gravity.}
    \label{fig:everywheregrowslice}
\end{figure}

Computational covariance is not, by itself, sufficient to guarantee the PLC exponent in a TN model. One must still contend with a genuinely discrete obstruction: local shrinking along the foliation of the network may be unavoidable. In particular, the minimax prescription used to identify the bulge covariantly may require a slicing that is not everywhere area-increasing towards the bulge. An example is shown in Fig. \ref{fig:everywheregrowslice}. Whenever this happens, a unitary implementation of the layers of the foliation requires postselection on the locally removed degrees of freedom. As a result, the nominally ``simple'' steps of the PLC algorithm themselves become exponentially costly in $\log D$, and this potentially changes the complexity exponent.

A simple way to quantify this effect is to count the total amount of local postselection along a given foliation $\gamma$. For a computationally covariant TN, which we may take for concreteness to be a TPTN, we define
\be
N_{\rm post}(\gamma)\;:=\;
\sum_{t=0}^{T}\ \sum_{v\in\mathcal V_t}
\bigl(n_t(v)-n_{t+1}(v)\bigr)\Theta\bigl(n_t(v)-n_{t+1}(v)\bigr)\,,
\ee
where $n_t(v)$ denotes the number of edges in the cut $X_t\subset \gamma$ incident on the vertex $v$, and $\Theta$ is the Heaviside function. This quantity measures, step by step, how many local legs are lost as the cut evolves through the network. 

If one then implements the map by explicitly measuring all degrees of freedom that must be postselected, and assumes sufficient genericity of the local postselection operations, one is led to the estimate
\be\label{eq:PLClike3}
\log_D \Co(V_{\rm TPTN})\lesssim \min_{\gamma}\,\{
N_{\rm post}(\gamma)\}\,.
\ee
This should be viewed as a natural discrete refinement of the PLC exponent, modified to account for the unavoidable local shrinkage of the foliations in computationally covariant TNs. 

The inequality reflects the possibility that a more efficient implementation may exist. Amplitude amplification is an example. In particular, local postselections occurring ``in series'' along a growing slicing may sometimes be handled with only an additive cost per layer, as in the multi-lunch version of the PLC with robust oblivious amplitude amplification applied at each step \cite{Engelhardt:2021qjs}. We will remain agnostic about such improvements here and regard the right-hand side of \eqref{eq:PLClike3} as the natural generic answer in the absence of additional structure.

More generally, one might view \eqref{eq:PLClike3} as defining a notion of exponential complexity in any computationally covariant TN model of gravity, and not only in TN models of the python's lunch. For instance, even a TN model of a cylindrical wormhole with sufficiently irregular local structure should be expected to have exponentially large unitary complexity, with an exponent controlled by the cumulative amount of local postselection along an optimal foliation. Such TN geometries may provide good models of ``caterpillars'' \cite{Magan:2024aet,Magan:2025hce}.

\subsection{No local penalty in gravity}

This is one important way in which TN models of the python's lunch differ from the continuum geometric picture. In a smooth bulk geometry, one can always use spatial diffeomorphisms to find an everywhere locally area-increasing foliation connecting the constriction (or RT) to the bulge, so there is no clear analog of the complexity penalty associated with local postselection.

In Appendix~\ref{app:contraction} we prove Theorem \ref{thm:foliation}, which makes this intuition precise. The basic idea is simple. Given two extremal hypersurfaces in the same homology class (for instance, the bulge and the constriction, or the bulge and the RT surface), one can construct an interpolating foliation whose mean curvature has a definite sign everywhere. Physically, this is the continuum statement that the slices move monotonically toward the bulge, with their area locally increasing everywhere in that direction. An example of such flow is illustrated in Fig.\ \ref{fig:everywheregrowslice}.

The construction is based on mean curvature flow (MCF), initiated along the unstable mode of the bulge surface, so that the foliation is generated directly by the flow. This theorem supports the idea that the extra penalty captured by \eqref{eq:PLClike3} in TN models of the python's lunch is a genuinely discrete effect and is absent in gravity.

\section{Discussion}
\label{sec:discussion}

In this work, we have revisited holographic TN models of the black hole interior with the goal of sharpening the Python's Lunch Conjecture (PLC). While the PLC was originally motivated by TN models, we have shown that its quantitative validity in local TN models is far from automatic. In particular, although the presence of a bulge generically signals exponential complexity, the precise PLC exponent requires additional structural assumptions about the underlying TN.

Our main conceptual point is that the PLC implicitly assumes a notion of \emph{computational covariance}, which requires simple isometricity/unitarity under everywhere-non-contracting slicings of the network. When this condition is satisfied, one has the freedom to foliate the TN of the python's lunch in different ways to find a foliation that minimizes the amount of postselection, with the hope that the optimal foliation includes the minimax cut (bulge) of the PLC proposal.

We have proposed a new class of TNs, dubbed \emph{twirled perfect tensor networks} (TPTNs), which are computationally covariant by construction while remaining generic and universal for quantum computation. We have shown that TPTNs combine the desirable holographic features of perfect TNs and RTNs. In particular, they saturate lattice versions of the RT formula for arbitrary boundary subregions while being computationally covariant. At the same time, our analysis in Sec. \ref{sec:RTN} makes clear that the computational covariance property is a nontrivial and restrictive condition, rather than a generic feature of local TNs.

Finally, we have emphasized that even within the best-case scenario of computationally covariant TNs, there is an intrinsic additional source of exponential complexity in TNs due to local shrinking (postselection), which cannot be avoided by a clever choice of tensor. This reflects the fact that generally the ``non-contracting bulge'' cut is not the minimax cut of the network. Nevertheless, we have argued that this additional source of complexity is absent in the continuum geometric description in gravity.

Several open questions remain. 
\begin{itemize}
    \item A basic issue concerns the role of computational covariance in holography itself. If this property is necessary for the interior, one may ask whether it should also hold in models of the exterior geometry. The canonical exterior model, MERA, does not satisfy computational covariance, highlighting a tension between preparing well-defined quantum states of a CFT on a lattice and reproducing holographic considerations beyond that. Understanding whether TPTNs, or suitable modifications thereof, can be extended to approximate well-defined states of the boundary theories would substantially strengthen their holographic interpretation. Even identifying obstructions to this goal would shed light on the limitations of TN approaches to gravity.
    \item Additionally, it would be interesting to investigate TPTNs in more detail, not necessarily in a holographic context. One question is to rigorously show whether TPTNs exhibit flat entanglement spectra. Another possibly interesting question is whether TPTNs efficiently generate ensembles of quantum states/unitaries that approximate random states, such as quantum state $k$-designs or unitary $k$-designs. More generally, while we focused on perfect tensor vertices, the twirling procedure can be applied to broader classes of tensors, so studying the properties of more general twirled TNs may be of broad interest.
        
    \item Finally, it would be important to understand whether TPTN states can be implemented experimentally in near-term quantum devices. Perfect tensors can be chosen to be simple stabilizer tensors, and the twirling operation is local, making the construction relatively natural from an implementation point of view. This connects to the broader program of probing quantum-gravity-inspired phenomena in the lab \cite{Brown:2019hmk}; see, for example, the recent work of \cite{Jeffrey:2026oyb}. In this sense, TPTNs may provide a simple platform where several aspects of holography can be studied together experimentally.
\end{itemize}

\section*{Acknowledgments}

We thank Chris Akers, Jos\'{e} Barb\'{o}n, Charles Cao, Jeevan Chandra, Gong Cheng, Tom Hartman, Patrick Hayden, Kenneth Higginbotham, Cathy Li, Geoff Penington, Pratik Rath, Arvin Shahbazi-Moghaddam, Chris Waddell, and Beni Yoshida for discussions. We acknowledge support from the U.S. Department of Energy through DE-SC0009986, QuantISED DE-SC0020360, and GeoFlow DE-SC0019380. We acknowledge the ``Informing Gravity Theory Through Quantum Simulation Experiments'' Q-FARM workshop at Stanford, where part of this work was presented. This work is supported in part by the Leinweber Institute for Theoretical Physics (MS). We thank Claude Opus 4.7, GPT 5.5, and prior models for proofreading, suggesting references, and helping improve the presentation of the manuscript.

\appendix

\section{When the PLC overestimates complexity}\label{app:overestimate}

It is possible for the PLC to predict a complexity that exceeds the maximal allowed complexity, at least within the qubit model of the problem. The potential tension arises because a system of $n$ qubits can reach any state in its Hilbert space using at most $2^{n}$ two-qubit gates. Likewise, an arbitrary unitary acting on $n$ qubits can be implemented with at most $4^{n}$ such gates. More generally, an isometry from $n$ qubits to $m$ qubits can be realized using no more than $2^{n+m}$ gates. These bounds therefore constrain the allowed complexity of the python’s lunch map, represented by the approximately isometric map $V: \mathcal{H}(X_{\rm A}) \to \mathcal{H}(X^c_{\rm A})$, once the effective number of qubits is identified with the generalized entropies of the minimal QES and the constriction, namely $n \log 2 = S_{\rm gen}(X_\A)$ and $m \log 2 = S_{\rm gen}(X^c_\A)$. In particular, in geometries with a sufficiently large bulge, the PLC may predict a complexity that exceeds the upper bound $2^{n+m}$.

These estimates are common in quantum complexity, and we now outline how to derive them for the case of an isometry. An isometry $W$ from $n$ qubits to $m$ qubits is a $2^{m} \times  2^{n}$ matrix satisfying $W^\dagger W = \mathbf{1}_{2^{n}}$. The real dimension of the manifold of isometries is $2^{n+m+1}-2^{2n}$. We may coarse-grain the set of all possible isometries by identifying all isometries within an $\epsilon$-ball. The total number of such $\epsilon$-balls is then estimated by the volume of the manifold divided by the volume of an $\epsilon$-ball, giving parametrically
\be 
\text{total number of isometries}
\sim
\left(\frac{2^{m}}{\epsilon^2}\right)^{(2^{n+m+1}-2^{2n})/2}.
\ee 

We can explore this space starting from a fixed reference isometry and acting with two-qubit gates. If we again assume a single non-symmetric two-qubit gate type, then at each step there are of order $m(m-1)$ possible placements of the gate on the output qubits. After $\Co$ gates, the number of distinct isometries reachable is bounded by
\be
\text{number of isometries with complexity } \Co
\;\le\;
\left(m(m-1)\right)^{\Co} \sim m^{2\Co}.
\ee

Maximal complexity is reached when all $\epsilon$-distinguishable isometries have been explored\footnote{If the gate set includes continuous parameters, the regulator dependence of the maximal complexity may disappear.}
\be
m^{2\Co_{\max}}
=
\left(\frac{2^{m}}{\epsilon^2}\right)^{\left(2^{n+m+1}-2^{2n}\right)/2}  
\Rightarrow \quad \Co_{\max}
\;\sim\;
\frac{2^{n+m+1}-2^{2n}}{4\log m}\,
\Big(m\log 2 - 2\log \epsilon\Big).
\ee

In the common regime $m\gg n$, this simplifies to
\be
\Co_{\max}\sim \frac{2^{n+m}}{2\log m}\,\Big(m\log 2 - 2\log \epsilon\Big)\,.
\ee

Once the complexity exceeds roughly $2^{n+m}$ gates, any new gates that are added will take the isometry to a previously visited isometry, which had already been reached with fewer gates. Therefore, any circuit representation of an isometry with more than $2^{n+m}$ gates becomes suboptimal and does not reflect the actual complexity of the isometry. 

We can apply this bound to the PLC and observe that the PLC prediction can, in many cases, exceed the upper bound. We rewrite the PLC prediction here
\be\label{eq:PLC2}
\Co_{\text{PLC}} \sim  \exp\left(\dfrac{S_{\text{gen}}(X_\A^b)-S_{\text{gen}}(X^c_\A)}{2}\right)\,.
\ee

We saw previously that $n \log 2 = S_{\rm gen}(X_\A)$ and $m \log 2 = S_{\rm gen}(X^c_\A)$ is the number of qubits. Therefore, if $\Co_{\text{PLC}}>\Co_{\rm max}$, we know that the PLC prediction cannot be the complexity of the isometry. The condition for this to happen is
\begin{gather}     
    \exp\left(\frac{S_{\text{gen}}(X_\A^b)-S_{\text{gen}}(X^c_\A)}{2}\right) > \exp\left(S_{\text{gen}}(X^c_\A) + S_{\text{gen}}(X_\A)\right) \,,\nonumber \\
    \Rightarrow S_{\text{gen}}(X_\A^b) > 3S_{\text{gen}} (X^c_\A)+ 2 S_{\text{gen}}(X_\A)\,.\label{eq:conditionmaxC}
\end{gather}

In case \eqref{eq:conditionmaxC} is satisfied, it is not clear what the more efficient circuit to approximate the python's lunch map $V$ is, but one must exist. Therefore, any geometry that has a relatively large bulge, with an area (or generalized entropy) satisfying \eqref{eq:conditionmaxC}, will not obey the PLC. In contrast to exponentially long wormholes \cite{Susskind:2020wwe}, this regime is far less exotic in gravity. The optimal circuit representation of $V$ might still correspond to an exponentially long wormhole.

\section{Details on planar python's lunches}
\label{app:detailsplanar}

In this appendix, we provide the details of the derivation of some formulas used in Sec. \ref{sec:markov}.

\subsection{PLC exponent}

 We start with the derivation of the PLC exponent \eqref{eq:areadifference}, where $\A$ is the full boundary. As we have seen in Sec. \ref{sec:markov}, the bulge $X^b_\A$ breaks the isometries of the IR regulator. In two dimensions, one can parametrize $\rho(\sigma), x(\sigma)$ and write the area functional
\be\label{eq:area1}
\text{Area} = \int \text{d}\sigma \sqrt{\dot{\rho}^2 + r^2 \dot{x}^2}\,,
\ee 
 where the dot represents $\text{d}/\text{d}\sigma$. Taking $\sigma$ to be the proper length of the bulge, with $\sqrt{\dot{\rho}^2 + r^2 \dot{x}^2}=1$ , $\rho(\sigma)$ is determined by the equation of motion of a non-relativistic particle moving in one dimension with zero total energy
 \be\label{eq:eomgeodesics}
 \dot{\rho}^2 + V_{\text{eff}}(\rho) = 0\,,
 \ee 
 subject to the effective potential
 \be\label{eq:eomgeodesicsveff}
 V_{\text{eff}}(\rho) = \dfrac{r_m^2}{r^2(\rho)}-1\,.
 \ee 

 The parameter $r_m = r^2 {\dot x}$ is a constant of motion along the trajectory. The endpoints of the trajectory will lie at the turning points of $V_{\text{eff}}(\rho)$, which lie at the same value of the radial coordinate $r(x_\pm) = r_m <r_0$. As $x$ goes from $-L_{\text{IR}}/2$ to $L_{\text{IR}}/2$, $\rho$ will oscillate between the turning points, traversing that region one time. Note that the endpoints will lie on different sides of this surface, as shown in Fig.\ \ref{fig:planarlunch}.
 
 The endpoint radius $r_m$ is determined by the total range $x$ coordinate, which is
 \be\label{eq:ellapsedx}
 L_{\text{IR}}= \int_{\rho^-_m}^{\rho^+_m} \dfrac{r_m\text{d}\rho}{r^2\sqrt{-V_{\text{eff}}(\rho)}}\,,
 \ee 
 where $\rho^{\pm}_m$ are the two solutions to $r(\rho) = r_m$ closest to $\rho_0$.  From \eqref{eq:area1} and \eqref{eq:ellapsedx}, it is easy to see that the area of the bulge will satisfy
 \be\label{eq:relation2dbulge} 
 \text{Area}(X^b_\A) - r_m L_{\text{IR}} =  \int_{\rho^-_m}^{\rho^+_m} \text{d}\rho \sqrt{-V_{\text{eff}}(\rho)}\,.
 \ee 
 
 In the thermodynamic limit $L_{\text{IR}}\rightarrow \infty$, the endpoint asymptotes to the value of the radius at the constriction, $r_m \rightarrow r_c= r(\rho_c)$. Then, the left-hand side of \eqref{eq:relation2dbulge} becomes $\text{Area}(X^b_\A) - \text{Area}(X^c_\A)$, which is the difference in areas entering into the exponent of the PLC \eqref{eq:PLC}. Under this identification, \eqref{eq:relation2dbulge} reduces to \eqref{eq:logcomplexityblackbrane}.

 \subsection{Markov length}

 We now consider $\A$ to be a single interval $x\in [0,\ell_{\A}]$ and consider the two boundary-anchored geodesics $X^1_\A$ and $X^2_\A$ homologous to $\A$. Essentially, the same equation \eqref{eq:eomgeodesics} with the effective potential \eqref{eq:eomgeodesicsveff} governs the geodesics, with $r^1_m \geq r_c$ for $X^1_{\A}$ and $r_c> r^2_m\geq r_h$ for $X^2_{\A}$. The allowed values of $r_m$ are implicitly determined as a function of $\ell_{\A}$ by the equation
 \be
 \dfrac{\ell_{\A}}{2}= \int_{\rho_m}^{\rho_\infty} \dfrac{r_m\text{d}\rho}{r^2\sqrt{-V_{\text{eff}}(\rho)}}\,,
 \ee
 where $\rho_m$ is the solution to $r(\rho_m) = r_m$. The area of the surface is then
 \be\label{eq:areaminlengt} 
 \text{Area}(X^i_\A) -r_m^i \ell_{\A} =2  \int_{\rho^i_m}^{\rho_\infty} \text{d}\rho \sqrt{-V_{\text{eff}}(\rho)}\,,
 \ee 
 where recall that the effective potential $V_{\text{eff}}(\rho)$ in \eqref{eq:eomgeodesicsveff} depends on the value of $r^i_m$. The upper bound of the integral in \eqref{eq:areaminlengt} must be suitably regularized, introducing the cutoff $\rho_\infty$, in order to render the area finite.

If the interval is large enough, we have $r_m^1\approx r_c$ and $r_m^2\approx r_h$, and the area difference becomes
 \be
 \text{Area}(X^2_\A) -\text{Area}(X^1_\A) = (r_h-r_c) \ell_{\A} + 2 \ell_{12}\,,
 \ee 
 for the length
 \be 
 \ell_{12} = \int_{\rho_h}^{\rho_\infty} \text{d}\rho \sqrt{1-\frac{r_h^2}{r^2(\rho)}} - \int_{\rho_c}^{\rho_\infty} \text{d}\rho \sqrt{1-\frac{r_c^2}{r^2(\rho)}} \,.
 \ee 
 The difference in areas becomes negative for intervals larger than the Markov length
 \be 
 \ell_{\rm M } = \dfrac{2\ell_{12}}{r_c-r_h}\,.
 \ee 

 For the vacuum solution with a planar thin stress-energy \eqref{eq:geomex}, this yields
 \be 
\dfrac{\ell_{12}}{\ell_{\text{AdS}}} = \log \dfrac{2r_b r_c}{r_h}  +\log \dfrac{\sqrt{r_b^2-r_c^2} + \sqrt{r_b^2-r_h^2}}{r_c^2-r_h^2} - \dfrac{r_h}{r_c} \left(\text{arctanh}\left(\dfrac{r_h}{r_c} \dfrac{\sqrt{r_b^2-r_c^2}}{\sqrt{r_b^2-r_h^2}}\right) + \text{arctanh}\left(\dfrac{r_h}{r_c}\right)\right) \,,
\ee
where we have removed the regulator, sending $\rho_\infty \rightarrow \infty$. For $r_c \gg r_h$, this length scale approaches the length of the wormhole $\ell_{12}\approx \ell_0$.

In Fig. \ref{fig:comp-grid} we numerically verify the hierarchy between the different lengthscales for this example
\be 
\ell \leq \ell_{12} \leq \ell_0\,.
\ee 

\begin{figure}[ht]
  \centering
  \begin{tabular}{cc}
    \includegraphics[width=0.48\textwidth]{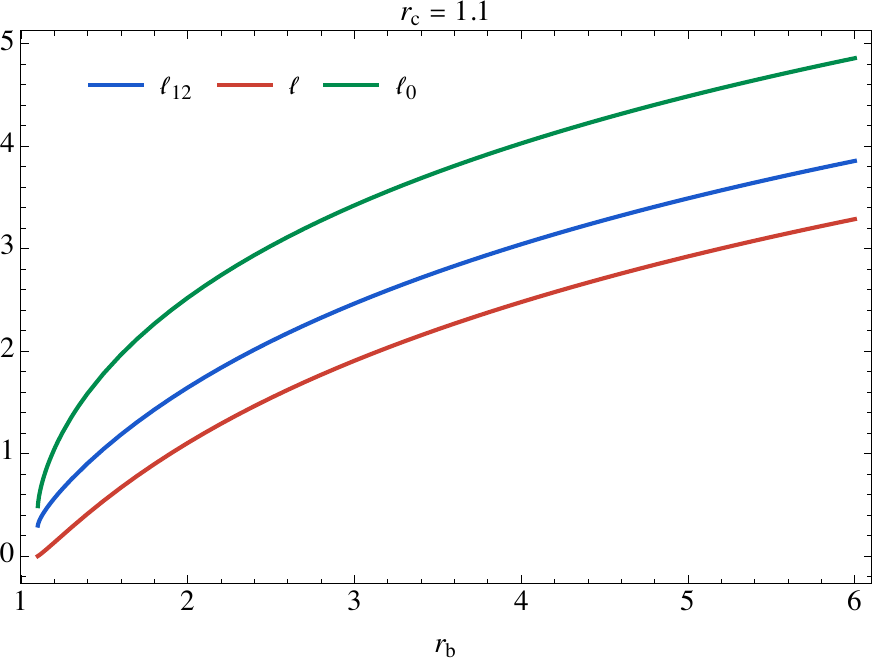} &
    \includegraphics[width=0.48\textwidth]{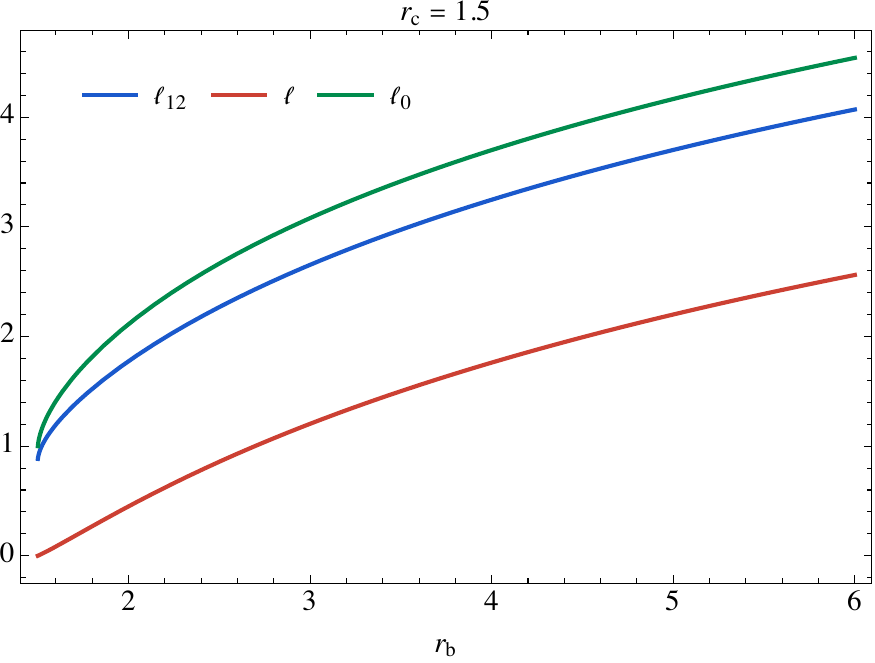} \\
    \includegraphics[width=0.48\textwidth]{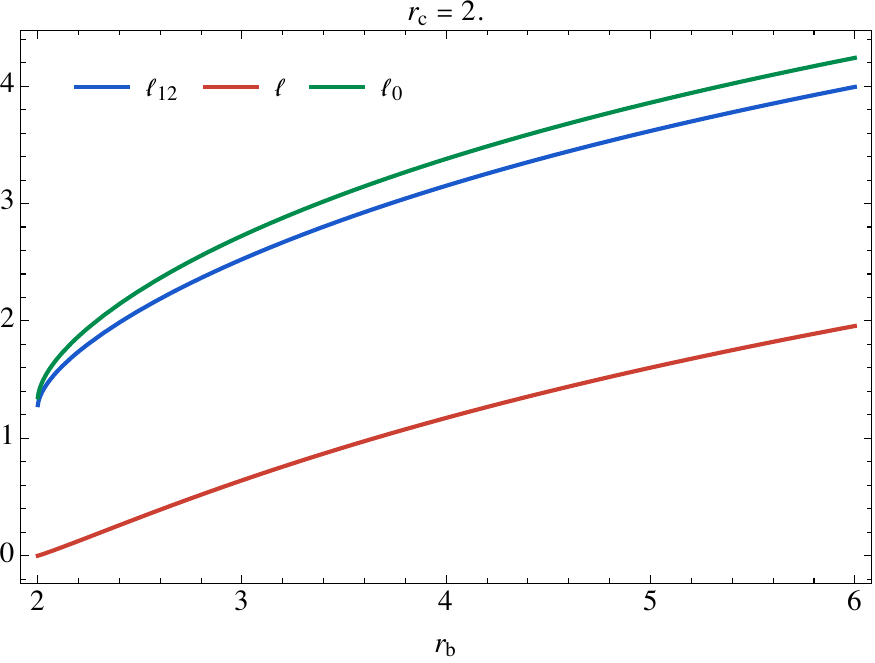} &
    \includegraphics[width=0.48\textwidth]{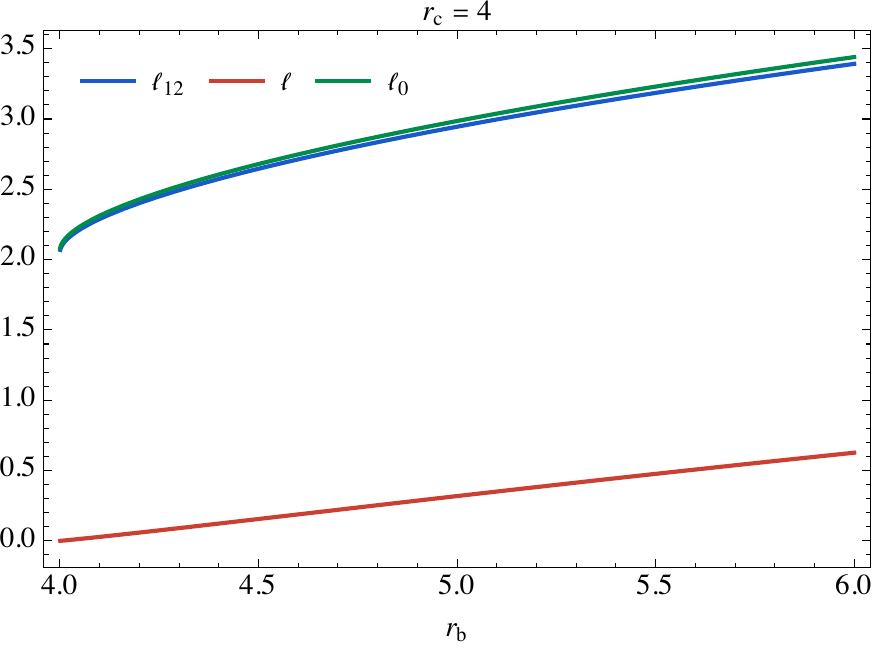}
  \end{tabular}
  \caption{Comparison of different lengthscales for different values of $r_c = \lbrace 1.1, 1.5, 2.0, 4.0\rbrace $ with $r_h =1$.}
  \label{fig:comp-grid}
\end{figure}

\section{Some implications of isoperimetric measure concentration}
\label{app:measureconcentration}

In this appendix we collect sharper finite-$D$ concentration statements for individual Gaussian random maps $R$, in the context of Sec.~\ref{sec:gaussianRT}. We first recall the standard operator-norm concentration result for rectangular Gaussian matrices. We then turn to the normalized Schatten-$1$ distance, which is most naturally treated as a bounded Lipschitz spectral statistic of the Wishart matrix $R^\dagger R$. Finally, we explain the corresponding implication for the circuit complexity of the polar isometry.

\begin{theorem}
Let $R:\mathbb C^{D^n}\to\mathbb C^{D^m}$ be a complex Gaussian random map with i.i.d.\ entries normalized so that $\overline{R^\dagger R}=\mathbf 1_{D^n}$. Write $r=D^{\,n-m}\le 1$. Then for every $t\in(0,1)$,
\be
\Pr\!\left[
(1-\sqrt r-t)^2\le \lambda_{\min}(R^\dagger R)\le
\lambda_{\max}(R^\dagger R)\le (1+\sqrt r+t)^2
\right]
\ge 1-2e^{-cD^m t^2},
\ee
for a universal constant $c>0$. In particular, with at least the same probability,
\be
\big\|R^\dagger R-\mathbf 1_{D^n}\big\|_\infty
\le 2(\sqrt r+t)+(\sqrt r+t)^2.
\ee
\end{theorem}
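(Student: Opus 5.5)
The plan is to reduce the statement to the classical non-asymptotic bound on the extreme singular values of a rectangular Gaussian matrix (Gordon's inequality combined with Gaussian concentration for Lipschitz functions). First fix the normalization. Write $R = G/\sqrt{N}$ with $N:=D^m$, $M:=D^n\le N$, and $G$ an $N\times M$ matrix with i.i.d.\ standard complex Gaussian entries, so that $\overline{R^\dagger R}=\mathbf 1_{M}$ and $\sqrt{M/N}=\sqrt r$. The eigenvalues of $R^\dagger R$ are the squared singular values of $R$. It therefore suffices to show that, with probability at least $1-2e^{-cNt^2}$,
\be
1-\sqrt r - t \;\le\; s_{\min}(R)\;\le\; s_{\max}(R)\;\le\; 1+\sqrt r+t ,
\ee
and then square. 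For $t<1$ and $\sqrt r\le 1$ the lower endpoint may be negative; in that case the lower bound on $\lambda_{\min}$ is vacuous rather than false, so we may restrict to the regime $1-\sqrt r-t\ge 0$ when squaring.

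\textbf{Expectation bounds.} We realify. A complex Gaussian matrix with $\mathbb E|G_{ij}|^2=1$ has real and imaginary parts with variance $1/2$. Using Gordon's (Slepian-type) comparison for the Gaussian process $(u,v)\mapsto \mathrm{Re}\,\langle u, G v\rangle$ on the product of unit spheres, we obtain
\be
\sqrt N-\sqrt M\;\le\;\mathbb E\, s_{\min}(G)\;\le\;\mathbb E\, s_{\max}(G)\;\le\;\sqrt N+\sqrt M .
\ee
In the complex case the constants can be taken directly from the standard Davidson--Szarek bound, or obtained by applying the real version to the $2N\times 2M$ real representation. That representation has the same singular values with doubled multiplicity and entry variance $1/2$, which reproduces the same bounds. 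This step is the one I would be most careful about, since factors of $\sqrt 2$ from the realification must cancel exactly for the statement to come out with centering $1\pm\sqrt r$ and no extra constants.

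\textbf{Concentration.} The maps $G\mapsto s_{\max}(G)$ and $G\mapsto s_{\min}(G)$ are $1$-Lipschitz with respect to the Frobenius norm, by Weyl's inequality $|s_k(A)-s_k(B)|\le\|A-B\|_\infty\le\|A-B\|_2$. Viewed as functions of the $2NM$ real Gaussian coordinates of variance $1/2$, they are therefore Lipschitz with constant $1/\sqrt2$ in standard units. Gaussian concentration (Tsirelson--Ibragimov--Sudakov) then gives
\be
\Pr\big[\,|s(G)-\mathbb E s(G)|\ge \tau\,\big]\le 2e^{-\tau^2}.
\ee
Setting $\tau=t\sqrt N$ and dividing by $\sqrt N$ yields deviations of $s(R)$ larger than $t$ with probability at most $2e^{-cNt^2}$. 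A union bound over the upper and lower events only changes the universal constant $c$. This gives the eigenvalue sandwich.

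\textbf{Operator-norm bound.} Since $R^\dagger R-\mathbf 1$ is Hermitian with spectrum inside $[(1-\delta)^2-1,\,(1+\delta)^2-1]$, where $\delta=\sqrt r+t$, its operator norm is at most $\max\big(2\delta+\delta^2,\;2\delta-\delta^2\big)=2\delta+\delta^2$. This is the stated bound.
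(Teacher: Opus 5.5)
Your route is exactly the standard argument the paper relies on: Gordon-type bounds for $\mathbb E\, s_{\min}$ and $\mathbb E\, s_{\max}$, Gaussian concentration for the $1$-Lipschitz maps $G\mapsto s_k(G)$, a union bound, and squaring. The paper gives no proof beyond citing the extreme-singular-value bound (in references stated for real matrices) and rewriting it for $R^\dagger R$. Your normalization, Lipschitz constants and concentration exponent are correct. Two steps, however, do not hold up as written.

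\emph{Complex expectation bounds.} The realification you propose does not work. The $2N\times 2M$ real representation, with blocks $\mathrm{Re}\,G$, $-\mathrm{Im}\,G$, $\mathrm{Im}\,G$, $\mathrm{Re}\,G$, does not have independent entries: every entry of $\mathrm{Re}\,G$ and $\mathrm{Im}\,G$ occurs twice. So Gordon's theorem for i.i.d.\ matrices does not apply to it, and the issue is not one of tracking factors of $\sqrt 2$. The natural comparison behind Gordon's bound genuinely fails. Take $X_{u,v}=\mathrm{Re}(u^*Gv)$ and $Y_{u,v}=\mathrm{Re}\langle g,u\rangle+\mathrm{Re}\langle h,v\rangle$, with $g,h$ standard complex Gaussian so that $\mathbb E\sup Y\le\sqrt N+\sqrt M$. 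For the pair $(u,v)$, $(e^{i\theta}u,e^{-i\theta}v)$ one gets $\mathbb E(X-X')^2=2\sin^2\theta$ but $\mathbb E(Y-Y')^2=4\sin^2(\theta/2)$, which is smaller whenever $0<|\theta|<\pi/2$. The complex bounds must therefore come from a source that actually treats complex entries, or be proved by other means. It helps that only a universal $c$ is required: bounds $\mathbb E\,s_{\max}(G)\le\sqrt N+\sqrt M+C$ and $\mathbb E\,s_{\min}(G)\ge\sqrt N-\sqrt M-C$ with an absolute constant $C$ would suffice. For $t\le 2C/\sqrt N$ the claimed probability bound is vacuous once $c\le\log 2/(4C^2)$. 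For larger $t$ the shift is absorbed by running the concentration step with $t/2$.

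\emph{Squaring.} When $1-\sqrt r-t<0$, the bound $\lambda_{\min}(R^\dagger R)\ge(1-\sqrt r-t)^2$ is not vacuous, because its right-hand side is strictly positive. It can in fact be false. For $r=1$ (square $R$) one has $\lambda_{\min}(R^\dagger R)\to 0$ in probability. Yet with $t=1/2$ the statement would demand $\lambda_{\min}\ge 1/4$ with probability at least $1-2e^{-cD^m/4}\to 1$. So restricting to $1-\sqrt r-t\ge 0$ is not a harmless reduction. What your argument proves, and what is true, is the sandwich with lower endpoint $\max\{0,1-\sqrt r-t\}^2$. This defect is inherited from the statement itself, since the paper simply squares the singular-value bound. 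The operator-norm consequence is unaffected. If $\delta=\sqrt r+t\ge 1$, then $\lambda_{\min}\ge 0$ and $\lambda_{\max}\le(1+\delta)^2$ already give $\|R^\dagger R-\mathbf 1\|_\infty\le\max\{1,2\delta+\delta^2\}=2\delta+\delta^2$.
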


This is the standard extreme-singular-value bound for rectangular Gaussian matrices \cite{rudelson2009smallest,vershynin2018high}, rewritten in terms of the eigenvalues of the Wishart matrix $R^\dagger R$. The theorem determines the spectral edge of the Wishart matrix $R^\dagger R$, whose natural fluctuation scale is $O(\sqrt r)$.

To study the $1$-norm distance from isometry, it is convenient to first consider normalized spectral averages of bounded Lipschitz test functions.

\begin{theorem}
Let $R:\mathbb C^{D^n}\to\mathbb C^{D^m}$ be as above, and let $f:[0,\infty)\to\mathbb R$ be a bounded Lipschitz function. Then there exist constants $c_f,C_f>0$, depending only on $f$ and independent of $D$, such that for every $\delta>0$,
\be
\Pr\!\left[
\left|
\frac{1}{D^n}\Tr f(R^\dagger R)
-
\mathbb E\!\left(\frac{1}{D^n}\Tr f(R^\dagger R)\right)
\right|
\ge \delta
\right]
\le
C_f e^{-c_f D^{2n}\delta^2}.
\ee
\end{theorem}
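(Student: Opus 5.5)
The plan is to derive this from Gaussian concentration of measure for Lipschitz functions of the entries, via the standard Hoffman--Wielandt argument for linear spectral statistics. First, write $R = D^{-m/2} G$, where $G$ has i.i.d.\ standard complex Gaussian entries; this choice gives $\overline{R^\dagger R}=\mathbf 1_{D^n}$. View $G$ as a real Gaussian vector in $\mathbb R^{2D^{n+m}}$. Set $F(G) := \frac{1}{D^n}\Tr f(R^\dagger R) = \frac{1}{D^n}\sum_i g(\sigma_i(R))$, where $g(x):=f(x^2)$ and $\sigma_i(R)$ are the $D^n$ singular values of $R$. The goal is to show that $F$ is Lipschitz in the Euclidean (Frobenius) norm of $G$, with constant of order $D^{-(n+m)/2}$. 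The Gaussian isoperimetric inequality, $\Pr[|F-\mathbb E F|\ge\delta]\le 2e^{-\delta^2/(2\,\mathrm{Lip}(F)^2)}$, would then give a tail $e^{-c D^{n+m}\delta^2}$. Since $m\ge n$, this is at least as strong as the claimed $e^{-c_f D^{2n}\delta^2}$.

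The Lipschitz estimate goes in two steps. By the Hoffman--Wielandt (Mirsky) inequality, $\sum_i|\sigma_i(A)-\sigma_i(B)|^2\le \|A-B\|_2^2$. Hence, for a function $g$ with Lipschitz constant $L_g$, Cauchy--Schwarz gives
\be
\left|\frac{1}{D^n}\sum_i g(\sigma_i(A)) - \frac{1}{D^n}\sum_i g(\sigma_i(B))\right| \le \frac{L_g}{\sqrt{D^n}}\,\|A-B\|_2 .
\ee
Composing with $R=D^{-m/2}G$ contributes a further factor $D^{-m/2}$, which yields $\mathrm{Lip}(F)\le L_g D^{-(n+m)/2}$.

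The main obstacle is that $g(x)=f(x^2)$ is not globally Lipschitz, since $g'(x)=2xf'(x^2)$ is unbounded. I would handle this by truncation, using the preceding extreme-singular-value theorem. On the event $K=\{\lambda_{\max}(R^\dagger R)\le 9\}$, whose complement has probability at most $2e^{-cD^m}$ (take $t$ of order one there), only the values of $f$ on $[0,9]$ matter. So replace $f$ by $\tilde f$, which equals $f$ on $[0,9]$ and is constant, equal to $f(9)$, beyond. The corresponding $\tilde g(x)=\tilde f(x^2)$ is then globally Lipschitz with $L_{\tilde g}\le 6\,\mathrm{Lip}(f)$. Define $\tilde F$ from $\tilde f$ in the same way. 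The argument above applies to $\tilde F$, and $F=\tilde F$ on $K$. Moreover $|\mathbb E F-\mathbb E\tilde F|\le 2\|f\|_\infty\Pr[K^c]$, which is exponentially small and is absorbed into the $\delta$ shift for $D$ large.

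This gives
\be
\Pr[|F-\mathbb E F|\ge\delta]\le 2e^{-c'D^{n+m}\delta^2}+2e^{-cD^m},
\ee
and it remains to absorb the second term. It is enough to treat $\delta\le 2\|f\|_\infty$, because for larger $\delta$ the probability is zero. The bookkeeping point I expect to need care with is the regime where $D^m$ is small compared with $D^{2n}\delta^2$ (possible when $m<2n$). There the exceptional term is not automatically dominated, and I would first check whether one can avoid the truncation event altogether. The idea would be to apply the Lipschitz extension (McShane) directly to $F$ restricted to $K$, and to bound the contribution of $K^c$ using the boundedness of $f$ together with the stronger Gaussian tail $\Pr[\lambda_{\max}\ge s]\le e^{-cD^m s}$ for large $s$. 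If that does not work, I would state the bound with the additive $e^{-cD^m}$ term, which still suffices for the $D\to\infty$ conclusions used later.
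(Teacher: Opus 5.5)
\textbf{There is a genuine gap when $n\le m<2n$.} Your core estimate is right. Mirsky's inequality plus Gaussian concentration shows that $F$ is $O(\mathrm{Lip}(g)\,D^{-(n+m)/2})$-Lipschitz in $G$. This is the same mechanism behind the Guionnet--Zeitouni result that the paper simply cites. It gives the claim directly, with the better rate $D^{n+m}\delta^2$, whenever $x\mapsto f(x^2)$ is globally Lipschitz; this covers the function $f_L$ used in the appendix. Your truncation also finishes the proof when $m\ge 2n$.

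The gap is the case $n\le m<2n$ for a general bounded Lipschitz $f$, and it is not a bookkeeping issue. We have $\Pr[K^c]\ge\Pr[\|Ge_1\|^2>9D^m]\ge e^{-CD^m}$, a Gamma large deviation. For $\delta$ of order one this is far larger than $e^{-c_fD^{2n}\delta^2}$. So no bound of the form $(\text{Gaussian term})+\Pr[K^c]$ can yield the statement. The McShane extension of $F|_K$ reproduces exactly the same bad event. The large-$s$ tail of $\lambda_{\max}$ does not help either, because the dominant contribution to $\Pr[K^c]$ comes from $\lambda_{\max}$ just above $9$. Falling back to an additive $e^{-cD^m}$ term proves a strictly weaker theorem.

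\textbf{The missing idea is not to discard $K^c$.} One eigenvalue above the truncation level changes $\frac{1}{D^n}\Tr f$ by at most $2\|f\|_\infty/D^n$. A deviation of size $\delta$ therefore needs roughly $\delta D^n$ such eigenvalues, which is far rarer.

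Concretely, let $N_9:=\#\{i:\lambda_i(R^\dagger R)>9\}$; then $|F-\tilde F|\le 2\|f\|_\infty N_9/D^n$. If $N_9\ge k$, then $h_k(G):=(\sum_{i\le k}\sigma_i(G)^2)^{1/2}\ge 3\sqrt{kD^m}$. The function $h_k$ is a unitarily invariant norm dominated by the Frobenius norm, so it is $1$-Lipschitz. Moreover $\mathbb E h_k\le\sqrt k\,\mathbb E\|G\|_\infty\le 2\sqrt{kD^m}$, so Gaussian concentration gives $\Pr[N_9\ge k]\le e^{-ckD^m}$.

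Take $k=\lceil \delta D^n/(8\|f\|_\infty)\rceil$. This gives $\Pr[|F-\tilde F|\ge\delta/4]\le e^{-c\delta D^{n+m}/(8\|f\|_\infty)}\le e^{-c'\delta^2D^{2n}}$, using $\delta\le 2\|f\|_\infty$ and $m\ge n$. Similarly, $|\mathbb EF-\mathbb E\tilde F|\le 2\|f\|_\infty\,\mathbb E N_9/D^n$ is exponentially small in $D^m$. For $\delta$ below four times this quantity, the claimed bound is trivial after enlarging $C_f$. Combining this with your concentration of $\tilde F$ at level $\delta/2$ yields the stated inequality for all $m\ge n$.
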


This theorem follows from standard concentration results for spectral measures of Wishart matrices \cite{guionnet2000concentration}.

To apply this theorem to the Schatten-$1$ norm, fix $L>2$ and choose a bounded Lipschitz function $f_L$ such that
\be
f_L(x)=|x-1|
\qquad\text{for }x\in[0,L],
\ee
and $f_L$ is constant for sufficiently large $x$. Then $\frac{1}{D^n}\Tr f_L(R^\dagger R)$ is exponentially concentrated around its mean. Moreover, if $L>(1+\sqrt r+t)^2$, the previous theorem implies that with probability at least $1-2e^{-cD^m t^2}$ all eigenvalues of $R^\dagger R$ lie in $[0,L]$, and therefore
\be
\frac{1}{D^n}\Tr f_L(R^\dagger R)
=
\frac{1}{D^n}\big\|R^\dagger R-\mathbf 1_{D^n}\big\|_1.
\ee
It follows that the normalized Schatten-$1$ distance
\be
\frac{1}{D^n}\big\|R^\dagger R-\mathbf 1_{D^n}\big\|_1
\ee
inherits the same concentration, up to an exponentially small bad event.

Its mean is determined by the Marchenko--Pastur law. In the large-$D$ limit one has
\be
\mathbb E\!\left[\frac{1}{D^n}\big\|R^\dagger R-\mathbf 1_{D^n}\big\|_1\right]
=
\mu_1(r)+o(1),
\ee
with
\be
\mu_1(r)
=
\frac{(1+r)\,E\!\left(\frac{4r}{(1+r)^2}\right)
-(1-r)\,K\!\left(\frac{4r}{(1+r)^2}\right)}{\pi r}.
\ee
For small $r$,
\be
\mu_1(r)=\frac r2+\frac{r^3}{16}+O(r^5),
\ee
and hence
\be
\mathbb E\big\|R^\dagger R-\mathbf 1_{D^n}\big\|_1
=
D^n\mu_1(r)+o(D^n)
\sim \frac12 D^{2n-m}.
\ee
Thus the trace-norm deviation vanishes for $m>2n$, in agreement with Theorem~\ref{thrm:gaussianrandomtensor}.

We now turn to the circuit complexity of the polar isometry. Let
\be
R=T_R(R^\dagger R)^{1/2}
\ee
be the polar decomposition of $R$, so that $T_R:\mathcal H_A\hookrightarrow\mathcal H_B$ is an isometry with $\dim\mathcal H_A=D^n$ and $\dim\mathcal H_B=D^m$. Since the polar part of a Gaussian random matrix is Haar distributed on the complex Stiefel manifold of isometries from $\mathcal H_A$ to $\mathcal H_B$, standard $\varepsilon$-net counting arguments imply that a typical such isometry has maximal circuit complexity up to constants.

\begin{theorem}
Let $R:\mathcal H_A\to\mathcal H_B$ be a Gaussian random map with
$\dim\mathcal H_A=D^n$, $\dim\mathcal H_B=D^m$, and $m>2n$, and let $T_R$ be its polar isometry. Fix a universal finite gate set and an accuracy $\varepsilon\in(0,\varepsilon_0]$, where $\varepsilon_0$ depends only on the gate set. Then, with probability tending to one as $D\to\infty$, any circuit built from $K$ two-local gates on qudits of local dimension $D$ that prepares an isometry $T$ satisfying
\be
\|T-T_R\|_2\le \varepsilon
\ee
must obey
\be
K\ge c_\varepsilon D^{\,n+m},
\ee
for some constant $c_\varepsilon>0$ depending only on $\varepsilon$ and the gate set. Conversely, standard results for general isometries imply the upper bound
\be
K\le C_\varepsilon D^{\,n+m},
\ee
for some constant $C_\varepsilon>0$. Hence
\be
\mathcal C(T_R)=\Theta(D^{\,n+m}).
\ee
\end{theorem}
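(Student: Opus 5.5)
The plan has three parts: identify the distribution of $T_R$, prove the lower bound by counting circuits against an $\varepsilon$-packing of the isometry manifold, and get the upper bound from a standard synthesis construction, as in Appendix~\ref{app:overestimate}.

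\textbf{Distribution of $T_R$.} Because the Gaussian ensemble is invariant under $R\mapsto W R V^\dagger$ for unitaries $W,V$, the polar factor $T_R=R(R^\dagger R)^{-1/2}$ is Haar distributed on the complex Stiefel manifold $\mathrm{St}(D^m,D^n)$. This map is well defined almost surely, since $R$ has full rank with probability one. For $m>2n$, Theorem~\ref{thrm:gaussianrandomtensor} also says $R$ is itself close to $T_R$, but that is not needed here. It therefore suffices to show that a Haar-random isometry has complexity $\Theta(D^{n+m})$ with probability tending to one.

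\textbf{Lower bound.} I will redo the counting argument of Appendix~\ref{app:overestimate} with explicit volumes. Let $N_{\rm tot}=n+m$ be the total number of qudits, counting input and ancillas. A circuit of $K$ gates from a finite gate set $\mathcal G$ is specified by at most $(|\mathcal G|N_{\rm tot}^2)^K$ choices, and each choice yields one isometry. Hence the set of isometries reachable within $\varepsilon$ is a union of at most $(|\mathcal G|N_{\rm tot}^2)^K$ Frobenius balls of radius $\varepsilon$.

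Next I bound the Haar measure of one such ball on $\mathrm{St}(D^m,D^n)$. The manifold has real dimension $d=2D^{n+m}-D^{2n}\ge D^{n+m}$, using $m>2n$. I will use the standard estimate that a ball of radius $\varepsilon$ has normalized measure at most $(C\varepsilon/\sqrt{D^n})^{d}$. This follows either from comparison with the volume of a ball in $\mathbb{R}^d$, using curvature bounds on the Stiefel manifold (a homogeneous space of the unitary group), or from the known metric entropy of unitary groups (Szarek). Here $\sqrt{D^n}$ is the Frobenius diameter scale.

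Requiring the union of balls to cover probability $1-o(1)$ then gives
\begin{equation*}
K\log(|\mathcal G|N_{\rm tot}^2)\ \ge\ d\log\!\big(\sqrt{D^n}/(C\varepsilon)\big)-O(1).
\end{equation*}
Since $N_{\rm tot}$ grows only logarithmically relative to $D^{n+m}$, this yields $K\ge c_\varepsilon D^{n+m}$, possibly up to a $\log D$-type factor. The statement treats such factors as absorbed into constants, consistent with how the paper tracks only the leading exponent.

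\textbf{Upper bound.} I will embed $T_R$ into a unitary on $m$ qudits, with the input padded by ancillas in $\ket{0}$. Only the $D^n$ columns acting on the input subspace are constrained. Synthesis by Householder reflections or Givens rotations (sequential state preparation of each column, as in Knill or Shende--Bullock--Markov) then realizes the isometry with $O(D^n\cdot D^m)$ two-qudit gates. Each gate is then compiled to accuracy $\varepsilon/K$ with Solovay--Kitaev, at polylogarithmic overhead. This gives $K\le C_\varepsilon D^{n+m}$, again up to subleading factors.

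\textbf{Main obstacle.} The hardest step is the volume estimate for $\varepsilon$-balls on the Stiefel manifold in the Frobenius norm. It must hold uniformly for small $\varepsilon$, and it controls the claimed constant $c_\varepsilon$. I would derive it by pushing the Haar measure on $U(D^m)$ forward to the Stiefel manifold via the map $U\mapsto U\Pi$, where $\Pi$ embeds the first $D^n$ columns, and then applying Szarek's covering-number bounds for $U(D^m)$ restricted to those columns. The remaining delicate point is the polylogarithmic discrepancy between lower and upper bounds, which I will state explicitly and absorb into the constants.
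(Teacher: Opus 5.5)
Your proposal follows the paper's argument: $T_R$ is Haar distributed on the Stiefel manifold by bi-unitary invariance. The lower bound compares the $\exp(O(K))$ circuits of size $K$ with the Haar measure of $\varepsilon$-balls on a manifold of real dimension $2D^{n+m}-D^{2n}$, and the upper bound comes from standard isometry synthesis. You make the ball-volume estimate and the synthesis step more explicit than the paper does, and, like the paper, you obtain $\Theta(D^{n+m})$ only up to polylogarithmic factors in $D$, which matches the paper's stated convention of tracking only the leading large-$D$ exponent.
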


The lower bound follows from the fact that the manifold of isometries $\mathcal H_A\to\mathcal H_B$ has real dimension $2D^{n+m}-D^{2n}$, whereas the number of distinct circuits of length $K$ over a fixed finite gate set grows only exponentially in $K$ (see Appendix \ref{app:overestimate}). For fixed $\varepsilon$, the union of all $\varepsilon$-balls around circuits of size $K\ll D^{n+m}$ therefore covers only an exponentially small fraction of the Stiefel manifold. Since $T_R$ is Haar random on this manifold, it follows that approximating $T_R$ typically requires $K=\Omega(D^{n+m})$.

\section{Mean curvature flow and everywhere locally expanding foliation}
\label{app:contraction}

In this appendix we prove the following theorem:

\begin{theorem}\label{thm:foliation}
Let $\Sigma$ be a compact, connected, orientable Riemannian manifold with boundary $\partial\Sigma=X^b\cup X^c$, where $X^b$, $X^c$ are mutually disjoint, closed, and extremal, and where the interior of $\Sigma$ does not contain any closed extremal hypersurfaces. Suppose also that $X^b$ is connected and has non-zero Morse index. Let $\mathcal{C}$ be the (common) homology class of $X^b,X^c$. Then there exists a foliation of $\Sigma$ by hypersurfaces $X_t\in\mathcal{C}$, interpolating between $X^b$ and $X^c$, with extrinsic curvature obeying $K\le0$ everywhere, where $K$ is defined in the direction of increasing $t$ (in other words, the foliation is everywhere locally contracting).
\end{theorem}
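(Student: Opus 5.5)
We will construct the foliation by mean curvature flow (MCF) started from a small perturbation of the bulge along its unstable mode, as suggested in Sec.~\ref{sec:localshrink}. Since $X^b$ is extremal with nonzero Morse index, the Jacobi (stability) operator $L=-\Delta-|K|^2-\mathrm{Ric}(n,n)$ on $X^b$ has a lowest eigenvalue $\lambda_1<0$. Because $X^b$ is connected, the corresponding eigenfunction $\phi_1$ can be taken strictly positive. We orient $\phi_1$ so that the normal displacement points into $\Sigma$, i.e.\ towards $X^c$. For small $\epsilon>0$ we then set $X_\epsilon=\{\exp_p(\epsilon\phi_1(p)n(p))\}$. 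To first order its mean curvature is $H_\epsilon=\epsilon L\phi_1+O(\epsilon^2)=\epsilon\lambda_1\phi_1+O(\epsilon^2)$. This has a strict sign everywhere, so $X_\epsilon$ is strictly mean-convex in the sense that moving it further in the direction of $n$ decreases area to first order at every point. Equivalently, with $t$ increasing from $X^c$ towards $X^b$, the area locally increases, which is the convention $K\le 0$ in the statement. The family $\{X_s\}_{0\le s\le\epsilon}$ itself already gives the initial piece of the foliation, and each slice there has the required sign for $s$ small.

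Next, we run MCF from $X_\epsilon$ in the direction of decreasing area, away from $X^b$ and into $\Sigma$. The key step is that strict mean-convexity is preserved. Under MCF the mean curvature satisfies $\partial_\tau H=\Delta H+(|K|^2+\mathrm{Ric}(n,n))H$, so the strong maximum principle keeps $H$ strictly signed for positive time. The flow is then monotone: each point moves strictly in one normal direction, and the evolving hypersurfaces are disjoint. They therefore sweep out a region and foliate it, and $X^b$ and the boundary $X^c$ act as barriers by the avoidance principle since both are minimal. Monotone mean-convex flows exist through singularities in the level-set or weak sense (White's theory of mean-convex flows; Evans--Spruck and Chen--Giga--Goto). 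The arrival time function $u$ is Lipschitz with level sets $X_t=\{u=t\}$, which gives the leaves. The sign of $K$ holds classically on the regular part and in the viscosity sense elsewhere.

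Third, we identify the limit. As $\tau\to\infty$, area is bounded below and monotone, so the flow subconverges to a stationary (extremal) hypersurface in $\mathcal{C}$ contained in $\Sigma$. By hypothesis, the interior of $\Sigma$ contains no closed extremal hypersurface, so the limit must be $X^c$. The leaves stay in $\mathcal{C}$ because they are obtained by a homotopy inside $\Sigma$. Concatenating the short eigenfunction family with the flow and reparametrizing $t$ yields the foliation from $X^b$ to $X^c$.

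The main obstacle is the regularity and genuine foliation property: ensuring that singularities of the mean-convex flow do not produce fattening, that there is no topology change incompatible with the homology class, and that the region is fully swept with no gaps. Our first attempt will be to invoke White's regularity and structure results for mean-convex flows, which rule out fattening and give a singular set of small dimension. Where needed we fall back on the level-set formulation, where leaves may be singular but $K\le0$ still holds weakly. A secondary point is handling a possible multiplicity or degenerate convergence to $X^c$; for this we use the no-interior-extremal-surface hypothesis together with the barrier argument.
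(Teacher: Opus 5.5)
Your proposal is correct and is essentially the paper's argument: mean curvature flow seeded along the positive lowest eigenfunction of $J_{X^b}$, the maximum principle for $\partial_t K=-J_{X_t}K$ to keep $K\le 0$, and monotonicity plus the absence of interior extremal surfaces to force the limit to be $X^c$. The differences are technical: you start from the explicit graph $X_\epsilon$ and glue in $\{X_s\}_{0\le s\le\epsilon}$ instead of using the paper's ancient flow emanating from $X^b$, and you pass through singularities with White's non-fattening mean-convex level-set flow instead of surgery. Two small corrections. Past a singular time, the leaves stay in $\mathcal{C}$ because each one bounds a region together with $X^c$, not because they are related by a homotopy. Also, with $t$ increasing from $X^c$ toward $X^b$ one has $K\ge 0$; the statement's $K\le 0$ corresponds to $t$ increasing from $X^b$ to $X^c$, which is the parametrization your concatenated family actually uses.
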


In the context of the PLC, the theorem establishes the existence of an everywhere locally contracting (or, in the other direction, everywhere locally expanding\footnote{We state the theorem in terms of a contracting foliation, since mean curvature flow goes in that direction. On the other hand, in our physics application, we naturally think about going in the expanding direction, from the constriction to the bulge.}) foliation of the region between the bulge and constriction. The proof of the theorem will be at a physicists' level of rigor (in other words, it will not be rigorous at all). Before getting to the proof, we recall a few general definitions and facts about hypersurfaces in Riemannian manifolds and mean curvature flow, which will be the main tool in the proof. (See \cite{mantegazza2011lecture} for an overview of mean curvature flow.)
 
We fix a compact, connected, orientable $d+1$-dimensional Riemannian manifold $\Sigma$ (possibly with boundary) with metric $g_{\mu\nu}$. We will use $y$ to denote a point in $\Sigma$, and $y^\mu$ its coordinates in a given coordinate system. By \emph{surface} we mean a compact, oriented (not necessarily connected) $d$-dimensional submanifold (without boundary) of $\Sigma$. The orientation can be specified by choosing a unit normal vector field $n^\mu$, which determines the sign of the extrinsic curvature $K_{\mu\nu}$. We will use $x$ to denote a point on a surface, $x^i$ its coordinates in a given coordinate system, and $\gamma_{ij}$ the induced metric.

Given a function $\phi$ on $X$, deforming $X$ at each point $x$ by a distance $\phi(x)$ in the direction $n^\mu(x)$ changes its area to first order as follows:
\be
\delta|X| = \int_X\gamma^{1/2}\,K\phi\,.
\ee
The first-order change in $K$ is governed by the Jacobi operator $J_X$,
\be\label{deltaK}
\delta K = J_X\phi\,,
\ee
defined as follows:
\be
J_X:=-\nabla^2-R_{\mu\nu}n^\mu n^\nu-K_{\mu\nu}K^{\mu\nu}\,,
\ee
where $\nabla^2$ is the scalar Laplacian with respect to $\gamma_{ij}$ and $R_{\mu\nu}$ is the Ricci tensor of $\Sigma$. $J_X$ is a self-adjoint operator on the space of functions on $X$. Furthermore, since the functions appearing in the operator are real, its eigenfunctions can be chosen to be real. If $K=0$ everywhere on $X$, then $X$ is extremal (or minimal, in mathematicians' parlance). The Morse index of $X$ is then the number of negative eigenvalues of $J_X$. Note that the extrinsic curvature, Jacobi operator, and Morse index are defined even if $X$ is on the boundary of $\Sigma$.

Given a continuous 1-parameter family of surfaces $\{X_t\}$ defined by $y=y(x,t)$, we define the normal velocity by
\be\label{vdef}
v:=n_\mu\partial_ty^\mu\,.
\ee
The \emph{mean curvature flow} (MCF) equation is
\be\label{MCFdef}
v=-K\,.
\ee
MCF is the gradient flow of the area functional with respect to the natural $L^2$ norm $\int_X\gamma^{1/2}\phi^2$ on functions on $X$. A MCF is a homotopy and therefore, in particular, stays within a homology class. The fixed points of MCF are the extremal surfaces. Note that MCF is insensitive to the choice of orientation; under $n^\mu\to -n^\mu$, $K\to-K$, so $v\to-v$, so $\partial_ty^\mu$ is invariant.

MCF has a parabolic or diffusive character, which we can see by linearizing about a solution. Specifically, perturbing a solution $\{X_t\}$ by a normal deformation $\phi(x,t)$, i.e.\ $y^\mu(x,t)\to y^\mu(x,t)+\phi(x,t)n^\mu(x,t)$, the linearized MCF equation is:
\be
\partial_t\phi = -J_{X_t}\phi\,,
\ee
which is a parabolic PDE. The diffusive nature of MCF guarantees the uniqueness and short-time existence of a solution: given $t_0\in\mathbf{R}$ and a hypersurface $X'$ in the interior of $\Sigma$, there is a $t_1>t_0$ and a solution $\{X_t\}$ on the interval $[t_0,t_1]$ with $X_{t_0}=X'$, and this solution is unique.

A priori, there are a few obstacles that could prevent the existence of a solution for arbitrarily long times: a singularity could develop, where the surface would cease to be smooth; it could hit the boundary of $\Sigma$; or it could self-intersect and thereby cease to be embedded. We will address each of these possibilities in turn. Again, the arguments will be somewhat informal, and we will assume a generic situation.

First, while locally MCF tends to smooth out wrinkles and bumps, it can happen that part of the surface can shrink to zero size in finite time. Generically, this will only happen with spheres. The sphere could be $d$-dimensional (the same as $X_t$), or lower-dimensional, for example, a neck. To extend the flow, it can be stopped just before the singularity forms, then continued after a local topology-changing surgery. For example, a collapsing $S^d$ can simply be removed from the hypersurface. A neck that pinches off along an $S^{d-1}$ can be excised and replaced by two balls. More generally, the surgery replaces $B^p\times S^{d-p}$ by $S^{p-1}\times B^{d-p+1}$ (where $0\le p\le d$, and $B^p$ is the $p$-dimensional ball), which is possible since both have boundary $S^{p-1}\times S^{d-p}$. The hypersurfaces before and after the surgery together bound a small part of the ambient space $\Sigma$, so while they are not homotopic, they are homologous. Furthermore, the surgery does not change the surface's area. The general idea is that, within the homology class, different homotopy classes meet along walls where the surface becomes singular, and we simply continue the gradient descent through the wall.

Next, consider the possibility that the surface hits the boundary $\partial\Sigma$. As long as the boundary is mean-curvature convex (meaning that, with $n^\mu$ chosen outward-directed, $K_{\partial\Sigma}\ge0$), MCF will not drive the surface across the boundary. Proof: Suppose at some time $t$, $X_t$ is tangent to $\partial\Sigma$ at some point $x$. Choose the unit normals to be outward-directed for both $X_t$ and $\partial_\Sigma$. At $x$, $K_{X_t}\ge K_{\partial\Sigma}$ (otherwise $X_t$ would not be contained in $\Sigma$); hence $K_{X_t}\ge0$ and $v\le0$, i.e., the surface will not be driven across $\partial\Sigma$, and the flow will continue to be well-defined.

Similarly, the surface cannot intersect itself under MCF. Proof: Assume that, at some time $t$, two points on $X_t$ are very close. For $X_t$ not to already intersect itself, the neighborhoods of the points must be parallel, and the surfaces must curve away from each other. This implies that the values of $K$ at the two points are such that, under MCF, they move away from each other (or at least do not get closer).

All in all, we may assume that the MCF can be extended infinitely in the positive time direction. Since $\Sigma$ is compact, as $t\to\infty$, $X_t$ must reach a limiting surface $X_\infty$, which must be extremal. (The other possibility, oscillating forever, is excluded by the diffusive nature of MCF.) If the initial surface is null-homologous, then the surface may disappear entirely; i.e., the limiting extremal surface may be the empty set.

Another important feature of MCF is that the surface cannot ``back up'': if at some time $t=t_0$, $v\ge0$ (equivalently, $K\le0$) everywhere on $X_t$, then for all $t>t_0$, $X_t$ has the same property. Proof: Look at the time derivative of $K$, $\partial_tK=-J_{X_t}K$ (obtained from combining \eqref{deltaK} and \eqref{MCFdef}). Suppose that, for some $t$, $K\le0$ everywhere on $X_t$, and at some point $x\in X_t$, $K=0$. Then, at $x$, $-J_{X_t}K=\nabla^2K$, which is non-positive since $K\le0$ in a neighborhood of $x$. Hence, at $x$, $\partial_t K\le0$, so $K$ cannot become positive.

The behavior of MCF in the vicinity of an extremal surface is controlled by its Jacobi operator. Let $X'$ be an extremal surface, and $y^\mu=(x^i,z)$ be  Gauss normal coordinates in a neighborhood of $X'$, with $z$ the coordinate in the direction $n^\mu$. For a family $X_t$ of surfaces described by $z=\phi(x,t)$, the MCF equation, to linear order in $\phi$, is $\partial_t\phi=-J_{X'}\phi$. Therefore, a generic MCF close to $X'$ is controlled by the smallest eigenvalue $\lambda_{\rm min}$ of $J_{X'}$. If $\lambda_{\rm min}<0$, then for any real $\alpha$, there is an MCF moving away from $X'$, which is approximately given by $\phi(x,t)=\alpha\psi(x)e^{-\lambda_{\rm min} t}$ for $t\ll0$, where $\psi(x)$ is the eigenfunction corresponding to $\lambda_{\rm min}$. On the other hand, if $\lambda_{\rm min}>0$, then a generic MCF in the vicinity of $X'$ moves toward it, and for $t\gg0$ is approximately given by $\phi(x,t)=\alpha\psi(x)e^{-\lambda_{\rm min} t}$ for some real $\alpha$. This assumes that $X'$ is in the interior of $\Sigma$, so it can be perturbed in either direction; otherwise, one or both signs of $\alpha$ may be disallowed.

We are finally ready to prove the theorem.

\begin{proof}
Choose $n^\mu$ on $X^b$ to point into $\Sigma$. Since $X^b$ has non-zero Morse index, $\lambda_{\rm min}<0$. Since $J_{X^b}$ is a Schr\"odinger-type operator and $X^b$ is connected, the corresponding eigenfunction $\psi$ can be chosen to be everywhere positive. Let $\{X_t\}$ be the MCF seeded by this eigenfunction, i.e., which for $t\ll0$ is approximately given by $e^{-\lambda_{\rm min}t}\psi(x)$. Extend the MCF to $t=+\infty$ by surgeries if necessary. For all $t$, $X_t\in\mathcal{C}$. For $t\ll0$, $v>0$ and $K<0$ everywhere on $X_t$; therefore, for all $t$, $v\ge0$ and $K\le0$ everywhere on $X_t$. Since $\{X_t\}$ limits as $t\to\infty$ to an extremal surface, and does not back up, it cannot limit to $X^b$; it must limit to $X^c$, as there are no other extremal surfaces in $\Sigma$. Therefore, as $t$ goes from $-\infty$ to $\infty$, $X_t$ moves monotonically from $X^b$ to $X^c$, and therefore $\{X_t\}$ foliates $\Sigma$. (If the last statement is not obvious, consider the following argument: Let $\tilde\Sigma_t:=\bigcup_{t'\le t}X_t$ be the part of $\Sigma$ that is foliated by all the surfaces $X_{t'}$ up to time $t$. We will build up $\tilde\Sigma_t$ step by step and show that, as $t\to\infty$, it covers all of $\Sigma$. For $t\ll0$, $\tilde\Sigma_t$ is a thin layer about $X^b$, with boundary $\partial\tilde\Sigma_t=X^b\cup X_t$. For any $t$ and any small $\delta t$, the difference between $\tilde\Sigma_{t-\delta t}$ and $\tilde\Sigma_t$ is a thin layer about $\tilde\Sigma_{t-\delta t}$, which moves the boundary from $X_{t-\delta t}$ to $X_t$. Hence, for any $t$, $\partial\tilde\Sigma_t=X^b\cup X_t$. Taking the limit $t\to\infty$, we have $\partial\tilde\Sigma_\infty=X^b\cup X^c$. Since $\Sigma$ is connected, the only subset of $\Sigma$ with boundary $X^b\cup X^c$ is itself; hence $\tilde\Sigma_\infty=\Sigma$.
\end{proof}

The theorem does not assume anything about the Morse index of $X^c$. Since the MCF in the proof limits to $X^c$, we would expect generically that it must be stable and therefore have a vanishing Morse index. In fact, we can show that this is true as a corollary of the theorem without appealing to genericity.

\begin{corollary}
Under the assumptions of Theorem \ref{thm:foliation}, $X^c$ has vanishing Morse index.
\end{corollary}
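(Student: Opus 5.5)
The plan is to argue by contradiction, reusing the foliation $\{X_t\}$ built in the proof of Theorem~\ref{thm:foliation}. That foliation converges to $X^c$ from inside $\Sigma$ as $t\to\infty$, with $K\le0$ everywhere. Suppose $X^c$ had non-zero Morse index. Then $J_{X^c}$ has $\lambda_{\min}<0$ with an eigenfunction $\psi^c>0$, since $J$ is Schr\"odinger-type. If $X^c$ is disconnected, I will work on a component that carries a negative mode; the homology class is unaffected. Orient the normal on $X^c$ to point into $\Sigma$. Near $X^c$ I write the late-time slices in Gauss normal coordinates as $z=\phi(x,t)$, with $\phi>0$ and $\phi\to0$.

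The key step is to compare this approach with the linearized dynamics $\partial_t\phi=-J_{X^c}\phi$. Project onto the lowest mode by setting
\be
a(t):=\int_{X^c}\gamma^{1/2}\,\psi^c\,\phi(x,t)\,.
\ee
This quantity is strictly positive because both $\psi^c$ and $\phi$ are positive. To linear order,
\be
\dot a=-\lambda_{\min}a>0\,,
\ee
so $a$ grows like $e^{|\lambda_{\min}|t}$. That is incompatible with $\phi\to0$.

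The point I need to control is that nonlinear corrections are $O(\phi^2)$. They are therefore dominated by the linear term once $\phi$ is small. To make this precise I would use the monotonicity already established, namely $v\ge0$, which here means $\partial_t\phi\le0$. This gives $\phi(\cdot,t)\le\phi(\cdot,t_0)$, hence a uniform bound for $t\ge t_0$, and so the quadratic remainder is bounded by $C\|\phi\|_\infty\,a$. Converting the $L^\infty$ control into a bound relative to $a$ is the step I expect to be the main obstacle. My first attempt will be parabolic Harnack / Schauder estimates, which make $\phi$ comparable to its projection onto $\psi^c$ at late times. Failing that, I would simply assume genericity at the paper's level of rigor.

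There is also a cleaner alternative that avoids tracking $a(t)$. A negative mode with $\psi^c>0$ means that the surfaces $z=\epsilon\psi^c$ have $K$ of sign opposite to the inward-pointing convergence. Concretely, $\delta K=\epsilon\lambda_{\min}\psi^c<0$ in the inward orientation, so these surfaces are mean-convex barriers that MCF pushes away from $X^c$. By the avoidance and maximum principle (the same argument used above for the boundary $\partial\Sigma$), a flow that starts inside the region $z>\epsilon\psi^c$ cannot cross such a barrier. Hence $X_t$ stays at distance at least $\epsilon\min\psi^c$ from $X^c$, which contradicts $X_t\to X^c$. I would present this barrier argument as the main proof and keep the linearized growth computation as a heuristic check. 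Either way, $\lambda_{\min}\ge0$, so the Morse index of $X^c$ vanishes.
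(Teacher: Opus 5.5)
Your barrier argument is correct, and it takes a genuinely different route from the paper. The paper argues globally via areas. The foliation of Theorem~\ref{thm:foliation} gives $|X^b|>|X^c|$. If a connected $X^c$ had a negative mode, one could rerun the theorem with the roles of $X^b$ and $X^c$ exchanged and get $|X^c|>|X^b|$. For disconnected $X^c$, the paper reassigns boundary components: an unstable component $X^{b\prime}$ becomes the new bulge and $X^{c\prime}=X^b\cup X^{cr}$ the new constriction, which yields $|X^{b\prime}|>|X^b|+|X^{cr}|$ and again contradicts $|X^c|<|X^b|$.

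Your argument is instead local near $X^c$. It is the standard fact that a surface approached from one side by leaves with $K\le0$ is one-sided stable, and because the ground state $\psi^c$ is positive, one-sided stability is the same as stability. The signs work. Take $N$ to be the normal pointing toward $X^c$, which is the direction of increasing $t$. Then $Y_\epsilon=\{z=\epsilon\psi^c\}$ has $K_{Y_\epsilon}=\epsilon|\lambda_{\min}|\psi^c+O(\epsilon^2)>0$, so $\{z>\epsilon\psi^c\}$ is mean-convex exactly as in the paper's $\partial\Sigma$ argument. At the first leaf touching $Y_\epsilon$, the comparison gives $K_{X_t}\ge K_{Y_\epsilon}>0$, contradicting $K\le0$.

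This makes your argument purely static. You need only the statement of the theorem (a foliation with $K\le0$ sweeping all of $\Sigma$ up to $X^c$), not the MCF dynamics or the avoidance principle. The $a(t)$ computation, with its unresolved nonlinear control, can be dropped.

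The paper's route is shorter and needs no local analysis, at the price of invoking the theorem a second time on a reassigned configuration. Yours is a self-contained maximum-principle argument. It handles a disconnected $X^c$ simply by working on one component, and it explains directly why the endpoint of such a sweep cannot be unstable.
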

\begin{proof}
Note that the theorem implies $|X^b|>|X^c|$. Assume first that $X^c$ is connected. If $X^c$ has non-zero Morse index, we can apply the theorem with $X^b$ and $X^c$ reversed, implying $|X^c|>|X^b|$, a contradiction. Now suppose $X^c$ is not connected. If its Morse index is non-zero, then at least one connected component has a non-zero Morse index. Call that component $X^{b\prime}$ and the rest of the components $X^{cr}$, $X^c=X^{b\prime}\cup X^{cr}$, and define $X^{c\prime}:=X^b\cup X^{cr}$. Now we can run the theorem with $X^{b\prime},X^{c\prime}$ in the roles of $X^b,X^c$ respectively, implying $|X^{b\prime}|>|X^{c\prime}|=|X^b|+|X^{cr}|$, again a contradiction with $|X^c|<|X^b|$.
\end{proof}

\bibliographystyle{ourbst}
\bibliography{bibliography}

\end{document}